\documentclass[a4paper,USenglish,cleveref, autoref, thm-restate]{lipics-v2021}


\bibliographystyle{plainurl}
%
%

\usepackage{amssymb}
\usepackage{amsmath}
\usepackage{comment}
\usepackage[color=gray!27]{todonotes}
\usepackage[utf8]{inputenc}
\usepackage{shuffle}
\usepackage{multirow}
\usepackage{mathabx}
\usepackage{mathrsfs}
\usepackage{wasysym}

\usepackage{apxproof}
\usepackage{xspace}

\newtheoremrep{theorem}{Theorem}[section]
\newtheoremrep{proposition}[theorem]{Proposition}
\newtheoremrep{lemma}[theorem]{Lemma}
\newtheoremrep{claim}[theorem]{Claim}
\newtheoremrep{conjecture}[theorem]{Conjecture}
\newtheoremrep{corollary}[theorem]{Corollary}
\theoremstyle{definition}
\newtheoremrep{definition}[theorem]{Definition}
\newtheoremrep{example}[theorem]{Example}
\theoremstyle{definition}
\newtheoremrep{remark}[theorem]{Remark}

\usepackage{hyperref}

\usepackage{tikz}
\usetikzlibrary{arrows,shapes,automata,positioning}
\usetikzlibrary{arrows.meta,bending,chains,matrix,positioning,scopes,calc,automata, shapes, patterns}
\tikzstyle{transition} = [font=\small]
\tikzstyle{line} = [draw,thick,-latex]
\usetikzlibrary{fadings,patterns}
\usetikzlibrary{intersections}
\usetikzlibrary{decorations.pathreplacing,calligraphy}
\pgfdeclarelayer{background}
\pgfsetlayers{background,main}
\usetikzlibrary{positioning}

\newcommand{\NL}{\textsf{NL}\xspace}
\newcommand{\NP}{\textsf{NP}\xspace}
\newcommand{\PSPACE}{\textsf{PSPACE}\xspace}

\newcommand{\PTIME}{\textsf{P}\xspace}
\newcommand{\XP}{\textsf{XP}\xspace}
\newcommand{\paraNP}{\textsf{para-NP}\xspace}
\newcommand{\W}{\textsf{W}}
\newcommand{\A}{\textsf{A}}
\newcommand{\co}{\textsf{co-}}
\newcommand{\FPT}{\textsf{FPT}\xspace}
\newcommand{\XNL}{\textsf{XNL}\xspace}
\newcommand{\WNL}{\textsf{WNL}\xspace}
\newcommand{\sync}{\textsf{sync}}
\newcommand{\WP}{\W[\textsf{P}]}
\newcommand{\WSAT}{\W[\textsf{SAT}]}
\newcommand{\fpt}{\textrm{fpt}}

\newcommand{\yes}{Yes}

\newcommand{\DFAINE}{\textsc{DFA-Intersection-Nonemptiness}\xspace}

\newcommand{\DFAINEshort}{\text{\textsc{DFA}-$\,\bigcap\neq\emptyset$}\xspace}
\newcommand{\BDFAINEshort}{\text{\textsc{BDFA}-$\,\bigcap\neq\emptyset$}\xspace}
\newcommand{\NFAINEshort}{\text{\textsc{NFA}-$\,\bigcap\neq\emptyset$}\xspace}
\newcommand{\BNFAINEshort}{\text{\textsc{BNFA}-$\,\bigcap\neq\emptyset$}\xspace}

\newcommand{\TNEJ}{\textsc{Tables-Non-Empty-Join}\xspace}
\newcommand{\TNEJshort}{\textsc{TNEJ}\xspace}

\newcommand{\comm}{\text{comm-}}
\newcommand{\unary}{\text{uny-}}
\newcommand{\bounded}{\text{bd-}}
\newcommand{\sbounded}{\text{s-bd-}}
\newcommand{\sparse}{\text{sparse-}}

\newcommand{\binBowtie}{\mathbin{\Bowtie}}

\newcommand{\dom}{\textit{dom}}
\newcommand{\range}{\textit{range}}


\newcommand{\Oh}{\mathcal{O}}

\DeclareMathOperator{\lcm}{lcm}
\usepackage{tcolorbox}
\newcommand{\problemdef}[4]{
	\begin{tcolorbox}[width = \textwidth,colback=white,arc=0pt,outer arc=0pt,boxrule=0.7pt,left =0.5em,right=0em]
		\textsc{#1} #2
		\\[2pt]
		\begin{tabular}{ @{}l p{0.84\textwidth} c }
			\textsf{Input:} & #3 \\[.5pt]
			\textsf{Problem:} & #4
		\end{tabular}
	\vspace{-0.25em}
	\end{tcolorbox}
}
%
\title{Finite Automata Intersection Non-Emptiness:  Parameterized Complexity Revisited}
%
%
%
\author{Henning Fernau}
{Informatikwissenschaften, FB IV, Universit\"at Trier, Germany}
{fernau@informatik.uni-trier.de}
{https://orcid.org/0000-0002-4444-3220}
{}
\author{Stefan Hoffmann}
{Informatikwissenschaften, FB IV, Universit\"at Trier, Germany}
{hoffmanns@informatik.uni-trier.de}
{https://orcid.org/0000-0002-7866-075X}
{}
\author{Michael Wehar}
{Swarthmore College, Swarthmore, PA, USA}
{mwehar1@swarthmore.edu}
{}
{}


\authorrunning{H. Fernau, S. Hoffmann and M. Wehar} 

\Copyright{Henning Fernau, Stefan Hoffmann and Michael Wehar} 

\ccsdesc[500]{Theory of computation~Regular languages; Problems, reductions and completeness; Incomplete, inconsistent, and uncertain databases}

\keywords{Intersection Non-emptiness Problem, commutative regular languages, sparse regular languages, Parameterized Complexity, null values in natural join operation (databases)} 








\EventEditors{John Q. Open and Joan R. Access}
\EventNoEds{2}
\EventLongTitle{42nd Conference on Very Important Topics (CVIT 2016)}
\EventShortTitle{CVIT 2016}
\EventAcronym{CVIT}
\EventYear{2016}
\EventDate{December 24--27, 2016}
\EventLocation{Little Whinging, United Kingdom}
\EventLogo{}
\SeriesVolume{42}
\ArticleNo{23}

\begin{document}

\maketitle

%
\begin{abstract}
 The problem {\sc DFA-Intersection-Nonemptiness} asks if a given number
 of deterministic automata accept a common word. In general, this problem is \PSPACE-complete.
 Here, we investigate this problem for the subclasses
 of commutative automata and automata recognizing sparse languages.
 We show that in both cases
 {\sc DFA-Intersection-Nonemptiness} is complete for \NP\  
 and for the parameterized class $W[1]$, where the number of input
 automata is the parameter, when the alphabet is fixed.
 Additionally, we establish the same result for
 {\sc Tables Non-Empty Join}, a problem that asks if the join of several
 tables (possibly containing null values) in a database is non-empty.
 Lastly, we show that {\sc Bounded NFA-Intersection-Nonemptiness}, parameterized by the length bound,
 is $\mbox{co-}W[2]$-hard with a variable input alphabet
 and for nondeterministic automata recognizing finite strictly bounded languages,
yielding a variant leaving the realm of~$W[1]$.
\end{abstract}

\section{Introduction}

\DFAINE, or \DFAINEshort for short, is one of the most classical decision problems related to automata theory. Here, the abbreviation DFA refers to a deterministic finite automaton, a construct specified by  a tuple $\mathcal{A}=(Q,\Sigma,\delta,q_0,F)$, where $\Sigma$ is the input alphabet, $Q$ is the state alphabet, $\delta:Q\times \Sigma\to Q$ is the (total) transition function, $q_0\in Q$ is the initial state and $F\subseteq Q$ is the set of accepting states. As usual, we lift the transition function to digest words: $\delta:Q\times \Sigma^*\to Q$. Then, $L(\mathcal{A})=\{w\in\Sigma^*\mid \delta(q_0,w)\in F\}$.
More formally, this leads to the following question:
\problemdef{DFA-Intersection-Nonemptiness}{(or \DFAINEshort for short)}{A set $\mathbb{A}$ of  deterministic finite automata  with the same input alphabet $\Sigma$}{Is there a $w\in\Sigma^*$ accepted by all automata in $\mathbb{A}$?}
Following~\cite{War2001}, we define the length-bounded variation  as follows.
\problemdef{Bounded DFA-Intersection-Nonemptiness}{(or \BDFAINEshort for short)}{A set $\mathbb{A}$ of  deterministic finite automata  with the same input alphabet $\Sigma$,  $\ell\in\mathbb{N}$}{Is there a $w\in\Sigma^*$ of length  $\ell$ accepted by all automata in $\mathbb{A}$?}
Occasionally, we will also consider nondeterministic finite automata (NFA). This leads us to the problem \textsc{NFA-Intersection-Nonemptiness}, or \NFAINEshort for short, or also its bounded counterpart \BNFAINEshort.
Both \DFAINEshort and \NFAINEshort are $\PSPACE$-complete. This is also true for the bounded cases if the length~$\ell$ is given in binary. However, if the length bound is given in unary, then the classical complexity drops to $\NP$-completeness. This is also true if $|\Sigma|=1$, i.e., if we consider unary input alphabets. Then, we also call the DFA unary. It does not matter if we consider $\ell$ as an exact or an upper bound on the length of the commonly accepted string. 
These classical results can be destilled from~\cite{Koz77}.
In this paper, we will take the parameterized view on this problem. There are many natural parameters for automata problems.
\begin{itemize}
    \item $\kappa_{\mathbb{A}}$: number $|\mathbb{A}|$ of automata,
    \item $\kappa_{Q}$: upper bound on the number of states of each of the automata in~$\mathbb{A}$,
    \item $\kappa_{\Sigma}$: size of the common input alphabet~$\Sigma$,
    \item $\kappa_{\ell}$: upper bound on the length of the shortest word accepted by all automata.
\end{itemize}
One could also try to look at further parameters, like `degree of nondeterminism' or `number of final states', but we will not dive into these things in this paper.

Of course, one could also combine these parameters, leading to parameterizations like $\kappa_{\Sigma,\ell}=\kappa_{\Sigma}+\kappa_{\ell}$.
Todd Wareham~\cite{War2001} started out a study of \BDFAINEshort from a parameterized perspective. The table below shows the current picture, with references added also to other sources that later sharpened Wareham's analysis. The star ${}^*$ assumes $\PTIME\neq\NP$.
$$\begin{array}{|llllllllll|}\hline
   \kappa_{\mathbb{A}}  & \kappa_{Q} & \kappa_{\Sigma} &\kappa_{\ell} & \kappa_{\mathbb{A},Q} & \kappa_{\mathbb{A},\Sigma}& \kappa_{\mathbb{A},\ell} & \kappa_{Q,\Sigma}& \kappa_{Q,\ell}& \kappa_{\Sigma,\ell}\\
    \WNL\text{-c.} & {}\notin^*\XP              & {}\notin^*\XP    & \W[\sync]\text{-c.} & \FPT & \WNL\text{-c.} & \W[1]\text{-h.} & \FPT&\W[2]\text{-h.}& \FPT\\
    \cite{Gui2011a} &   (\cite{War2001})              & \cite{War2001}  & \cite{BruFer2020} &\cite{War2001} & \cite{Gui2011a}&\cite{War2001} &\cite{War2001} &\cite{War2001}&\cite{War2001}\\\hline
\end{array}$$
For the parameter $\kappa_{Q}$, the reference is put in parentheses, as it follows from the proof of Lemma~6 in~\cite{War2001}, where it is shown how to produce an equivalent \BDFAINEshort-instance to a given instance of \textsc{Dominating Set}, such that each DFA has two states.
This rules out $\XP$-algorithms, assuming $\PTIME\neq\NP$. In the unbounded variant, the  length parameter does not make sense. We collect the remaining results next.
\begin{propositionrep}
\begin{itemize}
    \item $(\DFAINEshort,\kappa_Q),(\DFAINEshort,\kappa_\Sigma)\notin \XP$, unless $\PTIME=\NP$;
    \item $[\DFAINEshort,\kappa_{\mathbb{A}}]^{\fpt}=[\DFAINEshort,\kappa_{\mathbb{A},\Sigma}]^{\fpt}=\XP$;
    \item $(\DFAINEshort,\kappa_{\mathbb{A},Q}),(\DFAINEshort,\kappa_{Q,\Sigma})\in\FPT$.
\end{itemize}
\end{propositionrep}
\begin{proof}
The first item is a trivial consequence of what was said about the bounded case. The last item requires a quick analysis of the proof of Theorem~8 in \cite{War2001}. The second item is possibly surprising. However, $(\DFAINEshort,\kappa_{\mathbb{A}})$ is known to be complete for the class \XNL with respect to parameterized logspace reductions. The closure of \XNL under fpt-reductions equals \XP, just because the polynomial-time reduction closure of \NL equals \PTIME; see \cite{ElbStoTan2015}.
\end{proof}

The previous result may be viewed quite natural, as similar results exist for the question if a given multihead two-way NFA accepts a given word, where the parameter is the number of heads. We know of \XNL-completeness with respect to parameterized logspace reductions for the unbounded version, while the bounded version is \WNL-complete; see \cite{ElbStoTan2015}.

In this paper, we are continuing examining this type of problem in several directions.
\begin{itemize}
    \item Some of the listed results are hardness results only. Can we find a proper home for the corresponding parameterized problem?
    \item As done with, say, graph problems that are hard on general instances, the list of (hardness) results also motivates to look into restricted classes of finite automata, or of regular languages, respectively. We will follow this idea by considering in particular commutative regular languages and sparse regular languages.
    \item Conversely, we also study the influence of determinism in finite automata regarding the intersection problem by looking at \NFAINEshort.
\end{itemize}


Combinatorial automata theory problems have been somewhat neglected from the perspective of parameterized complexity. This paper aims at proving that this can be considered as an oversight, as such problems give ample rooms for parameterized investigations.

\paragraph*{Some Further Notations and Conventions} If $n$ is some nonnegative integer, i.e., $n\in\mathbb{N}$, then $[n]=\{m\in\mathbb{N}\mid m<n\}$.
If $f:X\to Y$ is a (partial) function, then $\dom(f)=\{x\in X\mid \exists y\in Y:f(x)=y\}$ is the \emph{domain} of $f$ and $\range(f)=\{y\in Y\mid \exists x\in X: f(x)=y\}$ is the \emph{range} of~$f$.
A finite set $A$ is also called an \emph{alphabet}; then, $A^*$ is the free monoid generated by $A$, and its elements are \emph{word}s over $A$. $|w|$ is the length of the word $w$, while $|w|_a$ counts the number of occurrences of symbol $a$ in $w$.

\section{Reviewing Parameterized Complexity}

As not all parameterized complexity classes that we deal with are that well-known, we review some of them in the following. However, we assume some basic knowledge of the area, as can be found in monographs like \cite{DowFel2013,FluGro2006}.
In particular, the ``algorithmic classes'' \FPT\ and \XP\ should be familiar. Also recall that this one of the few occasions in complexity theory where an unconditional separation $\FPT\subsetneq\XP$ is known. Also, we assume knowledge about parameterized reductions, or fpt-reductions for short.

As in the classical world, a parameterized complexity class can be characterized by one (complete) problem, assuming the class is closed under parameterized reductions. Examples comprise the following classes; for the typical problems, the parameter will be always~$k$:
\begin{description}
\item[{$\W[1]$}] Given a nondeterministic single-tape  Turing machine and $k\in\mathbb{N}$, does it accept the empty word within at most $k$ steps?
\item[{$\W[2]$}] Given a nondeterministic  multi-tape  Turing machine and $k\in\mathbb{N}$, does it accept the empty word within at most $k$ steps?
\item[{$\A[2]$}] Given an alternating single-tape Turing machine whose initial state is existential and that is allowed to switch only once into the set of universal states  and $k\in\mathbb{N}$, does it accept the empty word within at most $k$ steps?
\item[{$\WNL$}] Given a nondeterministic single-tape  Turing machine an integer $\ell\geq 0$ in unary and $k\in\mathbb{N}$, does it accept the empty word within at most $\ell$ steps, visiting at most $k$ tape cells?
\item[{$\WP$}] Given a nondeterministic single-tape  Turing machine and some integer $\ell\geq 0$ in unary and $k\in\mathbb{N}$, does it accept the empty word within at most $\ell$ steps, thereby making at most $k\leq\ell$ nondeterministic steps?
\end{description}

 The \emph{Turing way} to these complexity classes (i.e., using variants of Turing machines and defining parameterized problems on these machines) is described also in  \cite{Ces2003,Gui2011a}.
However, choosing a `typical' problem $\Pi$ and defining a (parameterized) complexity class as the set of all (parameterized) problems that are   
fpt-reducible to $\Pi$ (also written as $[\Pi]^{\fpt}$ for short) clearly also works for problems not related to Turing machines. For instance, recall the problem \textsc{Clique} that asked, given a graph~$G$ and an integer~$k$, if there exists a clique (i.e., a set of pairwise adjacent vertices) of size~$k$ in~$G$. With parameterization $\kappa_{\text{sol}}(G,k)=k$, $\W[1]=[\textsc{Clique},\kappa_{\text{sol}}]^{\fpt}$. 
Another famous (and useful) problem that characterizes \W[1] is \textsc{Multicolored Clique}: Given a $k$-partite graph $G=(V,E)$, i.e., $V=\bigcup_{i=1}^kV_i$, with $V_i\cap V_j\neq\emptyset\implies i=j$, such that each $V_i$ forms an independent set, we ask for a possibility to select one vertex $v_i$ from each $V_i$ such that $\{v_1,v_2,\dots,v_k\}$ forms a clique.

For any parameterization~$\kappa$ of $\DFAINEshort$,  $[\DFAINEshort,\kappa]^{\fpt}\subseteq [\NFAINEshort,\kappa]^{\fpt}$.
In~\cite{BruFer2020}, the class $\W[\sync]$ was introduced, referring to the problem of determining, given a DFA $A$, if $A$ admits a synchronizing word of length at most~$k$, with $k$ being the parameter.
There is no need to give further details, as equivalently this class is determined by \BDFAINEshort, parameterized by~$\kappa_\ell$, i.e., $\W[\sync]=[\BDFAINEshort,\kappa_\ell]^{\fpt}$. 
Also, $\WNL=[\BDFAINEshort,\kappa_{\mathbb{A}}]^{\fpt}$. 
The mentioned (and more) parameterized complexity classes are summarized in \autoref{fig:complexity_classes_hierarchy}.

\begin{figure}[tbh]
\centering
		\begin{tikzpicture}[transform shape,scale=0.61,shorten >=1pt,node distance=1.9cm,on grid,auto, state/.style={draw, minimum size=0.75cm, fill=black!5, thick}]
		
		\node[state, white, text=black] (FPT) {\FPT};
		
		\node[state, right=of FPT, white] (Level1Empty) {};
		\node[state, below=of Level1Empty, white, text=black] (W[1]) {$\W[1]$};
		\node[state, above=of Level1Empty, white, text=black] (A[1]) {$\A[1]$};
		
		\node[state, right=of Level1Empty, white, text=black] (Level2Empty) {};
		\node[state, below=of Level2Empty, white, text=black] (W[2]) {$\W[2]$};
		
		\node[state, right=of Level2Empty, white, text=black] (W[Sync]) {$\W[\sync]$};
		
		\node[state, right=of {W[Sync]}, white, text=black] (Level4Empty) {};
		\node[state, below=of Level4Empty, white, text=black] (W[3]) {$\W[3]$};
		\node[state, above=of Level4Empty, white, text=black] (A[2]) {$\A[2]$};
		
		\node[state, right=of Level4Empty, white, text=black] (Level5Empty) {};
		\node[state, below=of Level5Empty, white, text=black] (Level5WDots) {\dots};
		\node[state, above=of Level5Empty, white, text=black] (A[3]) {$\A[3]$};
		
		\node[state, right=of Level5Empty, white, text=black] (WNL) {$\WNL$};
		\node[state, above=of WNL, white, text=black] (Level6ADots) {\dots};
		
		\node[state, right=of WNL, white, text=black] (Level7Empty) {};
		\node[state, below=of Level7Empty, white, text=black] (W[SAT]) {$\WSAT$};
		
		\node[state, right=of Level7Empty, white, text=black] (Level8Empty) {};
		\node[state, below=of Level8Empty, white, text=black] (W[P]) {$\WP$};
		\node[state, above=of Level8Empty, white, text=black] (AW[SAT]) {$\A\WSAT$};
		
		\node[state, right=of Level8Empty, white, text=black] (XPcapparaNP) {$\XP\cap \paraNP$};
		\node[state, above=of XPcapparaNP, white, text=black] (AW[P]) {$\A\WP$};
		
		\node[state, right=of XPcapparaNP, white, text=black] (XPFillnode) {};
		\node[state, right=of XPFillnode, white, text=black] (XP) {$\XP$};
		\node[state, below=of XP, white, text=black] (para-NP) {$\paraNP$};

		\path[->, >={To[inset=0pt,length=1.6pt]}]
		(FPT)
		    edge (W[1])
		    edge (A[1])
		(W[1])
		    edge[bend right=20] (A[1])
		    edge (W[2])
		(A[1])
		    edge[bend right=20] (W[1])
		    edge (A[2])
		(W[2])
		    edge (W[3])
		    edge (W[Sync])
		(W[Sync])
		    edge (A[2])
		    edge (WNL)
		    edge[bend left=6] (W[P])
		(W[3])
		    edge (Level5WDots);
		\path
		(W[3])
		    edge [draw=white,line width=3pt] (WNL);
		\path[->, >={To[inset=0pt,length=1.6pt]}]
		(W[3])
		    edge (WNL)
		    edge[draw=white,line width=3pt] (A[3])
		    edge (A[3])
		(A[2])
		    edge (A[3])
		(Level5WDots)
		    edge (W[SAT])
		(WNL)
		    edge (XPcapparaNP)
		(A[3])
		    edge (Level6ADots)
		(W[SAT])
		    edge (W[P])
		    edge[bend left=28,draw=white,line width=3pt] (AW[SAT])
		    edge[bend left=28] (AW[SAT])
		(Level6ADots)
		    edge (AW[SAT])
		(W[P])
		    edge (XPcapparaNP)
		    edge[bend left=28,draw=white,line width=3pt] (AW[P])
		    edge[bend left=28] (AW[P])
		(AW[SAT])
		    edge (AW[P])
		(XPcapparaNP)
		    edge (XP)
		    edge (para-NP)
		(AW[P])
		    edge (XP)
		;
		\end{tikzpicture}
	\caption[Complexity Classes Hierarchy]{Visualization of the complexity classes (`A $\rightarrow$ B' means `A is contained in B')}
	\label{fig:complexity_classes_hierarchy}
\end{figure}
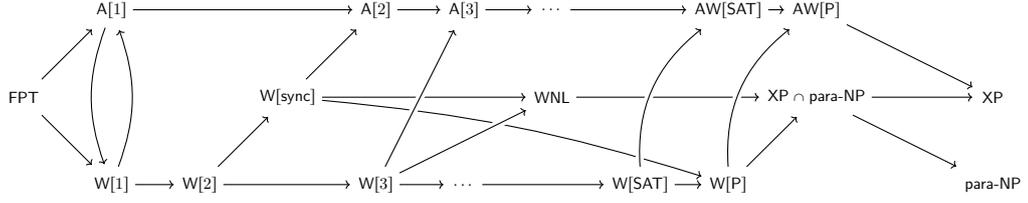

To better explain the Turing way to parameterized complexity classes, let us prove the following result that fixes the position of  $(\BDFAINEshort,\kappa_{\mathbb{A},\ell})$ within the \W-hierarchy.

\begin{proposition}$(\BDFAINEshort,\kappa_{\mathbb{A},\ell})$ is \W[1]-complete.
\end{proposition}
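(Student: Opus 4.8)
The plan is to establish the two inclusions separately: \W[1]-hardness and membership in \W[1]. Hardness already follows from Wareham's reduction recorded in the table above \cite{War2001}; alternatively, one obtains it directly from \textsc{Multicolored Clique}. Given a $k$-partite graph $G=(V,E)$ with $V=\bigcup_{i=1}^{k}V_i$, take $\Sigma=V$ and length bound $\ell=k$, and for every pair $1\le i<j\le k$ build a DFA $A_{ij}$ that reads a word $v_1\cdots v_k$ and accepts iff $v_i\in V_i$, $v_j\in V_j$, and $\{v_i,v_j\}\in E$. Each $A_{ij}$ needs only $\Oh(k\cdot|V|)$ states (scan to position $i$, store $v_i$, scan to position $j$, test adjacency), so the reduction runs in polynomial time, and a commonly accepted word of length $k$ is exactly a choice of one vertex per part forming a clique. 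Since the new parameter is $\binom{k}{2}+k=\Oh(k^2)$, this is an fpt-reduction, giving \W[1]-hardness.

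For membership, the plan is to use the Turing-machine characterization of \W[1] and build, from an instance $(\mathbb{A},\ell)$ with $m=|\mathbb{A}|$, a nondeterministic single-tape Turing machine $M$ that accepts the empty word within a number of steps bounded by a function of $k=\kappa_{\mathbb{A},\ell}=m+\ell$. The machine first nondeterministically guesses a word $w\in\Sigma^\ell$, writing it onto its tape in $\ell$ steps. It then simulates the automata of $\mathbb{A}$ one after another: for the $t$-th automaton it rewinds the head to the first cell, scans $w$ from left to right while keeping the current automaton state in its finite control, and records whether the run ended in an accepting state; $M$ accepts iff all $m$ runs are accepting. The transition tables of all automata are hard-wired into the program of $M$, so $M$ has $\Oh(m\cdot\max_i|Q_i|)$ control states and is constructible in fpt (indeed polynomial) time, while one scan plus one rewind costs $2\ell$ steps, for a total of $\ell+2m\ell=\Oh(k^2)$ steps. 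Mapping $(\mathbb{A},\ell)$ to $M$ with the step bound $g(k)=\Oh(k^2)$, a computable function of $k$, is therefore an fpt-reduction to short single-tape nondeterministic acceptance, which places the problem in \W[1].

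The step I expect to be the crux is keeping both the size of $M$ and its running time under control simultaneously. The tempting direct simulation, which tracks the full state vector in $Q_1\times\cdots\times Q_m$ inside the finite control, would let $M$ read $w$ in a single left-to-right pass, but it needs up to $\prod_i|Q_i|$ control states; this is only an \XP\ bound, not fpt, so it does not yield a valid reduction. The point of the construction above is to trade passes for states: by processing the automata sequentially we store only one automaton state and an index at a time, which costs a factor $m$ in the number of steps (still polynomial in $k$) but keeps the control size polynomial in the input. One should also check the routine bookkeeping—that rewinding does not overwrite $w$, and that the guessed length is exactly $\ell$—but these raise no difficulty.
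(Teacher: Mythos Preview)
Your argument is correct, and both the hardness and the membership parts are sound. The membership construction differs from the paper's. The paper has the Turing machine guess, in one sweep, not only the word $w\in\Sigma^\ell$ but also, after each letter, the tuple of states that the $|\mathbb{A}|$ automata should be in; it then verifies these guessed state columns locally by shuttling between adjacent columns. Your construction instead guesses only $w$ and then deterministically replays it once per automaton, storing the single current state (and the automaton index) in the finite control. Both are valid fpt-reductions to short single-tape NTM acceptance: yours is conceptually simpler and even gives a smaller step bound ($O(k^2)$ versus the paper's $O(k^3)$), while the paper's version makes the extension to nondeterministic input automata completely transparent, since the state sequences are guessed rather than computed---the same point you could recover by letting the per-automaton replay make nondeterministic choices. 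Your closing remark about why the product-automaton-in-the-control idea fails (it blows up the machine to $\XP$ size) is exactly the right diagnosis and worth keeping.
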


\begin{proof}
\W[1]-hardness was shown in~\cite{War2001}. For membership in \W[1], we explain how a single-tape NTM $M$ can solve an instance of \BDFAINEshort, specified by a finite set $\mathbb{A}$ of DFA's and a word length parameter~$\ell$. Without loss of generality, assume that all DFAs have the same state alphabet~$Q$, and that $\Sigma\cap Q=\emptyset$.  This also defines the tape alphabet $\Sigma\cup Q$ of~$M$.
The transition functions of the DFAs are hard-wired into the transition function of~$M$.
\begin{enumerate}
    \item $M$ guesses $\ell$ times (a) a letter from~$\Sigma$, (b) $|\mathbb{A}|$ many letters from~$Q$ and writes them onto its tape, which then contains a word $w$ from $(\Sigma Q^{|\mathbb{A}|})^\ell$.
    \item $M$ moves back to the beginning of the word. On this way back, $M$ checks if the last $|\mathbb{A}|$ many state letters are final states, i.e., $w[|w|]$ should be a final state of the last DFA, $w[|w|-1]$ should be a final state of the penultimate DFA, \dots, $w[|w|-|\mathbb{A}|+1]$ should be a final state of the first DFA.
    \item $M$ reads and memorizes $a_1\in\Sigma$; then, it reads (and memorizes) the first state letter~$q$ and checks if there is a transition of the initial state of the first DFA into~$q$, using the input letter~ $a_1$; then it reads the next state letter and performs a similar check for the next DFA, etc., until the last DFA is checked. If any of these checks fails, the NTM aborts, i.e., it enters an infinite loop.
    \item $M$ reads and memorizes $a_2\in\Sigma$; then, it moves $|\mathbb{A}|$ steps to the left and reads (and memorizes) the first state letter~$q$, moves $|\mathbb{A}|+1$ steps to the right and checks if the state letter $q'$ at that position corresponds to a transition $q\stackrel{a_2}\longrightarrow q'$ of the first DFA; if this check fails, the NTM aborts; otherwise, it moves $|\mathbb{A}|$ steps to the left and checks the transition of the next DFA etc.
    \item Altogether, assuming no abort happens, the NTM~$M$ reads a letter~$a_3$ from~$\Sigma$ when moving leftwards after $|\mathbb{A}|(2|\mathbb{A}|+1)$ many steps. This starts the checking of transitions using $a_3$. The whole scenario is repeated $\ell-1$ times in total.
\end{enumerate}
Putting things together, the constructed Turing machine $M$ halts in $$f(|\mathbb{A}|,\ell)=2\ell\cdot (|\mathbb{A}|+1)+(|\mathbb{A}|+1)+(\ell-1)\cdot |\mathbb{A}|\cdot (2|\mathbb{A}|+1)$$ many steps if and only if $(\mathbb{A},\ell)$ is a \yes-instance of \BDFAINEshort.
\end{proof}

The previous proof does not use any particular deterministic features of the finite automata, so that we can conclude the following result.

\begin{corollary}$(\BNFAINEshort,\kappa_{\mathbb{A},\ell})$ is \W[1]-complete.
\end{corollary}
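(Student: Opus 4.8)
The plan is to obtain the corollary by revisiting both halves of the proof of the preceding proposition and verifying that neither half exploits the determinism of the input automata. For \W[1]-hardness I would not reprove anything, but instead appeal to the trivial containment already recorded in the excerpt, namely $[\DFAINEshort,\kappa]^{\fpt}\subseteq[\NFAINEshort,\kappa]^{\fpt}$ for every parameterization $\kappa$. Since every DFA is in particular an NFA, the identity map sends a $\BDFAINEshort$-instance $(\mathbb{A},\ell)$ to an equivalent $\BNFAINEshort$-instance, preserving the values of both $\kappa_{\mathbb{A}}$ and $\kappa_\ell$; this is a parameterized reduction, so the \W[1]-hardness of $(\BDFAINEshort,\kappa_{\mathbb{A},\ell})$ transfers verbatim to $(\BNFAINEshort,\kappa_{\mathbb{A},\ell})$.

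For membership in \W[1] I would reuse the single-tape NTM $M$ constructed in the proof of the proposition, with only the following cosmetic adjustments. The machine touches the automata in exactly two places: in step~3, where it checks for a transition of the initial state into the first guessed state letter under $a_1$, and in step~4, where it checks whether a guessed pair $q,q'$ realizes a transition $q\stackrel{a_2}{\longrightarrow}q'$ of a given automaton. For NFAs I simply replace each evaluation of a transition \emph{function} $\delta(q,a)=q'$ by a membership test $(q,a,q')\in\delta$ in the transition \emph{relation}, and, if an NFA is allowed several initial states, I check for the existence of an initial state admitting the required transition. Each such test is still local and is hard-wired into $M$'s transition function exactly as before, so the guessing phase (step~1) and the final-state verification (step~2) remain unchanged.

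The point that makes this work is that the running time $f(|\mathbb{A}|,\ell)=2\ell\cdot(|\mathbb{A}|+1)+(|\mathbb{A}|+1)+(\ell-1)\cdot|\mathbb{A}|\cdot(2|\mathbb{A}|+1)$ counts only tape moves and constant-cost hard-wired checks; replacing functional lookups by relational membership tests does not increase the number of steps, and the nondeterminism of the automata is absorbed into the nondeterminism $M$ already uses when guessing state letters. Hence the step bound still depends only on the parameters $|\mathbb{A}|$ and $\ell$, which places $(\BNFAINEshort,\kappa_{\mathbb{A},\ell})$ in \W[1]. I expect no genuine obstacle here: this is a true corollary, and the only step deserving a sentence of care is the observation that switching from a transition function to a transition relation leaves every per-cell check constant-cost, so that the same $f$ governs the computation.
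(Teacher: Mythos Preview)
Your proposal is correct and matches the paper's own justification, which consists of the single remark that the preceding proof does not use any particular deterministic features of the finite automata. You have simply spelled out in detail what that remark means—hardness via the identity reduction from DFAs to NFAs, and membership by observing that the hard-wired transition checks become relation-membership tests absorbed by the NTM's existing nondeterminism—so there is nothing to add.
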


Whether or not $(\BDFAINEshort,\kappa_{Q,\ell})$, is \W[2]-complete, remains an \textbf{open question}.
There could be further interesting complexity classes between \W[2] and
\W[\sync].

\section{Table Joins With Null Values}


In the following section, we will prove for quite a number of problems that they are complete in the parameterized complexity class \W[1]. But first, here we explain a problem originating in the theory of databases, where more details about this background are given in the appendix. This new problem shows interesting links to our further problems of study.


\begin{definition}
A \emph{binary table} (with missing information or null values) is a pair $T = (F, M)$ such that $M$ is an $m\times n$ two-dimensional array of $0$'s, $1$'s, and $*$'s and $F$ is a one-to-one function such that $\dom(F) = [n]$.  The rows of $T$ are exactly the one-dimensional rows of
~$M$.  A \emph{labeled row} of~$T$ is a pair $(F, R)$, where $R$ is a row of~$T$ and $F$ is 
from~$T$.  
\end{definition}

It is necessary to package $R$ with~$F$ so that we know how entries within a row are associated with column labels within $T$. 
Intuitively, $F$ represents the column labels and $M$ represents the stored data, where $*$'s indicate missing information, or null values.

\begin{table}[bth]
\begin{minipage}{.5\textwidth}
    \begin{center}
        \begin{tabular}{ c c c c }
         $T_1$: &A & B & C \\
         \hline
         &0 & 0 & $*$ \\
         &1 & $*$ & 1
        \end{tabular}
    \end{center}
\end{minipage}
\begin{minipage}{.5\textwidth}\begin{center}
     \begin{tabular}{ c c c }
         $T_2$: &B & C \\
         \hline
         &0 & 0 \\
         &0 & 1
        \end{tabular}\end{center}
\end{minipage}
    \caption{A tabular representation of $T_1$ and of $T_2$}
    \label{tab:binarytables}
\end{table}

\begin{example}\label{ex:twobinarytables}
Consider the following two binary tables $T_1 := (F_1, M_1)$ where $F_1 = \{ (0, \text{A}), (1, \text{B}), (2, \text{C}) \}$ and $T_2 := (F_2, M_2)$ where $F_2 = \{ (0, \text{B}), (1, \text{C}) \}$, as detailed  in \autoref{tab:binarytables}.
\end{example}

Notice that tables $T_1$ and $T_2$ have some column labels in common.  This allows us to consider whether rows from the respective tables are compatible on common columns.  This leads us to the following definition of compatibility.


\begin{definition}
Let binary tables $T_1 = (F_1, M_1)$ and $T_2 = (F_2, M_2)$ be given.  Let a labeled row $x_1 = (F_1, r_1)$ from $T_1$ and a labeled row $x_2 = (F_2, r_2)$ from $T_2$ be given.  We say that $x_1$ and $x_2$ are \emph{compatible} if for all $l \in \range(F_1) \cap \range(F_2)$, $r_1[F_{1}^{-1}(l)] = r_2[F_{2}^{-1}(l)]$ or $r_1[F_{1}^{-1}(l)] = *$ or $r_2[F_{2}^{-1}(l)] = *$.
\end{definition}

As in most circumstances, the labeling function of a labeled row is clear from the context, we can omit its mentioning then. In \autoref{ex:twobinarytables}, the first row of $T_1$ is compatible with both rows of $T_2$, but the second row of $T_1$ is compatible only with the second row of $T_2$.

Now, we are ready to our problem related to database theory.

\problemdef{Tables-Non-Empty-Join}{(or \TNEJshort for short)}{A list of $k$ binary tables $\mathcal{L} = \{T_i\}_{i \in [k]}$}{Does there exist a family of labeled rows $\{x_i\}_{i \in [k]}$ satisfying the following:
\begin{itemize}
    \item for all $i \in [k]$, $x_i$ is a labeled row of $T_i$;
    \item for all $i$, $j \in [k]$, $x_i$ is compatible with $x_j$.
\end{itemize}
\mbox{ }\\[-3ex]}

We consider $\kappa_{\text{\#Tab}}=k$ (the number of tables) to be our main parameter of interest.

The preceding problem is motivated by the following intuition.  If the $k$ rows are pairwise compatible, this would mean that the rows could be merged together to form a row in the join of the $k$ tables making the join non-empty. This is explained in more details in a particular section of the appendix.



\begin{lemma}\label{lem:non-empty join to clique}
There is a fpt-reduction from $(\TNEJshort,\kappa_{\text{\#Tab}})$ to $(\textsc{Clique},\kappa_{\text{sol}})$.
\end{lemma}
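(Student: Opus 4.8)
The plan is to build a graph $G$ whose size-$k$ cliques correspond exactly to valid families of pairwise compatible labeled rows, so that the reduction simply outputs $(G,k)$ and preserves the parameter. First I would let the vertex set be $V=\bigcup_{i\in[k]}V_i$, where $V_i=\{(i,r)\mid r\text{ is a row of }T_i\}$; thus each vertex is a labeled row tagged with the index of the table it comes from, and $|V|\le\sum_i(\text{number of rows of }T_i)$ is polynomial in the input size. For the edges, I would join $(i,r)$ and $(j,r')$ precisely when $i\neq j$ and the labeled rows $(F_i,r)$ and $(F_j,r')$ are compatible in the sense of the compatibility definition. Crucially, no edge lies inside any single $V_i$, so each part $V_i$ is an independent set.

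Correctness then splits into two directions. For the forward direction: if $\{x_i\}_{i\in[k]}$ with $x_i=(F_i,r_i)$ is a family of pairwise compatible labeled rows, one from each $T_i$, then the vertices $(i,r_i)$ are pairwise adjacent (compatibility supplies each edge, and distinct tables give $i\neq j$), so they form a clique of size $k$. Conversely, suppose $G$ has a clique $C$ with $|C|=k$. Since each $V_i$ is independent, $C$ meets each of the $k$ parts in at most one vertex; as $|C|=k$, it meets each $V_i$ in exactly one vertex $(i,r_i)$. The family $\{(F_i,r_i)\}_{i\in[k]}$ then has one labeled row per table, and adjacency within $C$ yields pairwise compatibility, so $\mathcal{L}$ is a \yes-instance of \TNEJshort.

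Finally I would observe that the map $(\mathcal{L},k)\mapsto(G,k)$ runs in polynomial time: the only nontrivial work is testing compatibility of each pair of rows from different tables, which is polynomial per pair, and there are polynomially many pairs. Since the parameter is left unchanged at $k$, this is an fpt-reduction (indeed a parameter-preserving polynomial-time reduction). Note that, although the paper records $\W[1]=[\textsc{Clique},\kappa_{\text{sol}}]^{\fpt}$ via the version that merely asks for a clique of size $k$, the $k$-partite structure here is exactly what forces such a clique to be ``multicolored'', so no separate appeal to \textsc{Multicolored Clique} is needed.

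I do not expect a serious obstacle. The one point requiring care is the backward direction, where the conclusion that a $k$-clique meets every part in exactly one vertex hinges entirely on having made each $V_i$ independent by forbidding intra-table edges. It is also worth noting that the compatibility relation is symmetric in $T_i$ and $T_j$, which is what makes $G$ a well-defined undirected graph, and that a table with no rows yields an empty part $V_i$, correctly forcing a \no-instance since then no size-$k$ clique can exist.
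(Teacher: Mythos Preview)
Your proof is correct and follows essentially the same construction as the paper: build a $k$-partite graph with one part per table and one vertex per row, connect vertices from different parts exactly when the corresponding labeled rows are compatible, and argue that $k$-cliques correspond to pairwise-compatible row selections. Your write-up is in fact a bit more careful than the paper's on the backward direction (explicitly using the independence of each $V_i$ to force one vertex per part) and on edge cases such as empty tables.
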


\begin{proof}
Let $k$ binary tables with at most $n$ rows each be given.  We construct a graph $G$ with at most $k \cdot n$ vertices such that the binary tables have a non-empty join if and only if $G$ contains a complete graph of size $k$.

We define $G=(V,E)$ to be a $k$-partite graph where each group of vertices is associated with a binary table so that there is one vertex for each table row. In particular, $V=\bigcup_{i\in [k]}V_i$ such that each $V_i$ is an independent set of~$G$.
Two vertices from distinct groups $V_i,V_j$ are adjacent in $G$ if and only if their corresponding rows are consistent with each other.  In other words, if vertex $x\in V_i$ is associated with row $R_x$ from table $(F_i, A_i)$ and vertex $y\in V_j$ is associated with row $R_y$ from table $(F_j, A_j)$, then $x$ and $y$ are adjacent if and only if for all \mbox{$l \in \range(F_i) \, \cap \, \range(F_j)$,} $R_x[F_i^{-1}(l)] = R_y[F_j^{-1}(l)]$ or $R_x[F_i^{-1}(l)] = *$ or $R_y[F_j^{-1}(l)] = *$.

Now, if the binary tables have a non-empty join, then there exists a choice of labeled rows (one from each table) that are consistent with each other.  This choice of rows corresponds with a choice of $k$ vertices that forms a $k$-clique.  Similarly, if there exists a $k$-clique, then there is a choice of labeled rows that are pairwise consistent with each other.  
\end{proof}


\begin{lemmarep}\label{lem:Multicolored-clique to non-empty join}
There is a polynomial-time fpt-reduction from $(\textsc{Multicolored Clique},\kappa_{\text{sol}})$ to $(\TNEJshort,\kappa_{\text{\#Tab}})$.
\end{lemmarep}

\begin{proof}
Let $G=(V_1,V_2,\dots,V_k,E)$ be an instance of \textsc{Multicolored Clique}, where the goal is to select a vertex $v_i$ from each $V_i$ such that $G[\{v_1,v_2,\dots,v_k\}]$ forms a clique. Let $G$ contain $n$ vertices in total.

We are going to construct $k$ binary tables $T_i$, one for each set $V_i$. Table $T_i$ contains $|V_i|$ rows and $n$ columns, labeled with the vertices $V=\bigcup_{i=1}^kV_i$ of $G$. This means that we can use the uniform labeling function $F=F_i$ for all tables, and we can even identify $V$ with $[n]$, so that $F$ becomes the identity. Now $T_i=(F_i,M_i)$ is defined as follows:
For row $r_{i,j}$ (corresponding to vertex $v_{i,j}\in V_i$), set $r_{i,j}[v]=0$, if there is no edge between $v_{i,j}$ and $v$; set $r_{i,j}[v]=1$, if $v_{i,j}=v$;
and set $r_{i,j}[v]=*$, if there is an edge between $v_{i,j}$ and $v$.
Now, row $r_{i,j}$ is compatible with row $r_{i',j'}$ if and only if there is an edge between $v_{i,j}$ and $v_{i',j'}$, yielding the claim.
\end{proof}

\begin{corollary} $\TNEJshort$ is \NP-complete, while
$(\TNEJshort,\kappa_{\text{\#Tab}})$ is \W[1]-complete.
\end{corollary}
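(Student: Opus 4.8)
The plan is to obtain both completeness claims purely by combining the two preceding reductions with the standard characterizations of \NP and \W[1], so the corollary becomes essentially a bookkeeping exercise over \autoref{lem:non-empty join to clique} and \autoref{lem:Multicolored-clique to non-empty join}. No new construction is needed; I would just assemble the pieces and add the one genuinely new (but routine) ingredient, namely membership in \NP.

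First I would settle the parameterized statement. For membership, I would invoke \autoref{lem:non-empty join to clique}, which supplies an fpt-reduction from $(\TNEJshort,\kappa_{\text{\#Tab}})$ to $(\textsc{Clique},\kappa_{\text{sol}})$; since $\W[1]=[\textsc{Clique},\kappa_{\text{sol}}]^{\fpt}$ and \W[1] is closed under fpt-reductions, this places $(\TNEJshort,\kappa_{\text{\#Tab}})$ in \W[1]. For hardness I would use \autoref{lem:Multicolored-clique to non-empty join}, an fpt-reduction from $(\textsc{Multicolored Clique},\kappa_{\text{sol}})$ to $(\TNEJshort,\kappa_{\text{\#Tab}})$. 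As \textsc{Multicolored Clique} likewise characterizes \W[1] and is therefore \W[1]-hard, composing reductions yields \W[1]-hardness of $(\TNEJshort,\kappa_{\text{\#Tab}})$, and together with membership this gives \W[1]-completeness.

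Next I would address the classical statement. Membership in \NP is the only step not already handed to us: a witness is one labeled row chosen from each of the $k$ tables, which has size polynomial in the input, and pairwise compatibility of these rows is checked in polynomial time by comparing, for each of the $O(k^2)$ pairs and each shared column label, whether the two entries agree or one of them is $*$. Hence $\TNEJshort\in\NP$. For \NP-hardness I would reuse \autoref{lem:Multicolored-clique to non-empty join}: the reduction there runs in polynomial time, and \textsc{Multicolored Clique} with $k$ part of the input is \NP-hard, so \TNEJshort inherits \NP-hardness.

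The closest thing to an obstacle is simply noticing that a single reduction must serve two roles: \autoref{lem:Multicolored-clique to non-empty join} has to be both a polynomial-time reduction (for \NP-hardness, where $k$ is unbounded) and a parameter-preserving fpt-reduction (for \W[1]-hardness). Because that lemma explicitly provides a \emph{polynomial-time} fpt-reduction, both readings are available at once, so the argument goes through with no extra work.
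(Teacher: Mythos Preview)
Your proposal is correct and matches the paper's intended derivation: the corollary is stated without proof precisely because it follows immediately from \autoref{lem:non-empty join to clique} and \autoref{lem:Multicolored-clique to non-empty join} together with the obvious \NP membership argument you supply. Your observation that \autoref{lem:Multicolored-clique to non-empty join} must serve double duty as both a polynomial-time and an fpt-reduction is exactly the reason the paper phrases it as a ``polynomial-time fpt-reduction''.
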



These hardness results motivate to look for different parameterizations of \TNEJshort.
Observe that the overall number $n$ of table rows is clearly lower-bounded by the number~$k$ of tables. In fact, the difference $s=n-k$ (that we call the \emph{surplus}) can be seen as a good (dual) parameterization, by re-examining the  reduction of \autoref{lem:non-empty join to clique}.

\begin{corollaryrep}
The \TNEJshort problem, parameterized by the surplus, is in \FPT.
\end{corollaryrep}

\begin{proof}
An easy modification of the construction in the proof of Lemma~\ref{lem:non-empty join to clique} shows that  \TNEJshort, parameterized by the surplus, reduces to \textsc{Vertex Cover}, parameterized by solution size. Hence, any known \FPT algorithm for \textsc{Vertex Cover} can be used to solve \TNEJshort. 
\end{proof}


\section{Complexity Results for Commutative Automata}



A DFA is said to be \emph{commutative} if its transition monoid is commutative
. Equivalently, a DFA $\mathcal{A}=(\Sigma,Q,\delta,q_0,F)$ is commutative if $\delta(q, ab) = \delta(q, ba)$ for all $q \in Q$, $a,b \in \Sigma$. 
A regular language is called commutative if it is accepted by a commutative DFA. 
A very special case of commutative DFA's are unary DFA's (accepting languages over a singleton input alphabet) 
that we will consider first. In order to refer to the problems in the mentioned restricted settings, we prefix the problem names with $\unary$ and with $\comm$.

\subsection{Classical Complexity Results}





\begin{theoremrep}
\label{thm:unary-dfa-np-complete}
The 
\unary\DFAINE problem 
is $\NP$-complete.
\end{theoremrep}
The hardness part of this result follows by a reduction from 3\textsc{SAT} based on related reductions in \cite{Stockmeyer73} and \cite{Galil76}.
For more details, we refer to the appendix. We now generalize this result towards general commutative DFA's.

\begin{toappendix}

As the proof of Theorem~\ref{thm:unary-dfa-np-complete} (concerning \NP-hardness) is a bit challenging to extract from the classical sources, we provide a self-contained proof of it in the following.

\begin{proof}[Proof of the \NP-hardness part of Theorem~\ref{thm:unary-dfa-np-complete}] We show a reduction from 3\textsc{SAT}.
Let $F$ be a given CNF formula with variables $x_1,\dots,x_n$. $F$ consists of the clauses $c_1,\dots,c_m$.
After a little bit of cleanup, we can assume that each variable occurs at most once in each clause.
Let $p_1,\dots,p_n$ be the first $n$ prime numbers. It is known that $p_n\leq (n\ln (n\ln n)))$.
To simplify the following argument, we will only use that $p_n\sim n\ln n$, as shown in \cite[Satz, page 214]{Lan09}. 
If a positive integer $z$ satisfies
$$\forall i:1\leq i\leq n\implies z\equiv 0\bmod p_i \lor z\equiv 1\bmod p_i,$$
then $z$ represents an assignment $\alpha$ to $x_1,\dots, x_n$ with $\alpha(x_i)=z\bmod p_i$.
Then, we say that $z$ satisfies $F$ if this assignment satisfies $F$.
Clearly, if $z\in \{0\}^j\{0^{p_k}\}^*$ for some $2\leq j<p_k$, then $z$ cannot represent any assignment, as $z\bmod p_k$ is neither $0$ nor $1$.
There is a DFA $A_{x_i}$ for $L_{p_i}:=\{0,\varepsilon\}\{0^{p}\}^*$ with $p_i+1\leq p_n$ states, where $p_i\in\{p_1,\dots,p_n\}$. Notice that $x\in 0^*$ can be interpreted as an assignment if and only if $x\in\bigcap_{i=1}^nL_{p_i}$.
\par
To each clause $c_j$ with variables $x_{i_j(1)},\dots, x_{i_j(|c_j|)}$ occurring in $c_j$, for a suitable injective index function $i_j:\{1,\dots,|c_j|\}\to \{1,\dots,n\}$, 
there is a unique assignment $\alpha_j\in\{0,1\}^{|c_j|}$ to these variables that falsifies $c_j$.
This assignment can be represented by the language 
$$L_{\alpha_j}:=\{0^{z_{k_j}}\}\cdot \{0^{p_{i_j(1)}\cdots p_{i_j(|c_j|)}}\}^*
$$
with $0\leq z_{k_j}< p_{i_j(1)}\cdots p_{i_j(|c_j|)}$ being uniquely determined by  $z_{k_j}\equiv \alpha(r)\bmod p_{i_j(r)}$ for $r=1,\dots,|c(j)|$.
As $p_{i_j(1)}\cdots p_{i_j(|c_j|)}\leq p_n^3$ (3\textsc{SAT}), $0^*\setminus L_{\alpha_j}$ can be accepted by a DFA $A_{c_j}$ with at most $p_n^3$ states.
Altogether, we produced $n+m$ many DFA's such that
$\bigcap_{i=1}^n L(A_{x_i})\cap \bigcap_{j=1}^mL(A_{c_j})\neq\emptyset$ if and only if $F$ has a satisfying assignment.
\end{proof}
\end{toappendix}

\newcommand{\sort}{\mathrm{sort}}

Also in order to build a  link to \autoref{sec:sparse}, we now consider an \emph{ordered input alphabet}, i.e., an alphabet with a linear order on it. Then, $\sort(x)$ refers to the string obtained by sorting the characters of $x$, maintaining multiplicities.

\begin{proposition}\label{prop:comm-ordered}
Let a commutative DFA $\mathcal{A}$ over an ordered input alphabet~$\Sigma$ be given.  For all strings $x\in\Sigma^*$, $\mathcal{A}$ accepts $x$ if and only if $\mathcal{A}$ accepts $\sort(x)$.
\end{proposition}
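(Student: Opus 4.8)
The plan is to prove this directly from the definition of a commutative DFA, reducing the general case to the case of swapping two adjacent characters. The key observation is that $\sort(x)$ can be obtained from $x$ by a finite sequence of adjacent transpositions (this is just bubble sort), so it suffices to show that swapping two adjacent characters in a word never changes acceptance. Once that local invariance is established, invariance under arbitrary reordering — and in particular under sorting — follows by induction on the number of transpositions.

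First I would recall that $\mathcal{A}$ being commutative means $\delta(q,ab)=\delta(q,ba)$ for all $q\in Q$ and all $a,b\in\Sigma$, where $\delta$ is lifted to words in the usual way. The crucial step is then to show that for any word $x = u\,a\,b\,v$ with $u,v\in\Sigma^*$ and $a,b\in\Sigma$, we have $\delta(q_0, u\,a\,b\,v) = \delta(q_0, u\,b\,a\,v)$. To see this, set $p = \delta(q_0, u)$. Applying the definition of commutativity at the state $p$ gives $\delta(p, ab) = \delta(p, ba)$, and then reading $v$ from this common state yields the same final state in both cases. Hence $x$ is accepted if and only if $u\,b\,a\,v$ is accepted.

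Next I would argue that $\sort(x)$ is reachable from $x$ by finitely many such adjacent swaps: any permutation of a finite sequence decomposes into adjacent transpositions, and sorting is one particular such permutation. Formally, I would phrase this as an induction on the number of \emph{inversions} of $x$ with respect to the order $\prec$ on $\Sigma$ (a pair of positions $i<j$ with the character at $i$ strictly greater than the character at $j$). If $x$ has no inversions it is already sorted, so $x=\sort(x)$ and the claim is trivial. Otherwise there must be an adjacent pair that is out of order; swapping it strictly decreases the inversion count while preserving acceptance by the local step above, and the induction hypothesis finishes the argument.

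The main obstacle here is essentially bookkeeping rather than any deep difficulty: the one substantive content is the single adjacent-swap invariance, which follows immediately from commutativity by factoring the run through the intermediate state $p = \delta(q_0,u)$. The only care needed is to set up the induction cleanly — either on inversion count or on a bubble-sort sequence of transpositions — and to make sure the lifted transition function's behaviour $\delta(q, w_1 w_2) = \delta(\delta(q, w_1), w_2)$ is invoked correctly so that the prefix $u$ and suffix $v$ are handled properly. I do not expect any subtlety beyond this.
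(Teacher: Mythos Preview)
Your proposal is correct and is essentially the natural argument: the paper itself does not spell out a proof of this proposition, treating it as an immediate consequence of the definition of commutativity. Your reduction to adjacent transpositions together with the induction on inversion count is precisely the routine verification the paper is tacitly invoking.
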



\begin{theorem}
The \comm\DFAINEshort problem 
is $\NP$-complete.
\end{theorem}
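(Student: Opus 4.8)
The plan is to show both membership in $\NP$ and $\NP$-hardness. Membership is the easy direction and hardness is where the real work lies, so I will concentrate on the latter.

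\textbf{Membership in $\NP$.} First I would argue that $\comm\DFAINEshort$ is in $\NP$. The key obstacle is that a shortest commonly accepted word might a priori be long, so I cannot simply guess it symbol by symbol. Instead I would use commutativity: by \autoref{prop:comm-ordered}, each commutative DFA accepts $x$ iff it accepts $\sort(x)$, so the intersection is non-empty iff there is a \emph{sorted} word accepted by all automata. A sorted word over an ordered alphabet $\Sigma=\{a_1\prec\dots\prec a_t\}$ is determined entirely by its Parikh vector $(n_1,\dots,n_t)$, i.e.\ by reading $a_1^{n_1}\cdots a_t^{n_t}$. For a single letter $a_j$, the behaviour of a commutative DFA on the block $a_j^{n_j}$ is eventually periodic with period and tail bounded by the number of states; so each $n_j$ can be taken polynomially bounded (one essentially reduces, letter by letter, to the unary case handled in \autoref{thm:unary-dfa-np-complete}, where accepted lengths are governed by residues modulo small numbers). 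Hence a certificate is a polynomially bounded Parikh vector $(n_1,\dots,n_t)$, and verifying that every DFA accepts $a_1^{n_1}\cdots a_t^{n_t}$ is done in polynomial time by simulating each automaton; this places the problem in $\NP$.

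\textbf{$\NP$-hardness.} For hardness I would reduce from the already-established $\NP$-hard problem $\unary\DFAINE$ (\autoref{thm:unary-dfa-np-complete}). This is essentially trivial: a unary DFA is in particular a commutative DFA, since over a one-letter alphabet $\delta(q,ab)=\delta(q,ba)$ holds vacuously. Therefore any instance of $\unary\DFAINE$ is already an instance of $\comm\DFAINEshort$, and the identity map is the desired (trivially fpt/polynomial-time) reduction. This immediately gives $\NP$-hardness, and combined with membership yields $\NP$-completeness.

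\textbf{Main obstacle.} The only genuinely nontrivial step is the membership argument, specifically bounding the entries of the candidate Parikh vector polynomially. The danger is that, although each letter-block behaves periodically, the \emph{combined} constraint across all automata and all letters could force large exponents through a Chinese-Remainder-style interaction (exactly the phenomenon exploited in the hardness proof of \autoref{thm:unary-dfa-np-complete}). I expect the clean way around this is: for each letter $a_j$ separately, the set of admissible block-lengths $n_j$ accepted by the $a_j$-projection of each automaton is eventually periodic with period $\le \kappa_Q$, so it suffices to search $n_j$ in a polynomially bounded range per letter; since the alphabet is ordered and words are taken in sorted form, the letters are processed sequentially and the search ranges multiply to only polynomially many bits, keeping the certificate short. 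Making this periodicity-per-letter bound fully rigorous (and confirming the independence of the per-letter searches under commutativity) is the part that would require the most care, though it is routine given \autoref{prop:comm-ordered}.
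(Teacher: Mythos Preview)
Your hardness argument is correct and is exactly what the paper does: unary DFAs are commutative, so \autoref{thm:unary-dfa-np-complete} transfers immediately.

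Your $\NP$-membership argument, however, has a real gap. You aim to show that ``each $n_j$ can be taken polynomially bounded,'' and your proposed fix in the last paragraph repeats this: period $\le \kappa_Q$, hence a polynomially bounded search range per letter. That bound is per \emph{single} automaton. The intersection involves $k$ automata simultaneously, so when reading the block $a_j^{n_j}$ the relevant period is the $\lcm$ of up to $k$ individual periods, which can be of order $s^k$ (where $s=\kappa_Q$). This is precisely the Chinese-Remainder blowup you correctly identify as the danger, and it is not avoided by your argument. There is also no ``independence of the per-letter searches'': the state each automaton is in after the $a_1$-block determines where the $a_2$-block starts, so the blocks are coupled. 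In short, the entries $n_j$ of a shortest witnessing Parikh vector are \emph{not} polynomially bounded in general.

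The paper's proof accepts this and works around it. Using $\sort(w)$ and a pumping argument in the product automaton (at most $s^k$ states), it bounds each block length by $s^k$; hence the Parikh vector has entries that are exponential in~$k$ but whose \emph{binary} encoding is only $O(|\Sigma|\cdot k\log s)$ bits, i.e., polynomial in the input. One guesses this vector and then verifies acceptance in each DFA; here the verification must compute $\delta(q,a_j^{n_j})$ for exponentially large $n_j$, which is polynomial time via the tail-plus-cycle structure (or repeated squaring of the single-letter transition), not by naive step-by-step simulation. The missing idea in your sketch is exactly this shift from ``polynomially bounded exponents'' (false) to ``polynomially many bits plus fast powering'' (true).
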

This claim is also true for \comm\NFAINEshort, see~\cite{AFHHJOW2021}, but we give here a simpler proof.
\begin{proof}
$\NP$-hardness is a direct consequence of Theorem~\ref{thm:unary-dfa-np-complete}, because every unary DFA is a commutative DFA.
In order to see $\NP$-membership, we prove the following \underline{claim}: If $\mathbb{A}$ is a collection of $k$  commutative DFA's over the common input alphabet $\Sigma$, each with no more than $s$ states, such that $L:=\bigcap_{\mathcal{A}\in\mathbb{A}}L(\mathcal{A})\neq\emptyset$, then $L\cap \Sigma^{\leq r}\neq\emptyset$ for $r:=|\Sigma|\cdot s^k$.

Namely, let the indices of $\Sigma=\{a_1,a_2,\dots,a_n\}$ reflect an arbitrary linear order on $\Sigma$. Assume that $L\neq\emptyset$, so that some $w\in L$ exists. Consider some $w\in L$ of shortest length. 
Then, $\sort(w)\in L$ (see \autoref{prop:comm-ordered}) has also minimal length. Assume $r<|w|=|\sort(w)|$.  Now, $\sort(w)=a_1^{m_1}a_2^{m_2}\cdots a_n^{m_n}$. As $r<|\sort(w)|$, we have $m_i>s^k$ for some $i-1\in [n]$. Then, in the product automaton constructed from all $\mathcal{A}\in\mathbb{A}$, a state must repeat upon digesting the factor $a_i^{m_i}$ of $\sort(w)$, so that we arrive at a shorter word in $L$ by cutting out this loop, contradicting the assumed minimality of $|w|$. This proves the claim.

Hence, it is sufficient to guess a polynomial number of bits to obtain a candidate string $v\in\Sigma^*$ of which membership in $L$ can be checked by sequentially simulating all DFA's.
%
%
\end{proof}

The previous result means that there is no need to consider the  variation \BDFAINEshort, because there are always certificates of polynomial size for the length parameter.

\subsection{Parameterizations} 

While $(\unary\DFAINEshort,\kappa_\ell)\in \FPT$, this is no longer the case for commutative DFAs. However, we do not know the exact classification of this problem; we leave this  \textbf{open}.

\begin{propositionrep}\label{prop:commutative-W2}The following inclusions are immediate:\\
$\W[2]\subseteq [(\comm\DFAINEshort,\kappa_{Q,\ell})]^{\text{fpt}}\subseteq [(\comm\DFAINEshort,\kappa_\ell)]^{\text{fpt}}\subseteq \W[\sync]$.
\end{propositionrep}
The proof of this proposition can be derived by re-visiting the \W[2]-hardness proof of Wareham for $(\DFAINEshort,\kappa_{Q,\ell})$.
A simplified version of it can be found in the appendix.
The same proof also shows that $(\comm\DFAINEshort,\kappa_Q)$ is not in $\XP$, unless $\PTIME=\NP$.
\begin{toappendix}
\begin{proof}
We have to prove that $(\comm\DFAINEshort,\kappa_{Q,\ell})$ is \W[2]-hard. Consider the well-known \W[2]-hard problem \textsc{Hitting Set}, parameterized by solution size. An instance of \textsc{Hitting Set} consists in a hypergraph $G=(V,E)$ and a positive integer~$k$. The task is to find a selection $v_1,\dots,v_k$ from $V$ such that each hyperedge $e\in E$ contains some $v_i$.
Now, construct $|E|$ many 2-state DFAs with input alphabet $V$. DFA $\mathcal{A}_e$ (for $e\in E$) loops in its initial state with all letters but those from $e$, with which it moves to its final state, which is a sink state.
\end{proof}
\end{toappendix}

We now parameterize the \DFAINE problem by the number of automata 
provided in the input. We will prove that $(\unary(\textsc{B})\DFAINEshort,\kappa_{\mathbb{A}})$ is \W[1]-complete.
While in the general case (of arbitrary alphabets), it is unclear if $(\BDFAINEshort,\kappa_\ell)$ is easier than $(\BDFAINEshort,\kappa_{\mathbb{A}})$, as it is unknown (and with unknown consequences) if \W[\sync] is a proper subset of \WNL, the situation is clearer in the unary case, as $(\unary\BDFAINEshort,\kappa_\ell)\in \FPT$ is then trivial. Under the Exponential Time Hypothesis (implying $\FPT\neq\W[1]$), a separation from $(\unary\BDFAINEshort,\kappa_{\mathbb{A}})$  is produced.

\begin{lemmarep}\label{lem:CliqueToDFA}
There is a polynomial-time computable fpt-reduction from $(\textsc{Clique},\kappa_{\text{sol}})$ to $(\unary\DFAINEshort,\kappa_{\mathbb{A}})$, so that 
$\W[1]\subseteq[\unary\DFAINEshort,\kappa_{\mathbb{A}}]^{\text{fpt}}$.
\end{lemmarep}

The claimed \W[1]-hardness of $(\unary\DFAINEshort,\kappa_{\mathbb{A}})$ also follows by \cite[Corollary~11]{MorRehWel2020}, but for the sake of self-containment, we sketch a proof in the appendix. As in~\cite{MorRehWel2020}, we make use of number-theoretic tools, in particular of the Generalized Chinese Remainder Theorem.

\begin{proof}
Let a graph $G=(V,E)$ with $n$ vertices be given. Let $V=[n]$. We build $k$ choose 2 DFA's such that the DFA's have a non-empty intersection if and only if the graph~$G$ has a $k$-clique. We first need to find $k$ natural numbers $m_1,m_2,\dots,m_k$ such that:
\begin{itemize}
\item the numbers are at least $n$ each,
\item the numbers are $\Oh(n\log(n))$ (i.e., not too big),
\item the numbers are all pairwise relatively prime.
\end{itemize}
We can find such numbers by building a prime sieve.  This is guaranteed to work for $n$ sufficiently large by the prime number theorem.

Each of the $k$ choose 2 DFA's will have a cycle length that is a product of two of the numbers. A remainder modulo a number will represent a vertex in the graph (namely, if $z\bmod m_i\in[n]$) or an invalid remainder. Now, a state~$z$ in a DFA, specified, say, by $m_i$ and~$m_j$, will represent two remainders, one with respect to $m_i$ and the other one with respect to $m_j$.  It is a final state if both remainders are valid and the associated vertices  $z\bmod m_i$ and  $z\bmod m_j$  have an edge in~$G$.
\end{proof}

Now, we add $\Sigma\text{-}$ to our problem abbreviations to fix the input alphabet to $\Sigma$. By~\autoref{prop:comm-ordered}, we know that, given an ordered alphabet $\Sigma$ and a collection $\mathbb{A}$ of DFA's, then $x\in\bigcap_{\mathcal{A}\in\mathbb{A}}L(\mathcal{A})$ if  and only if $\sort(x)\in\bigcap_{\mathcal{A}\in\mathbb{A}}L(\mathcal{A})$.
Therefore, the next statement is actually a corollary of our reasoning on (strictly) bounded languages in the next section.

\begin{theorem}\label{thm:commSigmaDFAINE-W1}
$(\comm\Sigma\text{-}\DFAINEshort,\kappa_{\mathbb{A}})$ is $W[1]$-complete for  any \textit{fixed alphabet}~$\Sigma$.
\end{theorem}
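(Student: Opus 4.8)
The plan is to prove $W[1]$-hardness and $W[1]$-membership separately and then combine them. For hardness I would invoke Lemma~\ref{lem:CliqueToDFA}, which already supplies an fpt-reduction from $(\textsc{Clique},\kappa_{\text{sol}})$ to $(\unary\DFAINEshort,\kappa_{\mathbb{A}})$. Since a unary alphabet is a fixed alphabet and every unary DFA is trivially commutative, this settles the case $|\Sigma|=1$. For a fixed $\Sigma$ with $|\Sigma|\geq 2$, I would take the unary DFAs produced over a chosen letter $a\in\Sigma$ and extend each to a commutative DFA over all of $\Sigma$ by routing every letter of $\Sigma\setminus\{a\}$ to a fresh sink state. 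Commutativity is checked directly: once a run reaches the sink it stays there, so $\delta(q,xy)=\delta(q,yx)$ is preserved in every case. Moreover, any non-$a$ letter permanently kills a run, so the only candidate witnesses lie in $a^*$; hence the new intersection is non-empty over $\Sigma$ iff the original one is non-empty over $\{a\}$. The number of automata is unchanged, giving an fpt-reduction and thus $\W[1]\subseteq[\comm\Sigma\text{-}\DFAINEshort,\kappa_{\mathbb{A}}]^{\fpt}$.

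For membership I would give an fpt-reduction to $(\TNEJshort,\kappa_{\text{\#Tab}})$, which lies in $\W[1]$ by Lemma~\ref{lem:non-empty join to clique}. Let $\mathbb{A}=\{\mathcal{A}_1,\dots,\mathcal{A}_k\}$ be commutative over $\Sigma=\{a_1<\cdots<a_d\}$, with $d=|\Sigma|$ a fixed constant and each automaton having at most $s$ states. By \autoref{prop:comm-ordered} the intersection is non-empty iff some \emph{sorted} word $a_1^{m_1}\cdots a_d^{m_d}$ is commonly accepted, so I only need to search over tuples $(m_1,\dots,m_d)$. For each $\mathcal{A}_i$, reading a block $a_j^{m_j}$ from any state first traverses a tail of length $<s$ and then enters a cycle, so the accepting tuples decompose into a family of residue systems, one cycle length and offset per block, giving only $\Oh(s^d)$ combinations because $d$ is constant. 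I would build one binary table $T_i$ per automaton with one row per accepting cycle-combination; columns encode the required remainders modulo the prime-power factors of the cycle lengths, and null values ($*$) mark the moduli irrelevant to a given row. Two rows from different tables are then compatible exactly when their residue requirements are simultaneously solvable, so the tables admit a non-empty join iff a common tuple $(m_1,\dots,m_d)$ exists. Since the number of tables is $k$ and each table is of polynomial size, the parameter is preserved and the reduction is fpt.

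The hard part will be the membership encoding: I must show that acceptance of a commutative DFA on sorted words genuinely reduces to a polynomially bounded collection of congruence constraints, and that row compatibility in \TNEJshort\ faithfully models simultaneous solvability of those constraints. This rests on a careful cycle-decomposition argument together with the Generalized Chinese Remainder Theorem, which is needed because distinct cycle lengths may share prime factors; the null-value mechanism must correctly express that a given modulus is unconstrained for a given row. Fixing $\Sigma$ is essential here, as it keeps the number $d$ of blocks constant and hence the number of rows per table polynomial; without it the $\Oh(s^d)$ bound is lost and the construction no longer yields an fpt-reduction.
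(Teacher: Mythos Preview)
Your hardness argument matches the paper's: both rely on Lemma~\ref{lem:CliqueToDFA} for the unary case, and your sink-state padding to larger fixed alphabets is a correct and standard way to extend it (the paper leaves this implicit).

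For membership you take a genuinely different route. The paper does \emph{not} reduce to \TNEJshort; instead it observes (via \autoref{prop:comm-ordered}) that a witness for commutative DFAs over $\Sigma=\{a_1,\dots,a_d\}$ can be taken sorted, hence lies in $a_1^*\cdots a_d^*$, and then invokes the strictly-bounded result (Corollary~\ref{cor:sbounded-DFAINE-W1}, itself a special case of Theorem~\ref{thm:sparse-DFAINE-W1}), which reduces directly to \textsc{Multicolored Clique}. Your \TNEJshort\ encoding is in the same spirit---cycle/tail decomposition of each block plus the Generalized Chinese Remainder Theorem to reduce joint solvability to pairwise checks---and is essentially sound; indeed the paper considered this very route before opting for the more general sparse-language argument. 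What the paper's approach buys is modularity: the commutative case falls out as a corollary of a stronger theorem covering all bounded languages $w_1^*\cdots w_m^*$ with fixed $m$. What your approach buys is directness for the commutative case and an explicit link to the database problem.

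One point you rightly flag as ``the hard part'' deserves emphasis: the interaction between \emph{tail} constraints (exact values $m_j=r$) and \emph{cycle} constraints (congruences with a threshold $m_j\geq r$) is not captured by residue-columns-with-nulls alone. A tail row cannot simply put $*$ in all modulus columns, since it would then be spuriously compatible with every cycle row; conversely, a cycle row's threshold must be enforced against a tail row's exact value. The paper's \textsc{Multicolored Clique} construction handles this with separate edge sets $E_5^{C,C}$, $E_5^{C,T}$, $E_5^{T,T}$, and in a \TNEJshort\ encoding you would need analogous extra columns (e.g., one group recording the exact value when on a tail, with $*$ when on a cycle). This is fixable, but it is the place where a careless encoding would break.
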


\section{Automata Recognizing Sparse  Languages}
\label{sec:sparse}

A language $L \subseteq \Sigma^*$ is \emph{sparse} if the function that counts the number of strings of length $n$ in the language is bounded by a polynomial function of $n$.
Sparse languages were introduced into computational complexity
theory by Berman \& Hartmanis~\cite{DBLP:journals/siamcomp/BermanH77}.

 A language $L \subseteq \Sigma^*$ is \emph{bounded}
if there exist words $w_1, w_2,\ldots, w_m \in \Sigma^*$
such that $L \subseteq w_1^*w_2^* \cdots w_m^*$. This language
class was introduced in~\cite{GinSpa64}.
Both sparse and bounded languages generalize unary languages.
By definition, every bounded language is sparse. Conversely,
every regular sparse language is bounded; this is even true for context-free languages, see~\cite{DBLP:journals/eik/LatteuxT84}.
A language is called \emph{strictly bounded} if it is a subset of $a_1^*a_2^* \cdots a_m^*$
for an alphabet $\Sigma = \{a_1,a_2, \ldots, a_m\}$ of size $m$.
if there exists a sequence of not necessarily distinct
$a_1, a_2,\ldots, a_m \in \Sigma$ such that it is contained in $a_1^* a_2^*\cdots a_m^*$.
For example, if $\Sigma = \{a,b\}$, then $a^* b^*$ is strictly bounded,
but $a^* b^* a^*$ is only letter-bounded, not strictly bounded.

As before, we use prefixing, now with $\sparse$ and with $\sbounded$, in order to refer to the corresponding restrictions of automata models in our problem definitions.

\subsection{Combinatorics and Classical Complexity}

\begin{toappendix}
\begin{theorem}
 The intersection and union of bounded languages is a bounded language.
\end{theorem}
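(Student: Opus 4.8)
The plan is to exploit the fact that boundedness is purely a containment condition against a product of Kleene stars, and to treat the two operations separately. It suffices to establish both closure properties for two languages $L_1, L_2$; the case of finitely many bounded languages then follows by a routine induction.

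First I would record the key elementary observation that boundedness is closed under taking subsets: if $L \subseteq w_1^* w_2^* \cdots w_m^*$ and $L' \subseteq L$, then $L' \subseteq w_1^* w_2^* \cdots w_m^*$ as well, so $L'$ is bounded with the very same witnessing words $w_1,\ldots,w_m$. With this in hand, the intersection case is immediate: since $L_1 \cap L_2 \subseteq L_1$ and $L_1$ is bounded, say $L_1 \subseteq u_1^* \cdots u_p^*$, we conclude $L_1 \cap L_2 \subseteq u_1^* \cdots u_p^*$, so $L_1 \cap L_2$ is bounded. No genuine work is needed here.

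For the union, the idea is to concatenate the two witnessing products. Assume $L_1 \subseteq u_1^* \cdots u_p^*$ and $L_2 \subseteq v_1^* \cdots v_q^*$. I claim that $L_1 \cup L_2 \subseteq u_1^* \cdots u_p^* v_1^* \cdots v_q^*$. Indeed, any $w \in L_1$ can be written as $u_1^{i_1} \cdots u_p^{i_p}$ for suitable exponents $i_j \ge 0$, and by choosing all of the $v$-exponents equal to $0$ we place $w$ into the larger product $u_1^* \cdots u_p^* v_1^* \cdots v_q^*$; symmetrically, any $w \in L_2$ lands in the same product by taking all the $u$-exponents to be $0$. Hence $L_1 \cup L_2$ is bounded, witnessed by the concatenated list $u_1,\ldots,u_p,v_1,\ldots,v_q$.

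There is essentially no serious obstacle in this proof. The only point requiring a moment's care is that each factor is a Kleene star, so the exponent $0$ (contributing the empty word) is permitted; this is precisely what makes the concatenation argument for the union valid. Lifting both statements from two languages to any finite family is then a straightforward induction, using closure under subsets for the intersection and iterated concatenation of the witnessing products for the union.
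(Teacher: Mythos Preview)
Your proof is correct and matches the paper's approach. For intersection the paper proves exactly your subset observation (even stating the slightly stronger form that intersecting a bounded language with \emph{any} language yields a bounded language); for union the paper merely cites Ginsburg--Spanier, but your explicit concatenation argument $L_1 \cup L_2 \subseteq u_1^*\cdots u_p^* v_1^*\cdots v_q^*$ is precisely the standard proof and is in fact used verbatim by the paper itself later (in the proof of Proposition~\ref{prop:number_of_words_polynomial}).
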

\begin{proof}
 For intersection, we show the stronger claim that intersecting
 a bounded language with any language gives a bounded language.
 Let $U \subseteq w_1^*w_2^* \cdots w_n^*$
 and $V \subseteq \Sigma^*$.
 Then, $U \cap V \subseteq U \subseteq w_1^*w_2^* \cdots w_n^*$.
 The union was stated in~\cite{GinSpa64}.
\end{proof}

\begin{theorem}[\cite{GinsburgSpanier66}]
\label{thm:bounded_regular_form}
 A language $L \subseteq w_1^*w_2^* \cdots w_k^*$ is regular if and only if
 it is a finite union of languages of the form $L_1L_2 \cdots L_k$, where each $L_i \subseteq w_i^*$ is regular.
\end{theorem}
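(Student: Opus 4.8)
The plan is to prove both implications, the nontrivial one being that every regular $L \subseteq w_1^*w_2^* \cdots w_k^*$ admits the stated decomposition. The converse implication is immediate: each factor $L_i \subseteq w_i^*$ is regular by assumption, so any product $L_1 L_2 \cdots L_k$ is regular by closure of the regular languages under concatenation, and a finite union of such products is regular by closure under finite union.

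For the forward direction I would induct on $k$, fixing a DFA $\mathcal{A} = (Q,\Sigma,\delta,q_0,F)$ with $L(\mathcal{A}) = L$. The base case $k=1$ is trivial, since $L \subseteq w_1^*$ is itself a regular subset of $w_1^*$, i.e.\ a one-term union. For the inductive step I would split each computation at the moment it finishes reading the initial $w_1$-block. Concretely, for each state $q \in Q$ define
$$
P_q := \{\, w_1^{n} : n \ge 0,\ \delta(q_0, w_1^{n}) = q \,\}, \qquad R_q := L(\mathcal{A}_q) \cap w_2^* \cdots w_k^*,
$$
where $\mathcal{A}_q$ is $\mathcal{A}$ with its initial state moved to $q$. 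The heart of the step is the set identity
$$
L = \bigcup_{q \in Q} P_q \cdot R_q .
$$
Here $P_q = \{\, u \in w_1^* : \delta(q_0,u) = q \,\}$ is an intersection of two regular languages, hence a regular subset of $w_1^*$; and $R_q$ is regular and contained in $w_2^* \cdots w_k^*$, so the induction hypothesis rewrites it as a finite union $\bigcup_j L_2^{(q,j)} \cdots L_k^{(q,j)}$ with each $L_i^{(q,j)} \subseteq w_i^*$ regular. Distributing concatenation over these finite unions gives $L = \bigcup_{q,j} P_q\, L_2^{(q,j)} \cdots L_k^{(q,j)}$, a finite union of products of the required form; finiteness is guaranteed by the finiteness of $Q$ together with the finite unions supplied by the induction hypothesis.

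The step I expect to require the most care is verifying the identity $L = \bigcup_{q} P_q R_q$ in the presence of \emph{non-unique} factorizations: a single word of $L$ may admit several factorizations $w_1^{n_1} \cdots w_k^{n_k}$ when the $w_i$ overlap, so the decomposition must be argued at the level of string sets rather than of exponent tuples. The inclusion $\bigcup_q P_q R_q \subseteq L$ is clean, because $u \in P_q$ means $\delta(q_0,u) = q$, whence $\delta(q_0,uv) = \delta(q,v) \in F$ for $v \in R_q$. For the reverse inclusion it suffices that every $x \in L$ has at least one factorization $x = w_1^{n_1} \cdots w_k^{n_k}$ (since $L \subseteq w_1^* \cdots w_k^*$); picking one, determinism of $\mathcal{A}$ makes the intermediate state $q = \delta(q_0, w_1^{n_1})$ well defined, placing $x$ in $P_q R_q$. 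Thus the non-uniqueness of factorizations is harmless, and the induction closes.
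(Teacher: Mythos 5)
The paper contains no proof of this statement to compare against: it is imported verbatim, with a citation, from Ginsburg and Spanier \cite{GinsburgSpanier66}. Judged on its own, your blind proof is correct and complete. The easy direction (closure under concatenation and finite union) is fine, and the induction on $k$ with the splitting identity $L=\bigcup_{q\in Q}P_q R_q$ is sound: the inclusion $P_qR_q\subseteq L$ holds because $\delta(q_0,uv)=\delta(q,v)\in F$, the reverse inclusion needs only the \emph{existence} of some factorization $x=w_1^{n_1}\cdots w_k^{n_k}$ together with determinism to pin down $q=\delta(q_0,w_1^{n_1})$, and you correctly observe that non-uniqueness of factorizations is harmless since membership is decided by the automaton, not by the exponent tuple. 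Regularity of the pieces is also handled properly: $P_q=w_1^*\cap\{u\mid\delta(q_0,u)=q\}$ is an intersection of regular languages contained in $w_1^*$, and $R_q=L(\mathcal{A}_q)\cap w_2^*\cdots w_k^*$ is regular and contained in $w_2^*\cdots w_k^*$, so the induction hypothesis applies and distributing over the finite unions (finitely many $q$, finitely many terms each) closes the argument; degenerate cases such as $w_1=\varepsilon$ or $L=\emptyset$ cause no trouble. This is essentially the standard state-splitting argument for bounded regular languages, and supplying it makes the appendix self-contained where the paper merely cites the classical source.
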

\end{toappendix}

\begin{toappendix} 
 \begin{lemma}[\cite{DBLP:conf/cai/Hoffmann19}]
\label{lem::unary_single_final}
  Let $L \subseteq \{a\}^{\ast}$ be a unary language that is recognized
  by an automaton with a single final state, index $i$ and period $p$.
  Then either $L = \{u\}$ with $|u| < i$ (and if the automaton is minimal we would have $p = 1$),
  or $L$ is infinite with $L = a^{i+m}(a^p)^{\ast}$ and $0 \le m < p$. Hence
  two words $u,v$ with $\min\{|u|, |v|\} \ge i$ are both in $L$ or not if and only
  if $|u| \equiv |v| \pmod{p}$.
 \end{lemma}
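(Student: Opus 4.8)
The plan is to exploit the well-known ``lasso'' shape of a unary automaton. Reading $a^n$ from the start state $q_0$ produces the sequence of states $q_n := \delta(q_0, a^n)$, which is ultimately periodic: by definition of the index $i$ and period $p$ we have $q_{n+p} = q_n$ for all $n \ge i$, with $i$ and $p$ chosen minimal, so that $q_0, q_1, \ldots, q_{i+p-1}$ are pairwise distinct, the first $i$ forming a transient tail and the last $p$ forming a cycle that is entered at step $i$. Since the automaton has a \emph{single} final state $q_f$, let $f$ be the least step at which it is reached; then $a^n \in L$ if and only if $q_n = q_f$. The whole argument is a case distinction according to whether $q_f$ lies on the tail or on the cycle.

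First I would treat the tail case $f < i$. A transient state occurs exactly once along $(q_n)_{n \ge 0}$, because $q_0, \ldots, q_{i-1}$ are non-recurrent and are distinct from all cycle states; hence $q_n = q_f$ holds only for $n = f$, giving $L = \{a^f\}$ with $|a^f| = f < i$, the singleton case. For the parenthetical minimality claim I would note that the minimal DFA of a finite unary language routes every word longer than the unique accepted one into a single non-accepting sink, whose self-loop is the cycle, forcing $p = 1$. Next, the cycle case $f \ge i$: since $f$ is the least step reaching $q_f$ and $q_f$ lies on the cycle, automatically $f \in \{i, \ldots, i+p-1\}$, so $f = i+m$ with $0 \le m < p$. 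Because $q_n$ depends only on $(n-i) \bmod p$ once $n \ge i$, we get $q_n = q_f$ exactly when $n \ge i$ and $n \equiv i+m \pmod p$; the least such $n$ is $i+m$, so the accepted lengths are $i+m,\, i+m+p,\, i+m+2p, \ldots$, that is $L = a^{i+m}(a^p)^*$, an infinite language. This completes the dichotomy.

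Finally I would read off the congruence consequence from the description of $L$. For $n \ge i$ the state $q_n$, and hence membership of $a^n$, depends only on $n \bmod p$, so two words of length at least $i$ whose lengths agree modulo $p$ are simultaneously in $L$ or simultaneously outside $L$. The single-final-state hypothesis is what makes the converse direction usable: in the infinite case it singles out \emph{one} accepting residue class modulo $p$ (namely $m$), so any two words of length $\ge i$ that both lie in $L$ must have lengths congruent to $m$, hence to each other, while in the finite minimal case $p=1$ renders the congruence vacuous. The step needing the most care is precisely this equivalence: one must be careful that ``congruent modulo $p$'' yields ``same membership status'' for every residue, whereas the reverse implication genuinely relies on uniqueness of the final state (it would collapse if several residue classes were accepting), so the clean statement is the congruence-implies-equal-status direction together with its sharpening for words already known to lie in $L$.
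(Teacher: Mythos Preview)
Your argument is correct and is the standard one: exploit the lasso shape of a unary DFA, split on whether the unique final state sits on the tail or on the cycle, and read off the language in each case. The paper does not supply its own proof of this lemma; it is quoted from~\cite{DBLP:conf/cai/Hoffmann19} and stated without argument, so there is nothing to compare against on the paper's side.

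One remark on your final paragraph: you are right to flag that the ``if and only if'' in the \emph{Hence} clause is delicate. The direction $|u|\equiv|v|\pmod p \Rightarrow$ same membership status is unconditional once $|u|,|v|\ge i$, but the converse can fail for two \emph{non}-members when $p\ge 3$ (there are several non-accepting residues). Your reformulation---congruence implies equal status, and two accepted words of length $\ge i$ must be congruent---is the precise content that is actually used downstream, so your caveat is well placed rather than a defect in your proof.
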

\end{toappendix}

%

An incomplete (or partial) automaton (PDFA for short) is called \emph{cyclic} if
its state diagram
is a single directed cycle.
If $\mathcal A = (\Sigma, Q, \delta, q_0, F)$
is a cyclic automaton, then, for any $q \in Q$,
we have $\{ w \in \Sigma^* \mid \delta(q, w) = q \} \subseteq u_q^*$
for some word $u_q \in \Sigma^*$.
By considering multiple final states and initial tails,
we find that languages accepted by cyclic DFA's
are finite unions of the form $uv^* \subseteq u^* v^*$,
hence they are bounded languages by the closure properties above.

Based on this, we call an automaton \emph{polycyclic}
if every strongly connected component consists of a single cycle, where
we count parallel transitions multiple times and do not collapse
them to a single transition. By adding an exceptional trash state, we could easily turn (poly-)cyclic automata into complete automata.

 \begin{example} A few examples and non-examples of polycyclic automata.
 
 \begin{center}
  \begin{tikzpicture}[node distance=15mm, auto]
  \tikzstyle{vertex}=[circle,fill=black!25,minimum size=10pt,inner sep=0pt]
   
   \node at (0.5,0) {};
  
   \foreach \name/\x in {s/1.5, 2/2.25}
    \node[vertex] (G-\name) at (\x,0) {};
    
   \foreach \name/\angle/\text in {P-1/180/5, P-2/108/6, 
                                   P-3/36/7, P-4/-36/8, P-5/-108/9}
    \node[vertex,xshift=4cm] (\name) at (\angle:0.9cm) {};

  \foreach \name/\angle/\text in {Q-1/90/5, Q-2/180/6, 
                                  Q-3/0/7}
    \node[vertex,xshift=2cm,yshift=-1.9cm] (\name) at (\angle:0.6cm) {};
    
  \node[vertex] (S1) at (5.7,0.5) {};
  \node[vertex] (S2) at (6.5,0.5) {};
  
  \foreach \from/\to/\label in {G-s/G-2/a,G-2/P-1/b,P-1/P-2/a,P-2/P-3/c,P-3/P-4/c,P-4/P-5/c,P-5/P-1/a,P-5/Q-1/b,Q-1/Q-3/a,Q-3/Q-2/b,Q-2/Q-1/b}
     \path[->] (\from) edge node {$\label$} (\to);
     
  \path[->] (P-3) edge node {$a$} (S1);
  \path[->] (S1) edge node {$b$} (S2);
  \path[->] (S2) edge [loop below] node {$a$} (S2);
   \node at (5,-1.7) {polycyclic PDFA};
  
   \node[vertex] (S1) at (9,0.5) {};
   \path[->] (S1) edge [loop above] node {\emph{$a,b$}} (S1);
   \node at (9,0) {not polycyclic};

    \foreach \name/\angle/\text in {T-1/90/5, T-2/180/6, 
                                  T-3/0/7}
    \node[vertex,xshift=9cm,yshift=-1.6cm] (\name) at (\angle:0.85cm) {};
    
    \path[->] (T-1) edge node {\emph{$a,b$}} (T-3)
              (T-3) edge node [above]  {$a$} (T-2)
              (T-2) edge node  {$a$} (T-1);
    
      \node at (9,-2) {not polycyclic};
 \end{tikzpicture} 
 \end{center}
 \end{example}    
 
 
 \begin{theorem} \cite{Hoffmann2021b}
  A regular language is sparse iff 
  it is accepted
  by a polycyclic automaton.
 \end{theorem}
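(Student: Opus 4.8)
The plan is to prove the two implications separately, exploiting the fact (recorded above) that a regular language is sparse iff it is bounded, together with the closure properties that finite unions of bounded languages are bounded and that cyclic automata accept bounded languages.

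For the easy direction, polycyclic $\Rightarrow$ sparse, I would show that a language accepted by a polycyclic automaton is bounded and then invoke ``every bounded language is sparse''. First assume the automaton is trimmed, discarding states not reachable from the start state or from which no final state is reachable; this leaves $L$ unchanged. Since the condensation of the state graph into strongly connected components is a DAG, each accepting run first determines a \emph{skeleton}: the ordered sequence of SCCs it visits together with the transitions used to enter and leave each. There are only finitely many skeletons. Fixing one, each SCC is a single cycle with cycle word $v_i$, so the portion of the run inside that SCC reads a word of the form $v_i^{k_i} p_i$, where $p_i$ is the bounded-length path from entry state to exit state; combined with the bounded bridge words between SCCs, the set of words realizing that skeleton lies in $u_0 v_1^* u_1 \cdots v_r^* u_r \subseteq u_0^* v_1^* u_1^* \cdots v_r^* u_r^*$, a bounded language. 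As $L$ is the union over the finitely many skeletons of such languages, $L$ is bounded, hence sparse.

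For the harder direction, sparse $\Rightarrow$ polycyclic, I would argue by contraposition on a trimmed DFA $\mathcal{A}$ for $L$: if $\mathcal{A}$ is \emph{not} polycyclic then $L$ is not sparse, so sparsity forces the trimmed DFA itself to be a polycyclic witness. If $\mathcal{A}$ is not polycyclic, some SCC $C$ is strongly connected but not a single cycle; an edge-count (a strongly connected multigraph on $N$ vertices that is not a single cycle has more than $N$ edges) yields a state $q$ with two distinct out-transitions that stay inside $C$. By determinism these carry distinct letters $a \neq a'$, and choosing return paths inside $C$ produces two loops $\alpha,\beta$ at $q$ with $\delta(q,\alpha)=\delta(q,\beta)=q$ that begin with different letters, so neither is a prefix of the other. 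Hence $\{\alpha,\beta\}$ is a prefix code and the map from sequences over $\{\alpha,\beta\}$ to their concatenations is injective, giving at least $2^{\lfloor n/L\rfloor}$ distinct loop words at $q$ of length at most $n$, where $L=\max(|\alpha|,|\beta|)$. Prepending a fixed $s$ with $\delta(q_0,s)=q$ and appending a fixed $t$ with $\delta(q,t)\in F$ (both exist by trimming) turns these into distinct words of $L$, so $|L \cap \Sigma^{\leq n+c}|$ grows exponentially, contradicting the per-length polynomial bound of sparsity.

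The main obstacle I expect is this backward direction, namely converting the structural failure ``some SCC is not a single cycle'' into a clean exponential-growth witness. The three delicate steps are (i) the edge-count showing a non-cyclic SCC has a state of intra-SCC out-degree at least two, (ii) the use of determinism to force the two loops to differ already in their first letters, so that $\{\alpha,\beta\}$ is a code and distinct factorizations give genuinely distinct words, and (iii) the counting that turns the free submonoid generated by these loops into an exponential lower bound on $|L\cap\Sigma^{\leq n}|$, which is incompatible with sparsity.
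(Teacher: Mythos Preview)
The paper does not actually prove this theorem: it is quoted from \cite{Hoffmann2021b} and used as a black box, so there is no in-paper proof to compare your proposal against. That said, your argument is correct and is essentially the standard proof of this characterization. The forward direction via finitely many ``skeletons'' through the condensation DAG is exactly how one shows a polycyclic automaton accepts a bounded (hence sparse) language, and your contrapositive for the backward direction---locating a state of intra-SCC out-degree at least two, using determinism to force distinct first letters on the two return loops, and then exploiting the resulting prefix code to exhibit exponential growth---is the expected argument and is carried out without gaps. The only cosmetic point is to make explicit that trivial SCCs (a single state with no self-loop) are admitted in the notion of polycyclic, so that the non-polycyclic witness SCC you analyze is genuinely nontrivial; this is implicit in your edge-count step but worth stating.
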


 For our \NP-containment result, we need a polynomial bound on the number
 of words $w_1, w_2,\ldots, w_m$ such that $L \subseteq w_1^* w_2^* \cdots w_m^*$
 for a sparse language $L \subseteq \Sigma^*$.
 
\begin{propositionrep}
\label{prop:number_of_words_polynomial}
 Let $L \subseteq \Sigma^*$ be recognized by a polycyclic automaton
 with $n$ states. Then, $L \subseteq w_1^*w_2^* \cdots w_m^*$
 for a number $m$ polynomial in $n$.
\end{propositionrep}
\begin{proof}
 We show the claim for an automaton with a single 
 final state. As for two languages $U \subseteq u_1^*u_2^* \cdots u_r^*$
 and $V \subseteq v_1^*v_2* \cdots v_s^*$
 we have $U \cup V \subseteq u_1^*u_2^* \cdots u_r^* v_1^*v_2* \cdots v_s^*$,
 and every language recognized with more than one final state
 is a union of at most $n$ languages recognized with a single
 final state, this implies the general claim.
 Now, first note that we have at most polynomially in $n$
 many paths from the start state to the final state
 without a repetition.
  This follows, as if we consider the adjacency matrix
 of the underlying automaton graph the number
 of these paths is upper-bounded by the sum
 of the entries in the row and column corresponding
 to the start and final state of the first, second up to the $n$-th
 power of this matrix, as the $i$-th power of
 this matrix counts the number of (not necessarily simple) paths of length $i$
 from the start state to the final state.
 This sum is polynomial in $n$, as the entries of the matrices
 involved are computed with polynomially many steps, 
 only involving multiplication and addition.
 Now, by the definition property
 of polycyclic automata, every repetition of states
 is induced by the repetition of a single word dependent on the state.
 
 Now, consider such a path $q_0, q_1, \ldots, q_r$
 with label $w$. 
 Let $L_{q_i}$ be the language in the polycyclic
 automaton we get if we start with the state $q_i$
 and final state set $\{q_i\}$. Also, write $w = w_1\cdots w_r$ with $w_i \in \Sigma$.
 Then, the language
 \[
  L_{q_0} w_1 L_{q_1} w_2 \cdots L_{q_{r-1}} w_r L_{q_r}
 \]
 is in $L$ and, in fact, $L$
 is the union of all these languages over all paths
 considered above.
 Now, under our assumptions,
 these languages are contained
 in $w_{q_0}^* w_1^* \cdots w_r^* w_{q_r}^*$, 
 and as we have polynomially many of them,
 concatenating them all gives the claim.
\end{proof}

We need the following technical lemma.

\begin{lemmarep}
\label{lem:minimize_sparse}
 For a given sparse regular language accepted by a DFA with $n$ states,
 we can compute in polynomial time a polycyclic automaton
 accepting it with at most $n$ states.
\end{lemmarep}
\begin{proof}
  A sparse language necessarily has to have a non-final trap state 
  in every complete automaton accepting it (if not, we could append to every word
  a word of length at most $n$ that gives an accepted word, which would yield
  an exponential growth for the number of words of some given length).
  Also, the sparseness
  assumption would be violated if we have a strongly connected component that lies on an accepting
  path, i.e., its states are reachable and co-reachable, but does not form a single cycle (as then we can generate exponentially many words for some length).
  Combining both observations, we see that we can leave out the non-final trap
  and this gives us a polycyclic automaton.
\end{proof}

Next we show that $\DFAINEshort$ is in \NP for automata accepting sparse language.
Note that we mean automata as complete deterministic automata here, i.e.,
we do not mean the input automata to be polycyclic.

\begin{theorem}\label{thm:sparseDFAINE-NP}
 The problem $\sparse\DFAINEshort$ 
 is $\NP$-complete.
\end{theorem}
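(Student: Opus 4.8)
The plan is to prove $\NP$-completeness in the two usual halves, with hardness being essentially free and membership requiring the combinatorial machinery developed above. For \textbf{hardness}, observe that every unary language is sparse, so $\unary\DFAINEshort$ is a special case of $\sparse\DFAINEshort$; thus $\NP$-hardness follows immediately from Theorem~\ref{thm:unary-dfa-np-complete}.

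For \textbf{membership}, the key is to show that whenever the intersection $L := \bigcap_{t=1}^{k} L(\mathcal{A}_t)$ is non-empty, it contains a word admitting a polynomial-size succinct certificate that can be verified in polynomial time. First I would convert each input DFA $\mathcal{A}_t$ (which recognizes a sparse language and has, say, at most $s$ states) into an equivalent polycyclic automaton in polynomial time using Lemma~\ref{lem:minimize_sparse}. Applying Proposition~\ref{prop:number_of_words_polynomial} to the first automaton yields words $w_1, \dots, w_m$ with $m$ polynomial in $s$ and each $|w_j| \le s$, such that $L(\mathcal{A}_1) \subseteq w_1^* w_2^* \cdots w_m^*$; since $L \subseteq L(\mathcal{A}_1)$, every word of $L$ has the shape $w_1^{i_1} w_2^{i_2} \cdots w_m^{i_m}$.

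Next I would bound the exponents of a shortest word $w = w_1^{i_1}\cdots w_m^{i_m} \in L$ by a pumping argument in the (conceptual, never constructed) product automaton $\mathcal{B}$ of the $\mathcal{A}_t$, which has at most $s^k$ states. If some $i_j$ exceeded $s^k$, then reading the block $w_j^{i_j}$ would revisit a state of $\mathcal{B}$, and excising the corresponding loop would replace $i_j$ by a strictly smaller value while keeping the final product state, hence producing a shorter word of $L$ and contradicting minimality. Consequently each $i_j \le s^k$, so the tuple $(i_1,\dots,i_m)$ written in binary has size $\Oh(m\,k\log s)$, which is polynomial; this tuple, together with the polynomially many short words $w_j$, is the certificate.

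The remaining, and main, technical point is \emph{verification in polynomial time}: given the exponents in binary, one must confirm $\delta_t(q_0^{(t)}, w_1^{i_1}\cdots w_m^{i_m}) \in F_t$ for every $t$ without expanding the (exponentially long) word. Processing the blocks left to right, the value $\delta_t(q, w_j^{i_j})$ is computed by following the orbit $q, \delta_t(q,w_j), \delta_t(q,w_j^2), \dots$ until it repeats, which happens after a tail and a cycle of length at most $s$; reducing $i_j$ modulo the cycle length (binary arithmetic) then gives the reached state in $\Oh(s\,|w_j|)$ further steps. Summing over all blocks and all automata keeps the verification polynomial, establishing membership in $\NP$. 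The main obstacle is precisely the interplay between the exponential word length and the succinct binary encoding: one must ensure both that a shortest witness's exponents are bounded by $s^k$ (so they fit in polynomially many bits) and that the eventual periodicity of a DFA run on $w_j^*$ lets us evaluate such succinctly encoded powers efficiently.
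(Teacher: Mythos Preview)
Your proposal is correct and follows essentially the same route as the paper: hardness via the unary special case, and membership by converting to polycyclic automata (Lemma~\ref{lem:minimize_sparse}), invoking Proposition~\ref{prop:number_of_words_polynomial} on one automaton to get a bounded-language template $w_1^*\cdots w_m^*$, and then bounding each exponent by $s^k$ via a product-automaton pumping argument so that a binary-encoded exponent tuple serves as a polynomial-size certificate. The only difference is that you spell out the verification step (tail/cycle detection on the $w_j$-orbit plus modular reduction), which the paper merely asserts can be done in polynomial time; your added observation $|w_j|\le s$ is implicit in the proof of Proposition~\ref{prop:number_of_words_polynomial} and is exactly what makes that step go through.
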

\begin{proof}
  Let $\mathcal A_i = (Q_i, \Sigma, \delta_i, q_i, F_i)$
  for $i \in \{1,2,\ldots,k\}$ be $k$ DFA's recognizing
 sparse languages with each automaton having at most $n$ states. 
 By Lemma~\ref{lem:minimize_sparse}
 we can assume these automata are polycyclic automata.
  
  Now, consider $\mathcal A_1$.
  Then, by Proposition~\ref{prop:number_of_words_polynomial},
  there exist $w_1,w_2, \ldots, w_m \in \Sigma^*$ with $m$ polynomial in $n$.
  We have
  $
   \bigcap_{i=1}^k L(\mathcal A_i) \subseteq L(\mathcal A_1) \subseteq w_1^*w_2^* \cdots w_m^*.
  $
  So, a common word has the form
  $w = w_1^{r_1} w_2^{r_2}\cdots w_m^{r_m}$.
  Now, let $i \in \{1,2,\ldots,m\}$
  and consider the prefix $w_1^{r_1} \cdots w_{i-1}^{r_{i-1}}$
  and the $k$ states in which the $k$ automata 
  are after reading this prefix.
  When reading $w_1^{r_1} \cdots w_{i-1}^{r_{i-1}}w_i^j$
  for $j = 0,1,\ldots$, these states have to
  loop after $n^k$ many steps.
  Hence, we can suppose that
  in a common word, we have $r_i \le n^k$.
  So, we can code these numbers in binary, which only requires polynomial space
  in~$k$ and in the sizes of the automata. Then, we can guess them non-deterministically
  and 
  the verification could be carried out in polynomial time.
  That the problem is \NP-hard is implied, as unary DFA's
  are polycylic.
\end{proof}

Next, as a special case, we consider strictly bounded languages.
Notice that by assuming an ordering on the alphabet, there is a natural bijection between strictly bounded regular languages and commutative regular languages, also compare \autoref{prop:comm-ordered}. However, this does not necessarily translate the complexity results that we study in this paper.


 As the problem was $\NP$-hard for unary DFA's, we get the next corollary.

\begin{corollary}
The problem \sbounded\DFAINEshort
 is $\NP$-complete.
\end{corollary}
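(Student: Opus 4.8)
The plan is to obtain both directions as immediate consequences of results already established in this section, so the corollary should require essentially no new work. For membership, I would invoke the fact recorded at the start of the section that every strictly bounded language is bounded and every bounded language is sparse. Hence any instance of $\sbounded\DFAINEshort$ is \emph{verbatim} an instance of $\sparse\DFAINEshort$: the DFA's given in the input recognize sparse languages, so the \NP-algorithm underlying Theorem~\ref{thm:sparseDFAINE-NP} applies without modification. This immediately places $\sbounded\DFAINEshort$ in \NP.

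For hardness, I would observe that a unary language $L \subseteq \{a\}^*$ is trivially strictly bounded, taking $m = 1$ and $a_1 = a$, so that every unary DFA recognizes a strictly bounded language. Consequently $\unary\DFAINEshort$ is a special case of $\sbounded\DFAINEshort$, and the identity map is a (polynomial-time, even trivial) reduction from the former to the latter. Since Theorem~\ref{thm:unary-dfa-np-complete} establishes that $\unary\DFAINEshort$ is \NP-hard, the \NP-hardness of $\sbounded\DFAINEshort$ follows at once, which is exactly the reading suggested by the sentence preceding the statement.

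There is no real obstacle here; the corollary is purely a matter of sandwiching the restricted problem between the two flanking results. The only point deserving a second glance is that the membership argument of Theorem~\ref{thm:sparseDFAINE-NP} first converts each input DFA into a polycyclic automaton via Lemma~\ref{lem:minimize_sparse}; since strictly bounded languages are in particular sparse, this preprocessing is available in exactly the same way, and the subsequent bound $r_i \le n^k$ on the exponents of a common word $w = w_1^{r_1} w_2^{r_2} \cdots w_m^{r_m}$ carries over unchanged. Thus nothing beyond citing the two earlier theorems is needed.
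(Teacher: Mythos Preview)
Your proposal is correct and matches the paper's reasoning exactly: membership follows because strictly bounded languages are sparse, so Theorem~\ref{thm:sparseDFAINE-NP} applies, and hardness follows because unary DFA's are a special case of strictly bounded DFA's, so Theorem~\ref{thm:unary-dfa-np-complete} applies. The paper in fact states only the hardness half explicitly (the sentence ``As the problem was $\NP$-hard for unary DFA's, we get the next corollary''), leaving the membership implicit from the preceding theorem on sparse languages.
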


\subsection{Parameterized Complexity}

Apart from different interesting parameterizations, we again encounter the case that some parts of the input are considered fixed. 


\begin{propositionrep}\label{prop:sbounded-W2} The following inclusion chain holds:\\
$\W[2]\subseteq [(\sbounded\BDFAINEshort,\kappa_{\ell})]^{\fpt}\subseteq [(\bounded\BDFAINEshort,\kappa_\ell)]^{\fpt}\subseteq \W[\sync]$.
\end{propositionrep}

The proof of this proposition can be seen as a modification of  the \W[2]-hardness proof of Wareham for $(\DFAINEshort,\kappa_{Q,\ell})$. 
Observe that the automata that result from this proof are also partially ordered in the sense of \cite{DBLP:journals/jcss/BrzozowskiF80}.

\begin{toappendix}
\begin{proof}
We have to prove that $(\sbounded\DFAINEshort,\kappa_\ell)$ is \W[2]-hard. Consider the well-known \W[2]-hard problem \textsc{Hitting Set}, parameterized by solution size. An instance of \textsc{Hitting Set} consists in a hypergraph $G=(V,E)$ and a positive integer~$k$. Define an arbitrary ordering on $V$, reflected by the indices $V=\{v_1,v_2,\dots,v_n\}$.
The task is to find a selection $v_{i_1},v_{i_2},\dots,v_{i_k}$ from $V$ such that each hyperedge $e\in E$ contains some $v_{i_j}$.
Now, construct $|E|$ many  DFAs with input alphabet $V$ and state set $(V\cup\{v_0\})\times\{0,1\}$, where $V\times\{1\}$ is the set of final states and $(v_0,0)$ is the initial state. Upon reading symbol $v_j$, DFA $\mathcal{A}_e$ (for $e\in E$) moves from state $(v_i,0)$ (with $i<j$) into $(v_j,0)$ if $v_j\notin E$, but into $(v_j,1)$ if $v_j\in E$; also, $\mathcal{A}_e$ moves from state $(v_i,1)$ (with $1\leq i<j$) into $(v_j,1)$. All other transitions go into the sink state $(v_0,1)$. Obviously, $L(\mathcal{A}_e)=\{w\in v_1^*v_2^*\cdots v_n^*\mid \{v\in V\mid |w|_v=1\}=e\}$.
\end{proof}
 \end{toappendix}


However, if we have a fixed set of words $w_1,w_2, \ldots, w_m$ for which we draw languages
from $w_1^*w_2^*\cdots w_m^*$, then the problem is $\W[1]$-complete, as we prove next. The details of its quite delicate proof can be found in the appendix. This result has a number of consequences, as we clarify first.

\begin{theorem}\label{thm:sparse-DFAINE-W1}
 Let $w_1, w_2,\ldots, w_m$ be fixed words.
 Then, $\DFAINEshort$ for automata accepting languages in $w_1^*w_2^* \cdots w_m^*$, parameterized by $\kappa_{\mathbb{A}}$,
 is $\W[1]$-complete.
\end{theorem}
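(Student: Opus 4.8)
The plan is to prove the two directions separately, obtaining $\W[1]$-hardness from \textsc{Clique} and $\W[1]$-membership by reducing to $(\TNEJshort,\kappa_{\text{\#Tab}})$, which is $\W[1]$-complete by the corollary to \autoref{lem:Multicolored-clique to non-empty join}. For hardness, note that the unary alphabet is the special case $m=1$, $w_1=a$: a unary DFA accepts precisely a language contained in $a^*$, so \autoref{lem:CliqueToDFA} already gives $\W[1]$-hardness. For an arbitrary fixed sequence containing a nonempty word $w_j$, I would simply relabel the single letter $a$ by $w_j$, so that the constructed automata read powers of $w_j$ and accept languages in $w_j^*\subseteq w_1^*\cdots w_m^*$; the state count grows only by the constant factor $|w_j|$ and the number of automata (the parameter) is unchanged.

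For membership, let $k$ DFAs $\mathcal A_1,\dots,\mathcal A_k$, each with at most $n$ states and each accepting a language in $w_1^*\cdots w_m^*$, be given, and recall that $m$ is a \emph{fixed} constant. A commonly accepted word must have the shape $w_1^{e_1}w_2^{e_2}\cdots w_m^{e_m}$, so the task is to decide whether some exponent tuple $(e_1,\dots,e_m)$ is accepted by every automaton. Reading $w_j^{e_j}$ from a fixed state traces an ultimately periodic sequence of states, so $L(\mathcal A_i)\cap w_1^*\cdots w_m^*$ decomposes into a union of $\Oh(n^m)$ \emph{cyclic sets}, each prescribing for every coordinate $j$ either a fixed value at most $n$ or an arithmetic progression $e_j\equiv\rho_{i,j}\pmod{c_{i,j}}$ with $0\le\rho_{i,j},c_{i,j}\le n$ together with the lower bound $e_j\ge\rho_{i,j}$. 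As $m$ is constant, this decomposition has polynomially many members and is computable in polynomial time. Membership of a tuple in a cyclic set factorises over the coordinates; hence, choosing one cyclic set $C_i$ from each automaton, the intersection $\bigcap_i C_i$ is nonempty if and only if for each coordinate $j$ separately the per-coordinate exponent sets share an element.

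The reduction then builds one table $T_i$ per automaton, with one row per cyclic set of $\mathcal A_i$. For every coordinate $j$ and every prime power $p^e\le n$ I introduce a group of $\lceil\log_2 p^e\rceil$ binary columns, in which a row writes the binary encoding of $\rho_{i,j}\bmod p^e$ when $p^e\mid c_{i,j}$ and the null symbol $*$ otherwise. By the Generalized Chinese Remainder Theorem \cite{MorRehWel2020}, two rows are compatible in the sense of \TNEJshort exactly when their congruence systems are simultaneously solvable, i.e.\ when the corresponding cyclic sets meet; for coordinates whose chosen progressions all satisfy $c_{i,j}>0$ the lower bounds cause no trouble, as a simultaneous solution may be taken arbitrarily large. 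Coordinates forced to a short, fixed exponent ($c_{i,j}=0$) are dealt with by a small/large dichotomy: since $m$ is constant there are only $\Oh(n^m)$ ways to choose which coordinates carry a bounded exponent and which value, and for each such pattern the remaining coordinates lie on cycles, reducing to the pure-congruence case; these polynomially many instances are merged into a single \TNEJshort instance through additional selector columns forcing all tables to commit to the same pattern, precisely the ``enlarged column label alphabet'' device. The resulting instance has $\Oh(n^m)$ rows and $\Oh(mn\log n)$ columns per table, both polynomial, so the reduction is fpt.

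The main obstacle is this last encoding step: \TNEJshort tests only \emph{pairwise} compatibility via matching column entries, whereas feasibility of the exponent system is a global condition mixing congruences, lower bounds and fixed coordinates. The work lies in invoking the Generalized Chinese Remainder Theorem to turn global solvability of the progressions into pairwise agreement on shared prime-power moduli, and in verifying that the threshold and acyclic-coordinate subtleties are absorbed by the case split and the selector columns without inflating the instance beyond polynomial size — which relies throughout on $m$ being fixed so that $n^m$ remains polynomial.
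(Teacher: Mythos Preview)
Your hardness direction coincides with the paper's: both invoke the unary special case via \autoref{lem:CliqueToDFA}.

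For membership, however, you and the paper take genuinely different routes. The paper reduces directly to \textsc{Multicolored Clique}: it builds a $(km+1)$-partite graph whose vertices $v_{i,r_1,\dots,r_m}$ record, for automaton~$\mathcal A_i$, the state reached after $w_1^{r_1}\cdots w_j^{r_j}$ (with $r_j$ the last nonzero index); ``horizontal'' edges inside the $i$-th row encode the per-automaton transition between consecutive blocks, ``vertical'' edges across rows encode compatibility of the associated congruences and thresholds via the Generalized Chinese Remainder Theorem, and one extra vertex~$f$ enforces finality. The $2^m$ subcases (whether each $w_j$ occurs at least once) are then fused into a single clique instance.

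Your route through $(\TNEJshort,\kappa_{\text{\#Tab}})$ is sound. Because each DFA has at most $n$ states, the set of accepted exponent tuples decomposes into $\Oh(n^m)$ \emph{products} of singletons and arithmetic progressions (indexed by the sequence of ``exit'' states after each block), and the prime-power column encoding turns the pairwise condition in the Generalized Chinese Remainder Theorem into the pairwise row-compatibility test of \TNEJshort. Your small/large case split --- crucially, a split on the \emph{value} of each $e_j$ in the putative common word, not on the type of the chosen cyclic set --- correctly absorbs the threshold inequalities and the tail/cycle mixture; the selector-column merge keeps everything polynomial because $m$ is fixed.

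What each approach buys: the paper's construction keeps the threshold $x_j\ge r_{i,j}$ and the tail/cycle distinction explicit in three edge sets $E_5^{C,C}, E_5^{C,T}, E_5^{T,T}$, so no external case split on exponent size is required --- at the price of a larger graph and a rather delicate chain of auxiliary claims. Your \TNEJshort route is conceptually lighter and reuses an already established $\W[1]$-complete problem, but pays for this with an $(n{+}1)^m$-fold case split that must then be glued back together. Either argument proves the theorem.
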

As strictly bounded languages are a special case, we can conclude:

\begin{corollary}\label{cor:sbounded-DFAINE-W1}
For each fixed alphabet~$\Sigma$, 
$(\sbounded\Sigma\text{-}\DFAINEshort,\kappa_{\mathbb{A}})$ is $\W[1]$-complete.
\end{corollary}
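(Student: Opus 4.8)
The plan is to read Corollary~\ref{cor:sbounded-DFAINE-W1} off directly from Theorem~\ref{thm:sparse-DFAINE-W1}, the only new ingredient being the observation that strictly bounded languages are exactly the languages realized by \emph{single-letter} fixed words. Fix the enumeration $\Sigma = \{a_1, a_2, \ldots, a_m\}$ underlying the strictly bounded restriction (so $m = |\Sigma|$ is a constant, the alphabet being fixed), and take as the fixed words of the theorem the letters themselves, $w_i := a_i$ for $i \in \{1, \ldots, m\}$. Then $w_1^* w_2^* \cdots w_m^* = a_1^* a_2^* \cdots a_m^*$, and by the very definition of strict boundedness a DFA over $\Sigma$ recognizes a strictly bounded language precisely when its language is contained in $a_1^* a_2^* \cdots a_m^*$.

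With this identification, an instance of $(\sbounded\Sigma\text{-}\DFAINEshort, \kappa_{\mathbb{A}})$ is verbatim an instance of the problem of Theorem~\ref{thm:sparse-DFAINE-W1} for these particular $w_1, \ldots, w_m$, and the parameter $\kappa_{\mathbb{A}}$ (the number of automata) is the same on both sides. Hence $\W[1]$-membership transfers immediately. For hardness, I would note that the $\W[1]$-hard instances produced in the proof of the theorem for $w_i = a_i$ are already DFA's whose languages lie in $a_1^* \cdots a_m^*$, i.e.\ strictly bounded languages over $\Sigma$; they are thus legitimate instances of $(\sbounded\Sigma\text{-}\DFAINEshort, \kappa_{\mathbb{A}})$ and the reduction witnessing hardness applies unchanged. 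Both directions together give $\W[1]$-completeness.

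The one point worth addressing is the quantifier hidden in ``strictly bounded'': if one allows each input automaton to be bounded with respect to its \emph{own} ordering of $\Sigma$, the instantiation above does not literally apply, since the theorem fixes a single order. This is harmless, and handling it is the only genuine (minor) obstacle. Any common word must in particular lie in $L(\mathcal{A}_1)$, hence in the bounded set $a_{\pi(1)}^* \cdots a_{\pi(m)}^*$ for some valid ordering $\pi$ of $\mathcal{A}_1$ (such a $\pi$ is found by testing the constantly many $m!$ orderings, each check being a polynomial-time containment test). Replacing every $L(\mathcal{A}_i)$ by its intersection with this one bounded set — a polynomial-time operation on DFA's that leaves the number of automata, and thus the parameter, unchanged — returns us to the fixed-order situation and lets us invoke Theorem~\ref{thm:sparse-DFAINE-W1}. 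I therefore expect the corollary to require no new work beyond this definitional bookkeeping; all the difficulty resides in the theorem it specializes.
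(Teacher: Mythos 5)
Your proposal is correct and takes essentially the same route as the paper: the paper obtains the corollary precisely by specializing Theorem~\ref{thm:sparse-DFAINE-W1} to the single-letter words $w_i = a_i$ for the fixed enumeration $\Sigma=\{a_1,\dots,a_m\}$, with $\W[1]$-hardness already supplied by the unary instances of Lemma~\ref{lem:CliqueToDFA} (padded trivially to alphabet $\Sigma$). Your closing paragraph on per-automaton orderings is sound but covers a reading the paper does not take — the fixed alphabet carries a fixed order, so all input automata are assumed bounded with respect to the same product $a_1^* a_2^* \cdots a_m^*$ — though your $m!$-case dispatch plus intersection handles the more liberal interpretation correctly and at no parameterized cost.
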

Hence, we can also deduce the last result from the previous section, which is \autoref{thm:commSigmaDFAINE-W1}.

The following sketch concerns \autoref{cor:sbounded-DFAINE-W1}, as this avoids several intricacies found in the proof of the main theorem of this section, which is \autoref{thm:sparse-DFAINE-W1}. We refer to the appendix.
\begin{proofsketch}
\W[1]-hardness follows with \autoref{lem:CliqueToDFA}. For membership, we present a reduction to \textsc{Multicolored Clique}. To this end, let $\Sigma=\{a_1,a_2,\dots,a_m\}$ be an ordered alphabet. Let $\mathcal{A}_1$, $\mathcal{A}_2$, \dots, $\mathcal{A}_k$ be minimal  DFA's accepting subsets $L_i$ ($1\leq i\leq k$) of $a_1^+a_2^+\cdots a_m^+$. Ignoring a possible trash state, the state set $Q_i$ of $\mathcal{A}_i$ (with at most $n$ states) can be decomposed into $Q_i^1$, $Q_i^2$, \dots, $Q_i^m$ such that \begin{itemize}
    \item all transitions between states within $Q_i^j$ are labelled $a_j$;
    \item transitions leaving states from $Q_i^j$ must enter $Q_i^{j+1}$ and must be labelled $a_{j+1}$;
    \item transitions entering states from $Q_i^j$ must leave $Q_i^{j-1}$ and must be labelled $a_{j}$.
\end{itemize}
This is mainly due to our assumption that $L_i\subseteq a_1^+a_2^+\cdots a_m^+$. Observe that in general $L_i\subseteq a_1^*a_2^*\cdots a_m^*$ is possible, but this just leads to more cases. Let $\Sigma^{ j,\leq n}=\{w\in  a_1^+a_2^+\cdots a_j^+\mid |w|_{a_r}\leq n \text { for } r\leq j\}$. Let  $\Sigma^{\leq m,\leq n}=\bigcup_{j=1}^m\Sigma^{\leq j,\leq n}$.
Notice that $|\Sigma^{\leq m,\leq n}|\leq mn^m$ is a polynomial, as $m$ is fixed.

The graph $G=(V,E)$ that we construct is $(mk+1)$-partite. $V=\bigcup_{i=1}^k\{i\}\times \Sigma^{\leq m,\leq n}\cup\{f\}$.
Consider $\phi:\{1,2,\dots,k\}\times \Sigma^{\leq m,\leq n}\to Q_i, (i,w)\mapsto \delta_i(q_i,w)$, where $q_i$ is the initial state of $\mathcal{A}_i$ and $\delta_i$ is the transition function of $\mathcal{A}_i$. 
Then, the $mk$ independent sets of $V\setminus\{f\}$ are just the sets $V_{i,j}=\phi^{-1}(Q_i^j)$. The edge set decomposes into
\begin{itemize}
    \item $E_1$: connect $u\in V_{i,j}$ with $v\in V_{i',j'}$ if $i\neq i'$ and $j\neq j'$;
    \item $E_2$: connect $u\in V_{i,j}$ with $v\in V_{i,j'}$ if $|j-j'|\geq 2$;
    \item $E_3$: connect $(i,w)\in V_{i,m}$ with $f$ if $\phi(i,w)\in F_i$, the final state set of $\mathcal{A}_i$; always connect $(i,w)\in V_{i,j}$ with $f$ if $j<m$.
    \item $E_4$: connect $(i,w)\in V_{i,j-1}$ with $(i,w'a_j^r)\in V_{i,j}$ if $\delta_i(\phi(i,w),a_j^r)=\phi(i,w'a_j^r)$ for $w,w'\in \Sigma^{ j-1,\leq n}$;
    \item $E_5$: connect $(i,w)\in V_{i,j}$ with $(i',w)\in V_{i',j}$ (for $i\neq i'$) if either $(i,w)$ and $(i',w)$ are both cycle vertices and they agree on their cycle equation, or $(i,w)$ is a tail vertex whose equation also satisfies that of  $(i',w)$.
\end{itemize}
Here,  $(i,w)\in V_{i,j}$ is a cycle vertex if $w=va_j^{r_{i,j}}$ for some $v\in  V_{i,j-1}$ (with $V_{i,0}=\{\varepsilon\}$) and there is a smallest possible $N(i,w)>0$ such that $\phi(i,w)=\phi(i,wa_j^{N(i,w)})$. The cycle equation would then be $x_j\equiv r_{i,j}\pmod{N(i,w)}$ and $x_j\geq r_{i,j}$. If $(i,w)\in V_{i,j}$ is not a cycle vertex, it is a tail vertex and this results in the tail equation $x_j=r_{i,j}$.  Cycle vertices $(i,w)$ and $(i',w)$ agree on their cycle equation if $r_{i,j}\equiv r_{i',j}\pmod{\gcd(N(i,w),N(i',w))}$. The claim that $G=(V,E)$ has a $(mk+1)$-clique if and only if the intersection $\bigcap_{i=1}^kL_i$ is not empty follows by applying the Generalized Chinese Remainder Theorem.

One can finally get away with the assumption that $L_i\subseteq a_1^+a_2^+\cdots a_m^+$ by considering all $2^m$ cases if (or if not) $|w|_{a_j}=0$ is possible. As $m$ is constant, it is possible to combine all these cases into a single big graph that has a multicolored clique if and only if $\bigcap_{i=1}^kL_i\neq \emptyset$.
\end{proofsketch}

\section{Intersecting Nondeterministic Finite Automata}

In this section, we consider various parameterizations of \BNFAINEshort.
This leads us to the following table that summarizes the state of the art and also points to several open classification problems.
$$\scalebox{.81}{$\begin{array}{|llllllllll|}\hline
   \kappa_{\mathbb{A}}  & \kappa_{Q} & \kappa_{\Sigma} &\kappa_{\ell} & \kappa_{\mathbb{A},Q} & \kappa_{\mathbb{A},\Sigma}& \kappa_{\mathbb{A},\ell} & \kappa_{Q,\Sigma}& \kappa_{Q,\ell}& \kappa_{\Sigma,\ell}\\
    \WNL\text{-h.} & {}\notin^*\XP              & {}\notin^*\XP    & \W[\sync]\text{-h.} & \FPT & \WNL\text{-h.} & \W[1]\text{-c.} & \FPT&\W[2]\text{-h.}& \FPT\\
    \co\W[2]\text{-h.}&&&
    \in\WNL\cap\A[3]&&&&&&\\
    \cite{Gui2011a}, \text{Propos.~\ref{prop:nfa_emptiness_w2}} &   (\cite{War2001})              & \cite{War2001}  & \cite{BruFer2020}, \text{Lemmas~\ref{lem:NFAintersec_ell_WNL}\&\ref{lem:NFAintersec_ell_A3}} &\cite{War2001} & \cite{Gui2011a}&\cite{War2001} &\cite{War2001} &\cite{War2001}&\cite{War2001}\\\hline
\end{array}$}$$

\begin{lemma}\label{lem:NFAintersec_ell_WNL}
$(\BNFAINEshort,\kappa_{\ell})\in\WNL$.
\end{lemma}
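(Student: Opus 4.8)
The plan is to establish membership by giving an fpt-reduction from $(\BNFAINEshort,\kappa_\ell)$ to the canonical $\WNL$-complete problem. That is, from an instance $(\mathbb{A},\ell)$ I would build a nondeterministic single-tape Turing machine $M$, a step bound $s$ written in unary, and a cell bound $c$, so that $M$ accepts the empty word within $s$ steps while visiting at most $c$ tape cells if and only if $(\mathbb{A},\ell)$ is a \yes-instance, with $c$ a function of the parameter $\ell$ alone. The point to exploit is that in the $\WNL$ formulation the machine $M$ is itself part of the input, so both its finite control and its tape alphabet may depend on $(\mathbb{A},\ell)$; only the \emph{number of visited cells} is charged against the parameter.

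The machine I would construct works in two phases. In the first phase $M$ guesses a word $w=w_1\cdots w_\ell\in\Sigma^\ell$, writing each letter (a single tape symbol) into one of the first $\ell$ cells, delimited by two endmarkers, so that exactly $\ell+2$ cells are ever written. Writing $\mathbb{A}=\{\mathcal{A}_1,\dots,\mathcal{A}_k\}$ with $\mathcal{A}_i=(Q_i,\Sigma,\delta_i,q_i,F_i)$, in the second phase $M$ processes the automata one at a time: for each $i$ it guesses an initial state of $\mathcal{A}_i$, then sweeps the head left-to-right over $w$ and extends a guessed run of $\mathcal{A}_i$ on $w$, keeping the pair $(i,q)$ of the current automaton index and the current state \emph{inside its control state}. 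At cell $j$ it reads $w_j$, nondeterministically picks a transition $(q,w_j,q')\in\delta_i$, and moves to control state $(i,q')$; after cell $\ell$ it checks $q'\in F_i$, rewinds the head to the first cell using the endmarker, and continues with $\mathcal{A}_{i+1}$. Since there are only polynomially many pairs $(i,q)$, this instance-dependent control is constructible in polynomial time.

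The crucial accounting is spatial. The only cells $M$ ever touches are the $\ell+2$ holding $w$ and its delimiters: the position within $w$ is encoded by the head location, and the automaton index together with the current state are carried in the control, never on the tape. Hence I can take $c=\ell+2$, a function of $\ell$ alone, as $\WNL$ demands. The running time is $O(\ell)$ for guessing plus $O(\ell)$ per automaton for one forward sweep and one rewind, i.e.\ $s=\Oh(k\ell)$ steps, which is polynomial and can be written in unary, so the construction is an fpt-reduction (in fact a polynomial-time one mapping $\ell\mapsto\ell+2$ on the parameter). I expect the main obstacle to be exactly this separation of concerns: $k$ and the state sets $Q_i$ are unbounded in terms of $\ell$, so any approach that stored a state, a subset of states, or an automaton index in \emph{extra} cells would blow the cell budget; the argument succeeds only because the large tape alphabet and the instance-dependent finite control absorb all of this bookkeeping, leaving the tape to hold nothing but the length-$\ell$ candidate word. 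Nondeterminism of the automata is then handled for free, since $M$ merely guesses the run, so the construction applies to NFA's (and a fortiori to DFA's).
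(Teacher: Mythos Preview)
Your proposal is correct and follows essentially the same approach as the paper's sketch: build an instance-dependent nondeterministic single-tape Turing machine that first guesses a candidate word of length~$\ell$ on the tape and then, automaton by automaton, verifies acceptance while keeping the automaton index and current state in the finite control, so that only $\ell+\Oh(1)$ cells are ever visited. The paper's proof is terser but records exactly this idea; your version simply spells out the bookkeeping (endmarkers, sweeps, step count) in more detail.
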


\begin{proof}
 (Sketch) Let $\mathbb{A}$ be a collection of NFAs and $\ell\in\mathbb{N}$.
We can design a nondeterministic one-tape Turing machine $M$ (starting on the empty input) that first guesses a string $w$ of length $\ell$ and then verifies, for each automaton $\mathcal{A}\in\mathbb{A}$, that $\mathcal{A}$ accepts~$w$. Only if all these simulations succeed, $M$ will accept. As required, $M$ needs only space~$\ell$. 
\end{proof}

However, it is not clear to us how we could limit the number of guessing steps in the simulation of the previous proof, i.e., it is an \textbf{open question} if $(\BNFAINEshort,\kappa_{\ell})$ belongs to $\WP$. Also, it is an  \textbf{open question} if $(\BNFAINEshort,\kappa_{\ell})$ belongs to $\A[2]$, because it is not clear how to make this simulation work with only one switch from existential to universal states. Without going into details here, let us mention that with one further switch from universal to existential states, an alternating Turing machine could be designed that simulates a given \BNFAINEshort instance in time $f(k)$, which gives (see \cite{FluGro2006} for details) the following result.

\begin{lemma}\label{lem:NFAintersec_ell_A3}
$(\BNFAINEshort,\kappa_{\ell})\in\A[3]$.\qedhere
\end{lemma}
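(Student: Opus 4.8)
The plan is to design an alternating single-tape Turing machine $M$ that starts on the empty input, accepts in time bounded by a function $f(\ell)$ of the parameter only, and accepts if and only if the given \BNFAINEshort instance $(\mathbb{A},\ell)$ is a Yes-instance. By the Turing-machine characterization of $\A[3]$ (an alternating single-tape machine whose initial state is existential and that switches blocks of quantifier type at most twice, i.e. existential$\to$universal$\to$existential, running in $f(k)$ steps; see \cite{FluGro2006}), exhibiting such a machine places the problem in $\A[3]$.

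First I would have $M$, in its initial existential phase, guess the common witness word $w\in\Sigma^*$ of length $\ell$, writing it onto the tape; this costs $\Oh(\ell)$ steps and uses only the first existential block. Next, $M$ switches once into a universal phase and universally branches over all automata $\mathcal{A}\in\mathbb{A}$: on each branch it must verify that this particular $\mathcal{A}$ accepts $w$. Since each $\mathcal{A}$ is nondeterministic, checking acceptance of the fixed word $w$ is itself an existential task — guess an accepting run of $\mathcal{A}$ on $w$, i.e.\ a sequence of at most $\ell+1$ states consistent with the transition relation and ending in a final state. Hence $M$ switches a second time, back into an existential phase, to guess this run. This yields exactly the pattern $\exists\,\forall\,\exists$, which is three alternation blocks with two switches, matching $\A[3]$.

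The key point for the time bound is that every phase after the first takes only $\Oh(\ell)$ steps, independent of the number of automata or the number of states: once $M$ has universally branched to a single automaton $\mathcal{A}$, guessing and verifying a length-$\ell$ run touches only $\Oh(\ell)$ tape cells and performs $\Oh(\ell)$ transition lookups against $\mathcal{A}$'s hard-wired transition relation. The sizes of the automata and $|\mathbb{A}|$ enter only through the branching degree and the machine's description, not through the running time along any single computation branch, so the total time on each branch is $f(\ell)=\Oh(\ell)$, as required for membership in $\A[3]$.

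The main obstacle, as the surrounding text already signals, is precisely the alternation structure: an NFA acceptance check is inherently existential, so one cannot fold it into the universal ``for all automata'' block without a further switch. This is why the problem does not obviously sit in $\A[2]$ (one switch) — the natural machine needs the extra existential layer to resolve nondeterminism of each individual automaton. I would make sure the second switch is genuinely unavoidable in my construction but harmless for $\A[3]$, and take care that the universal branching over $\mathbb{A}$ and the subsequent existential run-guessing are arranged so that no guessed information from the first existential block about the run of one automaton leaks into the verification of another; each universal branch re-reads the shared word $w$ from the tape and guesses its own accepting run independently. Verifying that the transition relation, the initial and final state sets of each automaton can be encoded into $M$'s finite control so that each single-branch computation respects the $f(\ell)$ time bound completes the argument.
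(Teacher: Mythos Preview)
Your argument is correct and is precisely the approach the paper has in mind: guess the word existentially, universally branch over the automata, then existentially guess an accepting run of the chosen NFA on the fixed word, with the automata hard-wired into $M$'s finite control so that each computation path runs in $\Oh(\ell)$ steps. The paper only hints at this with the phrase ``with one further switch from universal to existential states''; you have spelled out exactly that construction.
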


As the largest known common subclass of \WNL\ and of \A[3] is \W[3], the question if $(\BNFAINEshort,\kappa_{\ell})$, belongs to \W[3] is also an interesting \textbf{open question}.

Next, we establish a hardness result for $\co\W[2]$, which even holds for the bounded variant
of the non-emptiness intersection problem. Furthering our previous discussions, notice that $\co\W[2]\subseteq\A[3]$, but little is known about the complement classes of the \W-hierarchy. 

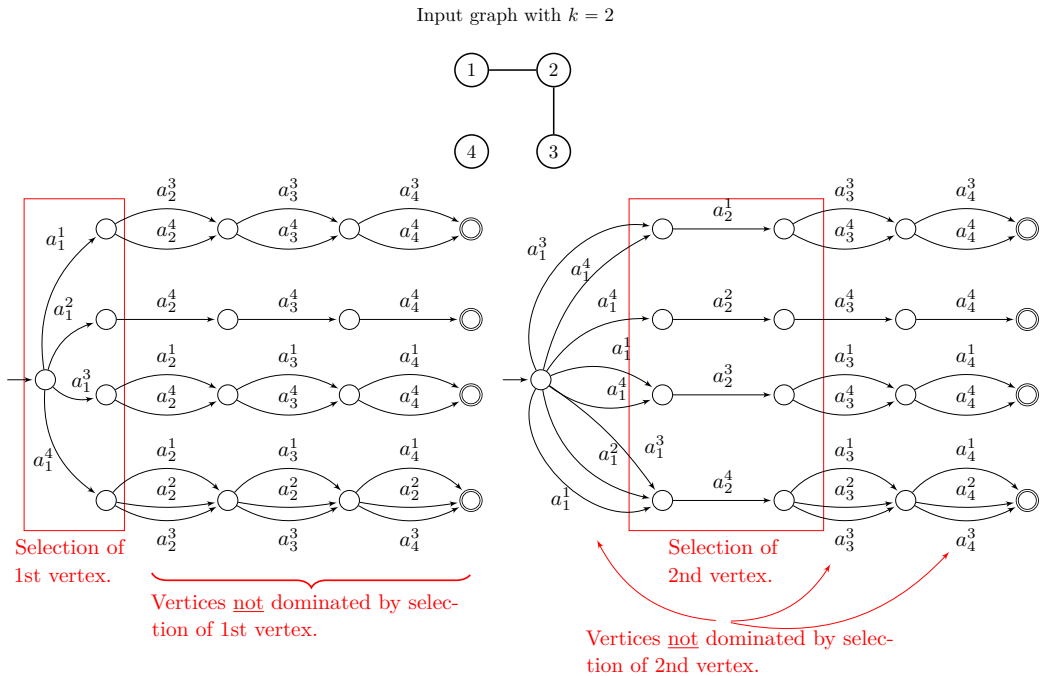
\begin{figure}[htb]
     \centering
    \scalebox{.72}{    
 \begin{tikzpicture}[node distance={15mm}, thick, main/.style = {draw, circle}]  
 \node[main] (1) {$1$};
 \node[main] (2) [right of=1] {$2$};
 \node[main] (3) [below of=2] {$3$};
 \node[main] (4) [left of=3]  {$4$};
 \node (text) at (0.8,1) {Input graph with $k=2$};
 \draw (1) -- (2);
 \draw (2) -- (3);
 \end{tikzpicture}}
 \scalebox{.8}{
 \begin{tikzpicture}[>=latex',shorten >=1pt,node distance=1.5cm and 2cm,on grid,auto]
 \tikzset{every state/.style={minimum size=1pt},initial text={}}
 \node[state,initial] (start) {};
 \node[state] (strand21) [above right= 1cm and 1cm of start] {};
 \node[state] (strand11) [above = of strand21] {};
 \node[state] (strand31) [below right= 0.25cm and 1cm of start] {};
 \node[state] (strand41) [below = 1.75cm of strand31] {};
 
 \node[state] (strand12) [right = of strand11] {};
 \node[state] (strand13) [right = of strand12] {};
 \node[state,accepting] (strand14) [right = of strand13] {};

 \node[state] (strand22) [right = of strand21] {};
 \node[state] (strand23) [right = of strand22] {};
 \node[state,accepting] (strand24) [right = of strand23] {};
 
 \node[state] (strand32) [right = of strand31] {};
 \node[state] (strand33) [right = of strand32] {};
 \node[state,accepting] (strand34) [right = of strand33] {};
 
 \node[state] (strand42) [right = of strand41] {};
 \node[state] (strand43) [right = of strand42] {};
 \node[state,accepting] (strand44) [right = of strand43] {};
 
 \path[->] (start) edge [bend left,pos=.8] node {$a_1^1$}   (strand11)
           (start) edge [bend left,pos=.8] node {$a_1^2$} (strand21)
           (start) edge [bend right,pos=.7] node {$a_1^3$} (strand31)
           (start) edge [bend right,pos=.6,left] node {$a_1^4$} (strand41);

 \path[->] (strand11) edge [bend left]  node {$a_2^3$} (strand12)
           (strand11) edge [bend right] node {$a_2^4$} (strand12);
           
 \path[->] (strand12) edge [bend left]  node {$a_3^3$} (strand13)
           (strand12) edge [bend right] node {$a_3^4$} (strand13);
           
 \path[->] (strand13) edge [bend left]  node {$a_4^3$} (strand14)
           (strand13) edge [bend right] node {$a_4^4$} (strand14);
           
 \path[->] (strand21) edge node {$a_2^4$} (strand22)
           (strand22) edge node {$a_3^4$} (strand23)
           (strand23) edge node {$a_4^4$} (strand24);
           
 \path[->] (strand31) edge [bend left]  node {$a_2^1$} (strand32)
           (strand31) edge [bend right] node {$a_2^4$} (strand32);
 \path[->] (strand32) edge [bend left]  node {$a_3^1$} (strand33)
           (strand32) edge [bend right] node {$a_3^4$} (strand33);
 \path[->] (strand33) edge [bend left]  node {$a_4^1$} (strand34)
           (strand33) edge [bend right] node {$a_4^4$} (strand34);
           
 \path[->] (strand41) edge [bend angle=50,bend left]  node {$a_2^1$} (strand42)
           (strand41) edge [bend angle=10,bend right] node {$a_2^2$} (strand42)
           (strand41) edge [bend right,below] node {$a_2^3$} (strand42);
 \path[->] (strand42) edge [bend angle=50,bend left]  node {$a_3^1$} (strand43)
           (strand42) edge [bend angle=10,bend right] node {$a_2^2$} (strand43)
           (strand42) edge [bend right,below] node {$a_3^3$} (strand43);
 \path[->] (strand43) edge [bend angle=50,bend left]  node {$a_4^1$} (strand44)
           (strand43) edge [bend angle=10,bend right] node {$a_2^2$} (strand44)
           (strand43) edge [bend right,below] node {$a_4^3$} (strand44);
           
 \node[state,initial] (startb) [right = 8.15cm of start] {};
 \node[state] (strand21b) [above right = 1cm and 2cm of startb] {};
 \node[state] (strand11b) [above  = of strand21b] {};
 \node[state] (strand31b) [below right= 0.25cm and 2cm of startb] {};
 \node[state] (strand41b) [below = 1.75cm of strand31b] {};
 
 \node[state] (strand12b) [right = of strand11b] {};
 \node[state] (strand13b) [right = of strand12b] {};
 \node[state,accepting] (strand14b) [right = of strand13b] {};

 \node[state] (strand22b) [right = of strand21b] {};
 \node[state] (strand23b) [right = of strand22b] {};
 \node[state,accepting] (strand24b) [right = of strand23b] {};
 
 \node[state] (strand32b) [right = of strand31b] {};
 \node[state] (strand33b) [right = of strand32b] {};
 \node[state,accepting] (strand34b) [right = of strand33b] {};
 
 \node[state] (strand42b) [right = of strand41b] {};
 \node[state] (strand43b) [right = of strand42b] {};
 \node[state,accepting] (strand44b) [right = of strand43b] {};
 
 \path[->] (strand12b) edge [bend left] node  {$a_3^3$} (strand13b)
           (strand12b) edge [bend right] node {$a_3^4$} (strand13b)
           (strand13b) edge [bend left] node  {$a_4^3$} (strand14b)
           (strand13b) edge [bend right] node {$a_4^4$} (strand14b);
 
 \path[->] (strand22b) edge node {$a_3^4$} (strand23b)
           (strand23b) edge node {$a_4^4$} (strand24b);
 
 \path[->] (strand32b) edge [bend left]  node {$a_3^1$} (strand33b)
           (strand32b) edge [bend right] node {$a_3^4$} (strand33b)
           (strand33b) edge [bend left]  node {$a_4^1$} (strand34b)
           (strand33b) edge [bend right] node {$a_4^4$} (strand34b);
 
 \path[->] (strand42b) edge [bend angle=50,bend left]  node {$a_3^1$} (strand43b)
           (strand42b) edge [bend angle=10,bend right] node {$a_3^2$} (strand43b)
           (strand42b) edge [bend right,below] node {$a_3^3$} (strand43b)
           (strand43b) edge [bend angle=50,bend left]  node {$a_4^1$} (strand44b)
           (strand43b) edge [bend angle=10,bend right] node {$a_4^2$} (strand44b)
           (strand43b) edge [bend right,below] node {$a_4^3$} (strand44b);
 
 \path[->] (startb) edge [bend angle=65, bend left]  node {$a_1^3$} (strand11b)
           (startb) edge [bend angle=25, bend left,above]  node {$a_1^4$} (strand11b);
           
 \path[->] (startb) edge [bend left,pos=.8] node {$a_1^4$} (strand21b);
 
 \path[->] (startb) edge [bend left]  node {$a_1^1$} (strand31b)
           (startb) edge [bend right] node {$a_1^4$} (strand31b);
           
 \path[->] (startb) edge [bend angle=75, bend right, below] node {$a_1^1$} (strand41b)
           (startb) edge [bend angle=35, bend right,pos=.6] node {$a_1^2$} (strand41b)
           (startb) edge [bend angle=10, bend left,pos=.8] node {$a_1^3$} (strand41b);
  
 \path[->] (strand11b) edge node {$a_2^1$} (strand12b)
           (strand21b) edge node {$a_2^2$} (strand22b)
           (strand31b) edge node {$a_2^3$} (strand32b)
           (strand41b) edge node {$a_2^4$} (strand42b);

             \begin{pgfonlayer}{background}
                \draw [draw=red] (-0.35,-2.5) rectangle (1.3,3);
                \draw [draw=red] (9.6,-2.5) rectangle (12.8,3);
                
                
                
                \node[text=red,text width=2cm] at (0.5,-3) {Selection of 1st vertex.};
                \draw [thick,red,decorate,decoration={brace,mirror,amplitude=7pt}] (1.75,-3.2) --  (7,-3.2);
                \node[text=red,text width=5cm] at (4.25,-3.9) {Vertices \underline{not} dominated by selection of 1st vertex.};
                
                \node (start) at (11.15,-4) {};
                \node (t1)    at (9, -2.5) {};
                \node (t2)    at (13, -2.9) {};
                \node (t3)    at (15, -2.7) {};
                
                \path[->] (start) edge [red,bend left] (t1);
                \path[->] (start) edge [red,bend right] (t2);
                \path[->] (start) edge [red,bend right] (t3);
                
                \node[text=red,text width=5cm] at (11.4,-4.5) {Vertices \underline{not} dominated by selection of 2nd vertex.};
                
                \node[text=red,text width=2cm] at (11.25,-3) {Selection of 2nd vertex.};
                \end{pgfonlayer}
 \end{tikzpicture}}
 \caption[NFA Reduction]{Example of the reduction used in the proof of Proposition~\ref{prop:nfa_emptiness_w2}.
  The input graph is written at the top. 
  We have drawn NFA's for the languages given in the proof, where, for better readibility,
  multiple transitions corresponding to a single letter $a_i$ are combined into a single
  transition.
  }
 \label{fig:coW2reduction}
\end{figure}

\begin{proposition}\label{prop:nfa_emptiness_w2}
$\sbounded\BNFAINEshort$, parameterized by $\kappa_{\mathbb{A}}$,
is $\co\W[2]$-hard.
\end{proposition}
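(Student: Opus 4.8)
The plan is to prove $\co\W[2]$-hardness by exhibiting an fpt-reduction from the $\W[2]$-complete \textsc{Dominating Set} problem (parameterized by the solution size~$k$) to the \emph{complement} of $\sbounded\BNFAINEshort$, i.e.\ to strictly-bounded NFA intersection \emph{emptiness}, with the number of automata as parameter. Concretely, I would map an instance $(G,k)$, $G=(V,E)$ with $|V|=n$, to a family of $O(k)$ strictly-bounded NFAs over an ordered alphabet together with a length bound $\ell$, arranged so that the NFAs admit a common word of length $\ell$ \emph{if and only if} $G$ has \emph{no} dominating set of size~$k$. Because this sends yes-instances of \textsc{Dominating Set} to no-instances of the intersection problem and vice versa, it is exactly an fpt-reduction witnessing $\co\W[2]$-hardness (equivalently, $\W[2]$-hardness of the complementary emptiness problem); since the number of automata is $O(k)=f(k)$ and each automaton is of polynomial size and computable in fpt time, it is a legitimate parameterized reduction. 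Using \textsc{Dominating Set} rather than \textsc{Hitting Set} is convenient here because the ``undominated'' sets are read off directly from closed neighbourhoods.

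For the construction I would order the alphabet and let candidate words range over a strictly bounded language $a_1^*a_2^*\cdots a_m^*$, devoting one block per vertex so that a word simultaneously encodes a round-by-round selection of $k$ vertices $v_1,\dots,v_k$ and a guessed ``surviving'' (i.e.\ undominated) vertex. I would build one automaton $B_t$ for each round $t\in\{1,\dots,k\}$: while scanning the word, $B_t$ commits via its choice of strand to the $t$-th selected vertex $v_t$ and then, using only transitions labelled by the vertices that $v_t$ fails to dominate — the sets $\mathrm{undom}(v_t)=V\setminus N[v_t]$ displayed in \autoref{fig:coW2reduction} — it checks that the surviving vertex is never covered by $v_t$. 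The nondeterminism of the NFAs is used precisely to make these guesses, strict boundedness is preserved because every $B_t$ reads its blocks in the fixed alphabet order, and a small number of additional bookkeeping automata enforce that the selection and the witness are encoded consistently across all~$B_t$. A common length-$\ell$ word is then accepted by every automaton exactly when the guessed witness survives all $k$ rounds, i.e.\ when the $k$ chosen vertices leave some vertex undominated.

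The heart of the argument — and the step I expect to be the main obstacle — is turning the inherently \emph{universal} statement ``for every choice of $k$ vertices some vertex stays undominated'' into a single intersection-\emph{nonemptiness} instance built from only $f(k)$ \emph{existential} automata. A naive encoding that merely lets the word spell out one selection together with one escaping vertex tests a single selection and yields a condition that is trivially satisfiable, so it does \emph{not} capture non-existence of \emph{any} dominating set; the reduction must instead offload the ``for all vertices / for all selections'' bookkeeping into the strictly-bounded block structure of the word (and, if necessary, a Chinese-Remainder-style packing as used elsewhere in this paper) rather than into the count of automata, so that precisely one automaton is spent per round while the remaining combinatorics live in the word. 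The remaining tasks are routine by comparison: verifying that each constructed language is genuinely strictly bounded and that the length bound $\ell$ is set consistently with the block lengths, and checking that the resulting placement is coherent with the table, which uses $\co\W[2]\subseteq\A[3]$. The same reduction works verbatim for \NFAINEshort with a variable alphabet, since it never relies on determinism.
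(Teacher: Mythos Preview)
Your framework coincides with the paper's: reduce from \textsc{Dominating Set}, take the alphabet $\{a_1,\ldots,a_n\}$ indexed by vertices, and build exactly $k$ strictly-bounded NFAs, one per selection slot (the paper uses no auxiliary bookkeeping automata). The paper's $\mathcal{A}_i$ nondeterministically branches on the $i$-th selected vertex~$v$, reads $a_i^v$ in block~$i$, and in every other block $j\neq i$ only permits exponents $u\in\operatorname{NonDom}(v)=V\setminus N[v]$. Thus $a_1^{v_1}\cdots a_n^{v_n}\in\bigcap_{i} L(\mathcal{A}_i)$ iff for each $i\le k$ every $v_j$ with $j\neq i$ lies outside $N[v_i]$, and the paper then asserts directly that the intersection is empty iff $G$ admits a dominating set of size~$\le k$.

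Your proposal stops at the obstacle you flag, and that obstacle is genuine: in the construction above, nonemptiness says precisely ``some size-$k$ independent set fails to dominate'', whose negation (``\emph{every} size-$k$ independent set dominates'') is strictly stronger than the \textsc{Dominating Set} predicate. A star $K_{1,n-1}$ with $k=2$ and $n\ge 4$ already separates the two: the centre together with any leaf is a dominating pair, yet two leaves form an independent non-dominating pair, so the intersection is nonempty even though a dominating set of size~$2$ exists. The paper does not address this point; it uses exactly what you call the ``naive'' encoding and asserts the final equivalence without further argument. So you have not over-thought the construction: you have correctly located a gap that your proposal leaves open, and the paper's proof, as written, does not close it either.
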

\begin{proof}
 We reduce from the well-known $\W[2]$-complete problem {\sc Dominating Set}, 
 parameterized by the size of the solution set~\cite{FluGro2006}.
 Let $G = (V, E)$ be a graph and $k > 0$ be the parameter.
 We want to know if
 there exists a dominating set $D \subseteq Q$, i.e., a set $D$
 such that every vertex in $V$ is either in $D$
 or adjacent to a vertex from $D$, with $|D| \le k$.
 For a vertex $v \in V$, let 
 \( 
 \operatorname{NonDom}(v) = \{ u \in V \setminus \{v\} \mid \{v, u \}\cap E = \emptyset \}
 \)
 be the set of vertices not dominated by $v$ in $G$.
 
 Suppose $V = \{1,2,\ldots, n\}$. Let $\Sigma = \{ a_1, a_2,\ldots, a_n\}$
 be an alphabet. For each $i \in \{1,2,\ldots,k\}$, construct an NFA $\mathcal A_i$
 such that
 \[
  L(\mathcal A_i) = \bigcup_{v \in V} \left[ \left( \prod_{j=1}^{i-1} \{ a_j^u \mid u \in \operatorname{NonDom}(v) \} \right)
   \cdot \{ a_i^v \} \cdot \left( \prod_{j=i+1}^n \{ a_j^u \mid u \in \operatorname{NonDom}(v) \} \right)\right]. 
 \]
  Here, we set $\prod_{i=1}^0 U_i=\{ \varepsilon \}$, and the product operation refers to concatenation extended to subsets of words. 
 The $i$-th NFA $\mathcal A_i$ essentially codifies, on its accepting paths, the $i$-th selection
 of a vertex and which vertices are then not
 dominated by this selection. It is clear
 that we can construct NFA's for these languages with polynomial size in $n$.
 Please see Figure~\ref{fig:coW2reduction}
 for automata corresponding to the languages.
  We have $L(\mathcal A_i) \subseteq \{ a_1, a_1^2, \ldots, a_1^n \}\cdot \{ a_2, a_2^2, \ldots, a_2^n \}\cdots \{ a_n, a_n^2, \ldots, a_n^n \}$
 and, if $a_1^{v_1} a_2^{v_2}\cdots a_n^{v_n} \in \bigcap_{i=1}^k L(\mathcal A_i)$,
 then $v_1, v_2,\ldots, v_k$ correspond to the selection of $k$ vertices
 and $v_{k+1}, v_{k+2},\ldots, v_n$ to vertices not dominated by the previous selection.
 More specifically, for a single NFA~$\mathcal A_i$, $i \in \{1,2,\ldots, k\}$, we have,
 for $u = a_1^{v_1}a_2^{v_2} \cdots a_n^{v_n}$ with $v_i \in V$,
 \[
  u \in L(\mathcal A_i) \Leftrightarrow \mbox{ vertices $\{v_1, v_2,\ldots, v_n \} \setminus \{v_i\}$
   are not dominated by $v_i$. }
 \]
 So,
 $
 u \in \bigcap_{i=1}^k L(\mathcal A_i) \Leftrightarrow 
   \mbox{$\{v_1, v_2,\ldots, v_n \} \setminus \{v_1,v_2, \ldots, v_k\}$
   are not dominated by $\{ v_1, v_2,\ldots, v_k \}$, }
 $
 Hence, we can select $k$ vertices that all dominate the other vertices
 in $G$ if and only if the automata $\mathcal A_i$
 do not accept a common word.
\end{proof}
By the construction used in the previous proof, the hardness results
remains if we add the promise that the input automata accept finite strictly bounded languages.

\section{Conclusion}

In this paper, we studied the parameterized complexity of several problems, mainly concerning variations of the intersection problem of finite automata.
We mainly focused on combinations of natural parameters that are (more than) \W[1]-hard in the case of general automata. For commutative and sparse regular languages, for instance the complexity of $(\DFAINEshort,\kappa_{\mathbb{A}})$ can be brought down from $\XP$ to $\W[1]$. Still, it would be nice (and possibly challenging) to see further practically relevant parameters that show \FPT-membership results.
As there are very few natural problems known to be complete for $\A[3]$, it would be also interesting to see if $(\BNFAINEshort,\kappa_{\mathbb{A}})$ might line up here.
Also the connections to the \TNEJ problem arising in the context of database theory are noteworthy.  In the appendix, we explain in more depth how \TNEJshort is related to the definition of a generalized natural join that admits null values, hopefully reviving the theoretical interest in tables with null values.

The technically most demanding contribution of this paper is \autoref{thm:sparse-DFAINE-W1}. We are currently looking into two possible generalizations of this result:
\begin{itemize}
    \item Do we also get membership in \W[1] for sparse languages (with parameter $\kappa_{\mathbb{A}}$ if we consider nondeterministic finite automata?
    \item Can we consider the number~$m$ of words also as an additional parameter in the \W[1]-membership result?
\end{itemize}
These questions are also open in the strictly-bounded setting, where the parameter~$m$ might appear even more natural. However, due to \autoref{thm:unary-dfa-np-complete}, considering the parameter~$m$ on its own gives a parameterized problem that is not in \XP, unless $\PTIME=\NP$. 

\bibliography{ms} 

\newpage

\section{Appendix: Connections to Database Theory}
We now describe in more details the connection of \TNEJshort to
database theory.
From the mid-70's onwards, there has been quite some discussion on how to
incorporate missing information, also known as null values, into the typical
database operations.

Most relevant to our question is the approach of Lacroix and Pirotte~\cite{DBLP:journals/sigmod/LacroixP76}. They discuss how to deal with null values in connection with the natural join operation.
They describe a concrete example, consisting of Tables $R$ and $S$ with the following tabular representations:

\begin{minipage}{.5\textwidth}
\centering 
\begin{tabular}{c c c}
$R$: &A & B1\\\hline
  &u   & m \\
  &u   & p \\
  &w   &  n\\
  &x   &  p\\
  &x   &  r\\
\end{tabular}
\end{minipage}\begin{minipage}{.5\textwidth}
\centering
\begin{tabular}{c c c}
$S$:&B2 & C\\\hline
  &n   & a \\
  &p   & b \\
  &q   &  c
\end{tabular}
\end{minipage}
How should these tables be joined together with respect to B1 and B2? Obviously, not all B-values are present in either B1 or B2. In order to form the so-called generalized equi-join, denoted by $\textrm{B1}\stackrel{+}{=}\textrm{B2}$ in this example, Lacroix and Pirotte propose to first fill up both B1 and B2 by null values if necessary, which we again denote by $*$, while Lacroix and Pirotte used the symbol~$\omega$. The result could be displayed as follows.
\begin{center}
    \begin{tabular}{c c c c c}
$R[\textrm{B1}\stackrel{+}{=}\textrm{B2}]S$: &A & B1&B2&C\\\hline
  &u   & p & p & b \\
  &w   &  n & n & a\\
  &x   &  p & p & b\\
  &u   & m & $*$ & $*$ \\
  &x   &  r & $*$ & $*$\\
  &$*$ & $*$ & q & c
\end{tabular}
\end{center}
This is the basis to compute the so-called generalized natural join of $R$ and $S$ (with respect to setting $\textrm{B1}=\textrm{B2}$, denoted by $\textrm{B1}\stackrel{+}{*}\textrm{B2}$ in this example,), resulting in:
\begin{center}
    \begin{tabular}{c c c c }
$R[\textrm{B1}\stackrel{+}{*}\textrm{B2}]S$: &A & $\textrm{B1}=\textrm{B2}$&C\\\hline
  &u   & p  & b \\
  &w   & n & a\\
  &x   & p & b\\
  &u   & m  & $*$ \\
  &x   &  r  & $*$\\
  &$*$  & q & c
\end{tabular}
\end{center}
How does this fit together with our treatment?
Let us first generalize our table definition from binary tables to dealing with arbitrary table entry alphabets.

\begin{definition}
A \emph{$\Sigma$-table} (with missing information or null values), or simply a table, is a triple $T = (F, M,\alpha)$ such that $M$ is an $m\times n$ two-dimensional array, $F$ is a one-to-one function such that $\dom(F) = [n]$, and $\alpha:F([n])\to \Sigma$ specifies the symbols that are permitted in the $j$-th column. That is, for all $i\in [m]$ and $j\in [n]$, $M[i,j]\in \alpha(F(j))\cup \{*\}$.  
The rows of~$T$ are exactly the one-dimensional rows of
~$M$.  A \emph{labeled row} of~$T$ is a pair $(F, R)$, where $R$ is a row of~$T$ and $F$ is 
from~$T$.  
\end{definition}

\begin{definition}
Let a $\Sigma_1$-table $T_1 = (F_1, M_1,\alpha_1)$ and a $\Sigma_2$-table $T_2 = (F_2, M_2,\alpha_2)$ be given.  Let a labeled row $x_1 = (F_1, r_1)$ from $T_1$ and a labeled row $x_2 = (F_2, r_2)$ from $T_2$ be given.  We say that $x_1$ and $x_2$ are \emph{compatible} if for all $l \in \range(F_1) \cap \range(F_2)$, $r_1[F_{1}^{-1}(l)] = r_2[F_{2}^{-1}(l)]$ 
or $r_1[F_{1}^{-1}(l)]\notin \alpha_2(l)$ or $r_2[F_{2}^{-1}(l)]\notin \alpha_1(l)$.
\end{definition}

Notice that $r_1[F_{1}^{-1}(l)]\notin \alpha_2(l)$ is also true if $r_1[F_{1}^{-1}(l)] = *$.

\begin{definition}
Let $\mathcal{L} = {(F_i, M_i,\alpha_i)}_{i \in [k]}$ be a family of $k$ tables. 
The \emph{generalized natural join} of $\mathcal{L}$ (denoted by $\Bowtie \, \mathcal{L}$) is a $\Sigma$-table $T = (F, M,\alpha)$ such that:
\begin{itemize}
    \item $range(F) = \bigcup_{i \in [k]} range(F_i)$; $\Sigma=\bigcup_{i \in [k]}\Sigma_i$;
    \item for all $l\in \range(F)$, 
    $\alpha(l)=\bigcup_{i\in [k], l\in \range(F_i)}\alpha_i(l)$;
    \item $M$ consists of \emph{all} rows $R$ (interpreted as $F$-labeled rows) satisfying the following:\\ (1) for all $i \in [k]$, there exists a row $R^{\prime}$ of $M_i$ (interpreted as an $F_i$-labeled row) that is {compatible} with $R$;\\
    (2) for all $l\in \range(F)$, the following two statements are equivalent: (a)  $R[F^{-1}(l)]=*$, and (b) for all $i\in [k]$ with $l\in \range(F_i)$, all rows $R'$ of $M_i$ compatible with $R$ satisfy 
    $R'[F_i^{-1}(l)]= *$. 
\end{itemize}
\end{definition}


From a conceptual point of view, two tables with two (or more) interchanged rows are the same, as we only want to maintain the relational information stored in a table. Similarly, we can also commute columns, as long as the column labeling function $F$ is then also changed accordingly. This defines a natural equivalence relation on the set of tables, and henceforth we will always identify equivalent tables. In actual fact, also our definition of the generalized natural join itself takes this into account, because otherwise the defined operation is not giving a unique result.

The whole situation simplifies a lot if we consider binary tables, i.e., the table alphabet $\Sigma$ equals $\{0,1\}$.
Continuing with \autoref{ex:twobinarytables}, we arrive at:
\\
\begin{minipage}{.3\textwidth}
    \begin{center}
        \begin{tabular}{ c c c c }
         $T_1$: &A & B & C \\
         \hline
         &0 & 0 & $*$ \\
         &1 & $*$ & 1
        \end{tabular}
    \end{center}
\end{minipage}
\begin{minipage}{.3\textwidth}\begin{center}
     \begin{tabular}{ c c c }
         $T_2$: &B & C \\
         \hline
         &0 & 0 \\
         &0 & 1
        \end{tabular}\end{center}
\end{minipage}
\begin{minipage}{.4\textwidth}
\begin{center}
    \begin{tabular}{c c c c }
         $T_1\binBowtie T_2$: &A & B & C \\
         \hline
         &0 & 0 & 0 \\
         &0 & 0 & 1 \\
         &1 & 0 & 1
        \end{tabular}
\end{center}
\end{minipage}

Let us extend this example further by considering the binary table $T_3 := (F_3, A_3)$ where $F_3 = \{ (0, \text{A}), (1, \text{D}) \}$:\\
\begin{minipage}{.3\textwidth}\begin{center}
    \begin{center}
        \begin{tabular}{ c c c }
         $T_3$: &A & D \\
         \hline
         &0 & 1 \\
         &1 & $*$
        \end{tabular}
    \end{center}
\end{center}
\end{minipage}
\begin{minipage}{.3\textwidth}
    \begin{center}
        \begin{tabular}{ c c c c c}
         $T_1\binBowtie T_3$: &A & B & C & D\\
         \hline
         &0 & 0 & $*$ &1\\
         &1 & $*$ & 1& $*$
        \end{tabular}
    \end{center}
\end{minipage}
\begin{minipage}{.4\textwidth}
    \begin{center}
        \begin{tabular}{ c c c c c}
         $T_1\binBowtie T_2\binBowtie T_3$: &A & B & C & D\\
         \hline
         &0 & 0 & 0 & 1 \\
         &0 & 0 & 1 & 1 \\
&1 & 0 & 1 & $*$
        \end{tabular}
    \end{center}
\end{minipage}

As can be seen with the last example, the semantic interpretation of $*$ is that of missing, unconfirmed information. This is different from the interpretation that `all values could be inserted', as then, we might want to obtain the somehow more condensed table 
\begin{center}
        \begin{tabular}{ c c c c c}
         $T$: &A & B & C & D\\
         \hline
         &0 & 0 & $*$ & 1 \\
         &1 & 0 & 1 & $*$
        \end{tabular}
    \end{center}
as a result of joining $T_1$, $T_2$ and $T_3$. However, concerning the null value in the first row $R$ of $T$, we see that there is a row $R'$ in $T_2$ (that also carries the column label C)  that is compatible with $R$ and that satisfies $R'[F_2^{-1}(\textrm{C})]\neq *$; in fact, both rows in $T_2$ may serve as an example.

Let us now return to the example given by Lacroix and Pirotte.
In order to meet our requirements, we identify B1 and B2 in tables $R$ and~$S$ beforehand, calling the corresponding column B. Using $R$ and $S$ as indices, we find that $M_R$ is a $5\times 2$ array, $F_R:[2]\to \{\textrm{A},\textrm{B}\}$, and $\alpha_R:\{\textrm{A},\textrm{B}\}\to \Sigma_R$, with $\Sigma_R=\{\textrm{m},\textrm{n},\textrm{p},\textrm{r},\textrm{u},\textrm{w},\textrm{x}\}$, and that $M_S$ is a $3\times 2$ array, $F_S:[2]\to \{\textrm{B},\textrm{C}\}$, and $\alpha_S:\{\textrm{B},\textrm{C}\}\to \Sigma_S$, with $\Sigma_S=\{\textrm{a},\textrm{b},\textrm{c},\textrm{n},\textrm{p},\textrm{q}\}$. For the general natural join $T=R\binBowtie S$, we obtain
that $M_T$ is a $6\times 3$ array, $F_T:[3]\to \{\textrm{A},\textrm{B},\textrm{C}\}$, and $\alpha_T:\{\textrm{A},\textrm{B},\textrm{C}\}\to \Sigma_T$, with $\Sigma_T=\{\textrm{a},\textrm{b},\textrm{c},\textrm{m},\textrm{n},\textrm{p},\textrm{q},\textrm{r},\textrm{u},\textrm{w},\textrm{x}\}$.
Let us give reasons for the rows shown above for $R[\textrm{B1}\stackrel{+}{*}\textrm{B2}]S$, assuming $F_T(0)=\textrm{A}$, $F_T(1)=\textrm{B}$ and $F_T(2)=\textrm{C}$.
$(\textrm{u},\textrm{p},\textrm{b})$ is a row of $M_T$, as $(\textrm{u},\textrm{p})$ from $R$ is compatible with it, as is $(\textrm{p},\textrm{b})$ from $S$. Similar reasons can be given for $(\textrm{w},\textrm{n},\textrm{a})$ and $(\textrm{x},\textrm{p},\textrm{b})$. As $\textrm{m}\notin \Sigma_S$, the row $(\textrm{u},\textrm{m})$ of $R$ finds no partner in $S$, so that there is no row in $S$ compatible with $(\textrm{u},\textrm{m},*)$, so that the (implicit) implication (contained in item (2) of the definition) `if $R'$ is compatible with $R$, then \dots' is vacuously satisfied.
A similar argument justifies the row $(\textrm{x},\textrm{r},*)$ in~$T$.
Conversely, $\textrm{q}\notin \Sigma_R$ explains row $(*,\textrm{q},\textrm{c})$ in~$T$.

There  are subtle differences between our operation $\Bowtie$ and the operation $\stackrel{+}{*}$ defined in \cite{DBLP:journals/sigmod/LacroixP76}. 
\begin{itemize}
    \item $\Bowtie$ takes an arbitrary number of operands, while $\stackrel{+}{*}$ is a binary operation.
    
    In a sense, this is only a minor issue, although it is the most obvious one. Namely, we could view $\Bowtie$ also as a binary operation. This turns the set~$\mathcal{T}$ of all tables, together with $\Bowtie$, into a groupoid (or magma). It is not hard to see that this operation is associative and commutative. This somehow also justifies our definition that already considers $k$ operands.
    
    \item According to Lacroix and Pirotte, $\stackrel{+}{*}$ should be defined only on two tables (as operands) that do not contain null values. As the resulting table may contain null values, this means that $\stackrel{+}{*}$ cannot define a groupoid. 
\end{itemize}

Let us now turn to a very specific question: When is the generalized natural join of a given family $\cal L$ of tables non-empty, i.e., when does it contain any rows at all? 
According to the definition of~$\Bowtie$, a row $R$ is present in $T=(F,M,\alpha)=\Bowtie\cal L$, if and only if each table $T_i$ in $\cal L$ has a row $R_i$ that is \emph{compatible} with $R$. As the range of $F$ is the union of the ranges of all $F_i$ contained in $T_i=(F_i,M_i,\alpha_i)$, and similarly for the alphabets, this compatibility means that for all $l\in \range(F_i)$, $R_i[F_i^{-1}(l)]=R[F^{-1}(l)]$ or $R[F^{-1}(l)]\notin \alpha_i(l)$ or
$R_i[F_i^{-1}(l)]=*$. 
Now, consider two arbitrary tables $T_i$ and $T_j$ involved in the join.
For all $l\in \range(F_i)\cap \range(F_j)$, we find the following cases, ignoring symmetries:
\begin{enumerate}
    \item $R_i[F_i^{-1}(l)]=R[F^{-1}(l)]=R_j[F_j^{-1}(l)]$,
    \item $R_i[F_i^{-1}(l)]=R[F^{-1}(l)]$ and $R[F^{-1}(l)]\notin \alpha_j(l)$,
    \item $R_i[F_i^{-1}(l)]=R[F^{-1}(l)]$ and $R_j[F_j^{-1}(l)]=*$,
    \item $R[F^{-1}(l)]\notin \alpha_i(l)$ and $R[F^{-1}(l)]\notin \alpha_j(l)$,
    \item $R[F^{-1}(l)]\notin \alpha_i(l)$ and $R_j[F_j^{-1}(l)]=*$.
\end{enumerate}

In the first case, clearly also $R_i$ and $R_j$ are compatible.
In the second case, $R_i[F_i^{-1}(l)]=R[F^{-1}(l)]$ and $R[F^{-1}(l)]\notin \alpha_j(l)$ together imply $R_i[F_i^{-1}(l)]\notin \alpha_j(l)$, ensuring compatibility of $R_i$ and $R_j$. In the third case, if $R_i[F_i^{-1}(l)]=R[F^{-1}(l)]=*$, we find compatibility of  $R_i$ and $R_j$
as in the first case; otherwise, $R_i[F_i^{-1}(l)]=R[F^{-1}(l)]\in\alpha_i(l)$, but $R_j[F_j^{-1}(l)]=*\notin \alpha_i(l)$, i.e., $R_i$ and $R_j$ are compatible.
In the penultimate case, we first consider the case that
$R[F^{-1}(l)]=*$. Then, the definition of~$\Bowtie$ gives further restrictions, implying $R_i[F_i^{-1}(l)]=R_j[F_j^{-1}(l)]=*$ and hence compatibility. 
If $R[F^{-1}(l)]\neq*$, then $R[F^{-1}(l)]\notin \alpha_i(l)$ and $R[F^{-1}(l)]\notin \alpha_j(l)$ implies that (by the definition of the compatibility relation) that both $R_i[F_i^{-1}(l)]$ and $R_j[F_j^{-1}(l)]$ are either undefined or equal $*$. In either case, $R_i[F_i^{-1}(l)]$ and $R_j[F_j^{-1}(l)]$ are compatible.
In the last case, $R_j[F_j^{-1}(l)]=*$ implies $R_j[F_j^{-1}(l)]\notin \alpha_i(l)$, so that we again find that  $R_i$ and $R_j$ are compatible.


\begin{theorem}
Let $\mathcal{L} = {(T_i)}_{i \in [k]}$ be a family of $k$ 
tables, with $T_i=(F_i, M_i,\alpha_i)$. Then, the generalized natural join $\Bowtie\mathcal{L}$ is a table that contains at least one row if and only if there is a family of labeled rows $\{x_i\}_{i \in [k]}$ satisfying the following:
\begin{itemize}
    \item for all $i \in [k]$, $x_i$ is a labeled row of $T_i$;
    \item for all $i$, $j \in [k]$, $x_i$ is compatible with $x_j$.
\end{itemize}
\end{theorem}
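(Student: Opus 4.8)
The plan is to prove the two implications separately; the left-to-right direction is essentially the case analysis already carried out above, while the converse requires exhibiting an explicit row of the join.

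For the direction from left to right, suppose $T = \Bowtie\mathcal{L} = (F,M,\alpha)$ contains a row $R$. By clause~(1) of the definition of $\Bowtie$, for every $i \in [k]$ there is a labeled row $x_i = (F_i, R_i)$ of $T_i$ that is compatible with $R$, viewed as an $F$-labeled row. It then remains to show that any two of these, $x_i$ and $x_j$, are compatible with each other. Fixing $l \in \range(F_i) \cap \range(F_j)$, I would distinguish the cases according to the values $R_i[F_i^{-1}(l)]$, $R[F^{-1}(l)]$, $R_j[F_j^{-1}(l)]$ and their membership in $\alpha_i(l)$ and $\alpha_j(l)$; these are exactly the five cases tabulated immediately before the theorem. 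In each case one disjunct of the compatibility condition for $x_i$ and $x_j$ is forced; the only case that genuinely uses the join definition is the one where $R[F^{-1}(l)] = *$, in which clause~(2) guarantees $R_i[F_i^{-1}(l)] = R_j[F_j^{-1}(l)] = *$ and hence compatibility. As $l$ was arbitrary, $x_i$ and $x_j$ are compatible.

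For the converse, I would build a single row $R$ of the join out of a given pairwise-compatible family $\{x_i = (F_i, R_i)\}_{i\in[k]}$ by \emph{merging}. For each $l \in \range(F) = \bigcup_i \range(F_i)$, set $R[F^{-1}(l)] := *$ if $R_i[F_i^{-1}(l)] = *$ for every $i$ with $l \in \range(F_i)$, and otherwise set $R[F^{-1}(l)]$ to a non-$*$ value drawn from the entries $R_i[F_i^{-1}(l)]$. First I would verify well-definedness: whenever two chosen rows carry non-$*$ entries $v_i, v_j$ at $l$ that are each admissible for the other table (that is, $v_i \in \alpha_j(l)$ and $v_j \in \alpha_i(l)$), pairwise compatibility forces $v_i = v_j$, since the two out-of-alphabet disjuncts are then excluded; so all mutually admissible entries agree and a canonical value can be selected. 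One then checks clause~(1): each $x_i$ is compatible with the merged $R$, column by column, the subcases $R_i[F_i^{-1}(l)] = *$ and $R_i[F_i^{-1}(l)] = R[F^{-1}(l)]$ being immediate and the remaining subcase being handled by the out-of-alphabet disjunct. Finally, clause~(2) is to hold by construction: $R[F^{-1}(l)] = *$ precisely when no chosen row contributes a non-$*$ value at $l$, and one argues that in that situation every row compatible with $R$ must vanish at $l$ as well.

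The main obstacle is the reverse direction, and within it the interaction between the column alphabets $\alpha_i$ and clause~(2). The delicate point is that compatibility lets two rows disagree on a shared column as long as one value lies outside the other table's alphabet, so the merge is only forced to be consistent on \emph{mutually admissible} entries; care is therefore needed to choose $R[F^{-1}(l)]$ so that it remains compatible with at least one row of every table (for clause~(1)) while simultaneously making the $*$-pattern of $R$ match exactly the columns on which all compatible rows vanish (for clause~(2)). For the binary tables underlying \TNEJshort, where each $\alpha_i(l) \subseteq \{0,1\}$, these complications collapse and the merge is immediate; the general $\Sigma$-table case is where the argument has to be made with the most care.
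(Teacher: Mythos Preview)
Your forward direction is exactly the paper's argument: pick witnesses $R_i$ from clause~(1) and run the five-case analysis displayed just before the theorem to see that any two of them are compatible.

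For the converse you take a genuinely different route. The paper proceeds by induction on~$k$: it assumes a row $r'$ exists in $\Bowtie\mathcal{L}'$ for the first $k-1$ tables, shows $x_k$ is compatible with $r'$, and then splices $r'$ and $x_k$ column by column into a row $r$ of $\Bowtie\mathcal{L}$. Your plan instead merges all $k$ chosen rows at once. That is a legitimate alternative and in many ways more transparent; the inductive detour in the paper is mainly there because the join is being treated as a binary operation.

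However, your merge as stated does not yet deliver clause~(2). You assert that when the merged $R$ carries $*$ at~$l$, ``every row compatible with $R$ must vanish at~$l$ as well''. This is false in general: as soon as $R[F^{-1}(l)]=*$, the disjunct $R[F^{-1}(l)]\notin\alpha_i(l)$ makes compatibility at~$l$ automatic, so \emph{any} row of $T_i$ that matches $R$ elsewhere is compatible with $R$ regardless of its value at~$l$. Concretely, with $k=1$ and a single binary table containing rows $(0,*)$ and $(*,0)$, choosing $x_1=(0,*)$ makes your merged $R=(0,*)$; but $(*,0)$ is compatible with $R$ and carries $0$ in the second column, so $R$ violates clause~(2) and is not a row of the join (which is $\{(0,0)\}$). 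Your final paragraph correctly flags this as the delicate point, but the proof plan above it does not address it.

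The repair is not hard, and it stays within your direct approach: after the initial merge, iterate over the $*$-columns of $R$ and, whenever some row of some $T_i$ compatible with the current $R$ carries a non-$*$ value $v$ at such a column, overwrite the $*$ by $v$. Each overwrite preserves compatibility with every chosen $x_i$ (since $R_i[F_i^{-1}(l)]=*\notin\alpha(l)$ still fires), and when the process stabilises the remaining $*$-pattern satisfies clause~(2) by construction. The paper's inductive argument hides this saturation step inside the induction hypothesis for $r'$; in your direct version you have to make it explicit.
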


\begin{proof}
Our reasoning above shows that if the generalized natural join contains at least one row, then the family of rows $(R_i)$, with $R_i$ picked from $T_i$ as described, is pairwise compatible.

Conversely, consider a selection of labeled rows $x_i$ from $T_i$ that are pairwise compatible. We claim that this means that there is a row in  the generalized natural join $\Bowtie\mathcal{L}$.
This can be seen by induction on $k$.
The result is trivial if $k=1$.
If $k>1$, consider $\mathcal{L}' = {(F_i, M_i,\alpha_i)}_{i \in [k-1]}$. By induction, there is a row $r'$ in $\Bowtie\mathcal{L}'=(F',M',\alpha')$, based on the given collection $(x_i)$ of pairwise compatible labeled rows. Moreover,  $\mathcal{L}' = \mathcal{L}' \binBowtie T_k$. Also, $x_k$ is compatible with $x_i$, for $i\in[k-1]$. We first prove that therefore, $x_k$ is also compatible with $r'$. Consider $l\in \range(F_k)\cap \range(F_i)$, for $i\in[k-1]$. By definition of the generalized natural join, $l\in \range(F_k)\cap \range(F')$. We now distinguish three cases:
\begin{enumerate}
    \item $x_i[F_i^{-1}(l)]=x_k[F_k^{-1}(l)]$;
    \item $x_i[F_i^{-1}(l)]\notin \alpha_k(l)$;
    \item $x_k[F_k^{-1}(l)]\notin \alpha_i(l)$.
\end{enumerate}
In the first case, we can assume $x_i[F_i^{-1}(l)]=x_k[F_k^{-1}(l)]\neq *$, as this case is also covered later. Then, we must have $r'[F'^{-1}(l)]=x_i[F_i^{-1}(l)]=x_k[F_k^{-1}(l)]$, hence $r'$ is compatible with $x_k$.
In the second case, the possibility 
$x_i[F_i^{-1}(l)]= *$ is covered by the last case, as well.
But if $x_i[F_i^{-1}(l)]\neq *$ is defined, then $r'[F'^{-1}(l)]=[F_i^{-1}(l)]$ by definition of the generalized natural join. Hence, $r'[F'^{-1}(l)]\notin \alpha_k(l)$, ensuring compatibility.  
In the last case, by definition of $r'$, $x_k[F_k^{-1}(l)]\notin \alpha'(l)$, yielding compatibility.

Finally, we have to discuss the result of joining $\mathcal{L}'$ with $T_k$. More formally, we discuss the table $\Bowtie\mathcal{L}$, assuming that there is a family of pairwise compatible labeled rows $(x_i)$, with $i\in [k]$. By induction hypothesis, this implies that $\mathcal{L}'$ is not empty. Moreover, $\Bowtie\mathcal{L}=\Bowtie\mathcal{L}'\binBowtie T_k$.
As we have seen, $r'$ (existing by induction hypothesis) is consistent with $x_k$. Now, define $r$ as follows, considering $l\in \range(F)$.
 If $l\in \range(F')$ and $l\in \range(F_k)$, then set:
    $$r[F^{-1}(l)]=\left\{
    \begin{array}{ll}
         x_k[F_k^{-1}(l)],& \text{if } r'[F'^{-1}(l)]=x_k[F_k^{-1}(l)]\\
         x_k[F_k^{-1}(l)],& \text{if }r'[F'^{-1}(l)]=* \text{ and }x_k[F_k^{-1}(l)]\neq *\\
         r'[F'^{-1}(l)],&  \text{if }r'[F'^{-1}(l)]\neq * \text{ and }x_k[F_k^{-1}(l)]= *
    \end{array}\right.$$
    We distinguish several cases according to the definition of compatibility between the labeled rows $r'$ and~$x_k$.
    
    If $x_k[F_k^{-1}(l)]\in \alpha'(l)$ and $r'[F'^{-1}(l)]\in \alpha_k(l)$, then for some $i\in [k-1]$, $x_k[F_k^{-1}(l)]\in \alpha_i(l)$ by definition of $\alpha'$. Hence, by compatibility between $x_i$ and $x_k$, $x_i[F_i^{-1}(l)]=x_k[F_k^{-1}(l)]$ and hence $x_i[F_i^{-1}(l)]=r'[F'^{-1}(l)]$, so that the last subcase of the case distinction in the definition of $r$ sets  $r[F^{-1}(l)]=r'[F'^{-1}(l)]$, guaranteeing the compatibility  of $r$ and~$x_k$.
    
    If $x_k[F_k^{-1}(l)]\in \alpha'(l)$ and $r'[F'^{-1}(l)]\notin \alpha_k(l)$, then we would (correctly) set $r[F^{-1}(l)]=x_k[F_k^{-1}(l)]$ if $r'[F'^{-1}(l)]=*$, ensuring compatibility of $r$ with all $x_i$, $i\in [k]$. But, if  $r'[F'^{-1}(l)]\neq *$,
    then there would be also some $x_i$ with $i\in [k-1]$ such that $x_i[F_i^{-1}(l)]=r'[F'^{-1}(l)]\neq *$, so that $x_i[F_i^{-1}(l)]\in \alpha_i(l)\setminus(\alpha_k\cup\{*\})$.
    This contradicts the assumed compatibility of $x_i$ with $x_k$.
    
    Similarly, if $x_k[F_k^{-1}(l)]\notin \alpha'(l)$ and $r'[F'^{-1}(l)]\in \alpha_k(l)$, then we would (correctly) set $r[F^{-1}(l)]=r'[F'^{-1}(l)]$ if $x_k[F_k^{-1}(l)]=*$, ensuring compatibility of $r$ with all $x_i$, $i\in [k]$. If $x_k[F_k^{-1}(l)]\neq *$, then we again obtain a contradiction to the derived compatibility of $r'$ and~$x_k$.
    
   Finally, assume that $x_k[F_k^{-1}(l)]\notin \alpha'(l)$ and $r'[F'^{-1}(l)]\notin \alpha_k(l)$. Then
    $$(r'[F'^{-1}(l)]=*)\lor (x_k[F_k^{-1}(l)]=*)$$ by construction, because any concrete value in both cases would lead to inconsistencies. The case $(r'[F'^{-1}(l)]=*)\land (x_k[F_k^{-1}(l)]=*)$ is covered by the first case in the case distinction. Then, correctly, $r[F^{-1}(l)]=*$, and $r$ is consistent with $r'$ and $x_k$ and hence, with all $x_i$, $i\in [k]$, (by induction). 
    If $r'[F'^{-1}(l)]=*$ and $x_k[F_k^{-1}(l)]\neq *$, but $x_k[F_k^{-1}(l)]\notin \alpha'(l)$, then we set $r[F^{-1}(l)]=x_k[F_k^{-1}(l)]$ in the second case in the case distinction, so that $r$ is consistent with $r'$ and with $x_k$. The situation $r'[F'^{-1}(l)]\neq *$ and $x_k[F_k^{-1}(l)]= *$ is symmetric.
\end{proof}

By splitting column labels into $\log_2\left(\lceil|\Sigma|\rceil\right)$ many labels, addressing the single bits of non-binary table entries, re-interpreting the natural bijection $[|\Sigma|]\to\Sigma$ on the bit-string level, leading to \emph{binarized variants}, one can see the following results:
\begin{itemize}
    \item The generalized natural join of the binarized variants of the tables $T_i$ equals the binarized variant of the generalized natural join of the tables $T_i$.
    \item The generalized natural join of the tables $T_i$ is non-empty if and only if the generalized natural join of the binarized variants of the tables $T_i$ is non-empty.
    \item The question if the generalized natural join of given tables $T_i$ is non-empty is \NP-complete. Namely, this translates from and into the binary case by the previous observations.
    \item With the parameterization by the number~$k$ of tables, the question of the previous item is \W[1]-complete. 
    
\end{itemize}

So, one can see that the seemingly most simple questions concerning generalized natural joins (with null values) are computationally hard.
This somewhat explains why null values have not been integrated in modern
database query languages. However, then advent of new algorithmic ideas
(as parameterized algorithms) might offer some ways of escaping these affairs.
All that is needed is some effort to return to these areas of database theory. We refer to papers like \cite{Cod79,Zan84} for further attempts to deal with null values in relational databases.

\section{Proving our Main Theorem about Automata Recognizing Sparse Languages}

We need the following classical result, also proven in~\cite{MorRehWel2020},
which, among other compatiblity conditions, will be encoded in an instance
of {\sc Multicolored Clique} in the reduction explained in details below.

\begin{theorem}[Generalized Chinese Remainder Theorem~\cite{schmid58}]
\label{thm:CRT}
 The system of linear congruences
 \(
  x \equiv r_i \pmod{m_i} \quad (i=1,2,\ldots,k)
 \) 
 has integral solutions $x$ if and only if $\gcd(m_i,m_j)$ divides $(r_i - r_j)$
 for all pairs $i \ne j$ and all solutions are congruent modulo $\lcm(m_1, \ldots, m_k)$.
 In short, there exists a natural number $x$
 with $x \equiv r_i \pmod{m_i}$
 for all $i \in\{1,2,\ldots, k\}$
 if and only if $r_i \equiv r_j \pmod{\gcd(m_i, m_j)}$.
\end{theorem}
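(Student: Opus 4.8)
The plan is to prove the compatibility condition in both directions and then settle uniqueness separately. For \textbf{necessity}, suppose $x$ solves the whole system. Fixing a pair $i \neq j$ and writing $d = \gcd(m_i, m_j)$, since $d \mid m_i$ and $d \mid m_j$ both congruences $x \equiv r_i \pmod{m_i}$ and $x \equiv r_j \pmod{m_j}$ descend modulo $d$, giving $r_i \equiv x \equiv r_j \pmod{d}$, i.e.\ $\gcd(m_i,m_j) \mid (r_i - r_j)$. This direction is immediate.

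For \textbf{sufficiency} I would localise at each prime rather than argue by raw induction, as this keeps the bookkeeping transparent. For every prime $p$ dividing some $m_i$, let $a_p = \max_i v_p(m_i)$ be the largest $p$-adic valuation occurring among the moduli, attained at some index $i_p$. The key observation is that the single congruence $x \equiv r_{i_p} \pmod{p^{a_p}}$ already forces the correct residue modulo $p^{v_p(m_i)}$ for every $i$: indeed, since $a_p \ge v_p(m_i)$, the $p$-part of $\gcd(m_i, m_{i_p})$ is exactly $p^{\min(v_p(m_i), a_p)} = p^{v_p(m_i)}$, so $p^{v_p(m_i)} \mid \gcd(m_i, m_{i_p})$, and the pairwise hypothesis $\gcd(m_i, m_{i_p}) \mid (r_i - r_{i_p})$ then yields $r_i \equiv r_{i_p} \pmod{p^{v_p(m_i)}}$.

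Now the moduli $p^{a_p}$ for distinct primes $p$ are pairwise coprime, so the ordinary Chinese Remainder Theorem produces an $x$ with $x \equiv r_{i_p} \pmod{p^{a_p}}$ simultaneously over all such $p$. Since each $m_i$ factors as $\prod_{p} p^{v_p(m_i)}$ and $x \equiv r_{i_p} \equiv r_i \pmod{p^{v_p(m_i)}}$ holds for every prime $p \mid m_i$, reassembling over these prime powers gives $x \equiv r_i \pmod{m_i}$ for each $i$, so $x$ solves the entire system. For \textbf{uniqueness}, if $x, x'$ both solve the system then $m_i \mid (x - x')$ for all $i$, whence $\lcm(m_1, \ldots, m_k) \mid (x - x')$; conversely any shift by a multiple of $\lcm(m_1,\ldots,m_k)$ preserves all congruences, so the solution set is exactly one residue class modulo $\lcm(m_1, \ldots, m_k)$, which also yields the short form.

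The main obstacle is the localisation step, i.e.\ verifying that a single representative congruence $x \equiv r_{i_p} \pmod{p^{a_p}}$ at each prime faithfully encodes all congruences sharing that prime; this is the only place the pairwise compatibility is genuinely used, through $v_p(m_i) \le a_p$ and $p^{v_p(m_i)} \mid \gcd(m_i, m_{i_p})$. A fully equivalent alternative proceeds by induction on $k$: having solved the first $k-1$ congruences with a solution $y$ unique modulo $L = \lcm(m_1,\ldots,m_{k-1})$, one must solve $L t \equiv r_k - y \pmod{m_k}$ for $t$, which is solvable precisely when $\gcd(L, m_k) \mid (r_k - y)$, and checking this last divisibility prime by prime reduces to the very same valuation argument.
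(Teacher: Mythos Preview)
Your proof is correct, and both the prime-localisation argument and the inductive alternative you sketch are standard and complete. The necessity direction is immediate, the sufficiency via picking a representative index $i_p$ at each prime and invoking the ordinary CRT on pairwise coprime prime powers is clean, and the uniqueness modulo $\lcm$ is handled properly.

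There is nothing to compare against, however: the paper does not prove this theorem. It is quoted as a classical result with a citation to \cite{schmid58} (and the paper also remarks that it is proven in \cite{MorRehWel2020}); it is then used as a black box inside the reductions of Lemma~\ref{lem:CliqueToDFA} and Theorem~\ref{thm:sparse_fixed_words_in_W1}. So your write-up goes beyond what the paper itself supplies for this statement.
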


Let us now formulate and prove our main theorem.

\begin{theorem}
\label{thm:sparse_fixed_words_in_W1}
 Let $w_1, w_2,\ldots, w_m \in \Sigma^*$
 be arbitrary but fixed words. Then, $\DFAINEshort$ for automata accepting languages in $w_1^*w_2^* \cdots w_m^*$, parameterized by $\kappa_{\mathbb{A}}$,
 is $\W[1]$-complete.
\end{theorem}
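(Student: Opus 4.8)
The plan is to prove the two directions of $\W[1]$-completeness separately; $\W[1]$-hardness is immediate, while membership is the substantial part. For hardness I would invoke \autoref{lem:CliqueToDFA} directly. Discarding empty $w_i$ (and noting the all-empty case is trivial), fix any non-empty $w_i$. Then every unary language $\{w_i^t \mid t \in S\}$ is contained in $w_1^*w_2^*\cdots w_m^*$ (take all other exponents to be $0$), and it is recognised by a DFA over $\Sigma$ whose size is polynomial in the size of the unary automaton times the fixed constant $|w_i|$. Hence the reduction from $(\textsc{Clique},\kappa_{\text{sol}})$ to $(\unary\DFAINEshort,\kappa_{\mathbb{A}})$ transfers verbatim into our restricted setting, yielding $\W[1]$-hardness.

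For membership I would give an fpt-reduction to $(\textsc{Multicolored Clique},\kappa_{\text{sol}})$, generalising the argument sketched for \autoref{cor:sbounded-DFAINE-W1} from single letters $a_j$ to arbitrary fixed words $w_j$. The starting observation is that $\bigcap_{i=1}^k L(\mathcal A_i)\neq\emptyset$ if and only if there exist exponents $x_1,\dots,x_m\ge 0$ with $w_1^{x_1}w_2^{x_2}\cdots w_m^{x_m}\in\bigcap_i L(\mathcal A_i)$, because any common word lies in $L(\mathcal A_1)\subseteq w_1^*\cdots w_m^*$ and hence admits such a factorisation. Thus the task reduces to searching for one compatible exponent vector. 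For each automaton $\mathcal A_i$ and each block index $j$, reading $w_j$ repeatedly is the iteration of the fixed map $q\mapsto\delta_i(q,w_j)$ on $Q_i$, which is eventually periodic with tail and period at most $n=|Q_i|$. Tracking the state $\phi(i,j)=\delta_i(q_i,w_1^{x_1}\cdots w_j^{x_j})$ reached after the first $j$ blocks, every step from block $j-1$ to block $j$ constrains $x_j$ either by a \emph{tail equation} $x_j=c$ or by a \emph{cycle equation} $x_j\equiv c\pmod N$ with $x_j\ge c_0$, where $c,N,c_0\le n$. The colour classes are the $mk$ sets $V_{i,j}$ together with one acceptance vertex $f$, so the solution size is $mk+1\in\Oh(k)$, making this an fpt-reduction since $m$ is fixed. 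A vertex of $V_{i,j}$ records a reachable post-block state together with its tail/cycle data; intra-automaton edges between consecutive blocks enforce that the recorded transition is a genuine $w_j^{x_j}$-transition, edges to $f$ enforce routing to a final state, edges between different automata and different blocks are free, and the decisive edges between the same block $j$ of different automata enforce that a \emph{single} $x_j$ satisfies all the per-automaton congruences. By the Generalized Chinese Remainder Theorem (\autoref{thm:CRT}), this global solvability is equivalent to its pairwise $\gcd$-divisibility conditions, which is exactly why pairwise adjacency — a clique — certifies a common word.

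The hard part, and the reason the general case is much more delicate than the single-letter case of \autoref{cor:sbounded-DFAINE-W1}, is that for arbitrary words $w_j$ there is no intrinsic decomposition of the read string into blocks and no partial order forcing the state set to split as $Q_i=Q_i^1\cup\cdots\cup Q_i^m$ with monotone progress. A single state may be visited while reading several different blocks, cycles induced by distinct words may share states, and the tail/period parameters governing $w_j^*$ depend on the entry state $\phi(i,j-1)$, which itself depends on the earlier exponents. I would handle this by indexing a vertex of $V_{i,j}$ not by a bare state but by an exponent vector $(e_1,\dots,e_j)$ with each entry at most $n$ — enough to pin down the reached state, whether $x_j$ lies in a tail or a cycle, and in the latter case its residue modulo a period $\le n$. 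There are only $\Oh(n^m)$ such vectors, which is polynomial since $m$ is fixed, so the vertex set is polynomial and the whole graph is computable in polynomial time. The actual common exponents $x_j$ may be enormous, but \autoref{thm:CRT} lets us certify their existence purely from the pairwise $\gcd$ conditions recorded on the edges, so no large number is ever written down.

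Finally, I would remove the simplifying assumption $L_i\subseteq w_1^+w_2^+\cdots w_m^+$ (that every block is actually used) by the device already used in the sketch for \autoref{cor:sbounded-DFAINE-W1}: enumerate the $2^m$ possibilities for which blocks have exponent $0$, build the corresponding graph in each case, and combine them into a single instance. As $m$ is fixed this multiplies the size by a constant only, so the construction remains a polynomial-time fpt-reduction and, together with the hardness direction, establishes that the problem is $\W[1]$-complete.
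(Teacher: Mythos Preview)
Your proposal is correct and follows essentially the same approach as the paper: both directions are handled identically, with hardness via \autoref{lem:CliqueToDFA} and membership via an fpt-reduction to \textsc{Multicolored Clique} using $mk+1$ colour classes $V_{i,j}\cup\{f\}$, vertices indexed by bounded exponent vectors (yielding $\Oh(n^m)$ vertices), tail/cycle equations for each block, pairwise-compatibility edges justified by the Generalized Chinese Remainder Theorem, and the final $2^m$-case enumeration to drop the $w_1^+\cdots w_m^+$ assumption. The only detail you leave implicit is that non-consecutive blocks within the same automaton must also be fully connected (the paper's edge set $E_2$), but this is a routine part of any multicolored-clique encoding.
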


\begin{proof}\W[1]-hardness follows with Lemma~\ref{lem:CliqueToDFA}, as we can obviously even fix 
$m=|\Sigma|=1$.

 Now, let $w_1, w_2, \ldots, w_m \in \Sigma^*$ be fixed words.
 Suppose we are given $\mathcal A_i = (Q_i, \Sigma, \delta, q_i, F_i)$, $i = 1,2,\ldots,k$,
 accepting languages in $w_1^*w_2^* \cdots w_m^*$,
 with at most $n$ states for each automaton.
 Without loss of generality, each automaton contains only states that are both accessible and co-accessible, possibly except for one not co-accessible dead state. Moreover, we assume that the state sets are pairwise disjoint.
 
 In the following, we assume that all automata
 recognize languages in $w_1^+ w_2^+ \cdots w_m^+$, i.e., every word must be read at least once.  
 This is necessary, because in the construction we put certain edges only in case we can read
 in a word $w_j$, and the construction breaks down if we can ``shortcut'' a block of $w_j$'s
 by not reading a word $w_j$. However, this is actually merely a technicality and could be fixed
 rather easily by solving for all $2^m$ cases that we can leave out a word $w_j$.
 The critical reader might think that this only gives a Turing-reduction, but we can do it in a
 way, by combining several instances of {\sc Multicolored Clique} into a single one,
 that this could indeed be done using only a many-one reduction. We give the details at the end, but first
 describe the core of the reduction under the assumption as stated.

 We are constructing an instance of \textsc{Multicolored Clique}, a well-known \W[1]-complete problem. This means that we are constructing a $k'$-partite graph $G=(V,E)$ that has a clique~$C$ of size $k'$, with one vertex from each class of the partition of the vertex set, if and only if $\bigcap_{i=1}^kL(\mathcal A_i)\neq\emptyset$. As $k'=k\cdot m+1$, this reasoning  proves $(\sparse\DFAINEshort,\kappa_\mathbb{A})\in\W[1]$ for a fixed but arbitrary positive integer~$m$. This is still true after the modification indicated in the previous paragraph, as still the parameter $k'$ of the  resulting \textsc{Multicolored Clique} instance is bounded by a function in $k$ and~$m$.
 
 We first define an auxiliary set $\overline V$. Only a subset of it will be used in the construction of the desired vertex set~$V$.
 More precisely, let
 \[
  \overline V = \{ v_{i, r_1, \ldots, r_m} \mid 
   i \in \{1,2,\ldots,k\} \land 
   r_1, \ldots, r_m \in \{0,1,\ldots,n - 1\} \}
 \] 
 and define mappings $\varphi_i : \overline V \to Q_i$, for $i \in \{1,2,\ldots,k\}$, 
 by
 \[
  \varphi(v_{i, r_1, \ldots, r_m}) 
   = \delta_i(q_i, w_1^{r_1} \cdots w_m^{r_m}). 
 \]
 Notice that $\overline{V}$ is quite big: $|\overline{V}|=k\cdot n^m$. The vertex set $V$ that we are going to define is considerably smaller, but there seems to be no function $f$ such that $|V|\leq f(k,m)$. This prevents us to get a stronger result, with both $k$ and $m$ as parameters for $\W[1]$-membership. 
 In order to state the definition of the vertex set $V$ and later describe the edge set $E$, we need some further notions.
 
 In fact, our strategy is to first capture important aspects of  $\bigcap_{i=1}^kL(\mathcal{A}_i)\neq \emptyset$ by the solvability of a set of equations involving variables $x_j$, $j\in\{1,2,\dots,m\}$, where $x_j$ can take non-negative integer values. We are now describing these equations, derived from $\{\mathcal{A}_i\}_{i=1}^k$.
 
 Let $i \in \{1,2,\ldots,k\}$
 and $r_{i,1}, r_{i,2},\ldots, r_{i,m} \in \{0,1,\ldots,n-1\}$.
 In order to avoid double-indices, we will mostly write $r_j$ instead of $r_{i,j}$ in the following, if the first index is clear from the context. 
 Let $j=j(i,r_1,\dots,r_m) \in \{1,2,\ldots, m\}$
 be the biggest index with $r_j \ne 0$.
 We call $v_i=v_{i,r_1, \ldots, r_m}$
 a \emph{cycle vertex} 
 if there exists an integer $N(v_i) > 0$
 such that 
 $\varphi(v_{i,r_1,\ldots,r_m}) = \delta_i(q_i, w_1^{r_1} \cdots w_{j-1}^{r_{j-1}} w_j^{r_j + N(v_i)})$
 (note that $\varphi(v_{i,r_1,\ldots,r_m}) = \varphi(v_{i,r_1,\ldots,r_j, 0, \ldots, 0}) = \delta_i(q_i, w_1^{r_1} \cdots w_j^{r_j})$), otherwise, 
 we call it a \emph{tail vertex}.
 Choose $N(v_i)=N(i,r_1,\dots,r_m)$ (as described above) minimal in case
 of a cycle vertex.
 Then, associate with the vertex $v_{i, r_1, \ldots, r_m}$ and the numbers $j=j(i,r_1,\dots,r_m)$, $N(v_i)=N(i,r_1,\dots,r_m)$ as above
 the equation 
 \begin{equation}\label{eq:tail}
  x_j = r_{i,j}
 \end{equation}
 if it is a tail vertex, 
 and the equation
 \begin{equation}\label{eq:cycle}
  \left(x_j \equiv r_{i,j} \pmod{N(v_i)}\right) \land x_j \ge r_{i,j}
 \end{equation}
 otherwise, in the case of a cycle vertex. Some explanatory remarks should be in order here:
\begin{itemize}
\item We have written $r_{i,j}$ again in \autoref{eq:tail} and in \autoref{eq:cycle}, because this makes it clearer that (only) these parts of the equations depend on~$i$ and $j$, while the variable $x_j$ does not depend on $i$, but only on $j$.
\item We might have $N(v_i) = 1$ in \autoref{eq:cycle},
 and in this case we add a congruence equation modulo one, which, of course,
 is, strictly speaking, superfluous and could be left out. However, to avoid too many
 case distinctions here and later, we leave those equations in -- at least they do not harm.
\item For notational convenience, partition $\overline{V}_{i,j}$ into the set~$\overline{V}_{i,j}^C$ of cycle vertices and the set~$\overline{V}_{i,j}^T$ of tail vertices.
\end{itemize}

 Note the additional ``threshold'' equation
 $x_j \geq r_j$. For example, if we have a cycle of length two 
 but are required to first read, say, four times the word $w_j$. In this case, the congruence
 equation $x_j \equiv 4\pmod{2}$ alone does not capture this situation (for example $x_j \in \{0,2\}$
 are solutions that should be excluded). However, the additional ``threshold'' equation
 is, strictly speaking, not necessary, as we can add an appropriate multiple
 of all residues to get a bigger solution. So, we can choose a solution
 bigger than any threshold, but we felt that our approach adds explanatory value 
 and makes the construction more transparent.
 
 \begin{claim}
 \label{clm:aut_step_solv_equation}
  Let $i \in \{1,\ldots,k\}$, $j \in \{1,\ldots,m\}$
  and $r_1, \ldots, r_{j-1} \in \{0, 1,\ldots,n-1\}$
  and $r_j > 0$.
  Then, for $s \ge r_j$, we have
  \begin{multline*}
      \delta_i(\varphi(v_{i, r_1, .., r_{j-1},0,..,0}), w_j^s) = \varphi(v_{i,r_1, ..., r_j, 0,...0}) \\ 
      \Leftrightarrow 
      \mbox{$s$ solves the equations associated to $v_{i, r_1, \ldots, r_j, 0, \ldots, 0}$.}
  \end{multline*}
 \end{claim}
 \begin{quote}
     \emph{Proof of the Claim.}
     By the definition of $\varphi$, we have
     $\delta_i(\varphi(v_{i, r_1, \ldots, r_{j-1}, 0, \ldots, 0}), w_j^s) 
      = \delta_i(q_i, w_1^{r_1} \cdots w_{j-1}^{r_{j-1}} w_j^s)$
     and $\varphi(v_{i, r_1, \ldots, r_j, 0, \ldots, 0}) = \delta_i(q_i, w_1^{r_1} \cdots w_j^{r_j})$
     and by assumption both states of $\mathcal A_i$ are equal, i.e.,
     \begin{equation}\label{eqn:automata_equations}
         \delta_i(q_i, w_1^{r_1} \cdots w_{j-1}^{r_{j-1}} w_j^s)
         = \delta_i(q_i, w_1^{r_1} \cdots w_j^{r_j}).
     \end{equation}
      If $v_{i, r_1, \ldots, r_j, 0, \ldots, 0}$
      is a tail vertex, then it is not contained in a cycle for the word $w_j$
      when started from $\delta_i(q_i, w_1^{r_1} \cdots w_{j-1}^{r_{j-1}})$ (recall that by reading
      powers of $w_j$ we essentially trace out a tail and a cycle in $\mathcal A_i$).
       In this case, 
       as $\varphi(v_{i, r_1, \ldots, r_j, 0, \ldots, 0}) = \delta_i(q_i, w_1^{r_1} \cdots w_j^{r_j})$, 
       the equation associated with $v_{i, r_1, .., r_{j-1},0,..,0}$
       is 
       \[
       x_j = r_j.
       \]
      Furthermore, as the state is not contained in a cycle, \autoref{eqn:automata_equations}
      yields $s = r_j$ and so the associated equation is satisfied by $s$.
      Note that in this case, the assumption $s \ge r_j$ was actually superfluous, as it is implied.
      
      Otherwise, $v_{i, r_1, \ldots, r_j, 0, \ldots, 0}$
      is a cycle vertex. Then, choose $N > 0$ minimal (as above) such that 
      $\delta_i(q_i, w_1^{r_1} \cdots w_j^{r_j + N}) = \delta_i(q_i, w_1^{r_1} \cdots w_j^{r_j})$.
      The equation associated with the vertex is
      \[ 
       x_j \equiv r_j \pmod{N} \land x_j \ge r_j.
      \]
      By choice of $N$ and \autoref{eqn:automata_equations}
      we have $s - r_j \equiv 0 \pmod{N}$, as they end up in the same state in
      the cycle induced by powers of $w_j$. By assumption $s \ge r_j$
      and so $s$ solves the associated equation. \emph{[End, Proof of the Claim.]}
 \end{quote}
 
 Observe that in the previous claim, and in terms of the later construction, the condition
 $s \ge r_j$ could actually be removed (compare also the remark about the $x_j \ge r_{i,j}$
 sub-equation in the associated modulo equations). We then only have to modify
 the claim in the sense that then there exists an $s'$
 with $s \equiv s' \pmod{N}$ and $s' \ge r_j$ solving the associated equation. This follows
 as this condition is only relevant for vertices corresponding to cycle states,
 but for them we can enlarge words by the cycle length $N$.

%
 

 \begin{claim} If 
 $\bigcap_{i=1}^kL(\mathcal{A}_i)\neq \emptyset$, then the  set of number-theoretic equations described in \autoref{eq:tail} and in \autoref{eq:cycle}, involving integer variables $x_1,x_2,\dots,x_m$, has a solution $s_1,s_2,\dots,s_m$.
 \end{claim}
 \begin{quote}
 \emph{Proof of the Claim.} 
  Let $w_1^{s_1} w_2^{s_2} \cdots w_m^{s_m} \in \bigcap_{i=1}^kL(\mathcal{A}_i)$.
  For each $i \in \{1,\ldots, m\}$, 
  choose numbers $0 < r_{i, 1}, \ldots, r_{i, m} < n$ (note that here the assumption $L(\mathcal{A}_i) \subseteq w_1^+ w_2^+ \cdots w_m^+$ enters, as it guarantees we can choose positive numbers; 
  also note that multiple possible numbers might be possible to choose from)
  with $\varphi_i(v_{i, r_{i,1}, 0, \ldots, 0}) = \delta_i(q_i, w_1^{s_1})$
  and, inductively, for $1 < j \le m$,
  \[
   \varphi_i(v_{i, r_{i,1}, \ldots, r_{i, j-1}, r_{i,j}, 0, \ldots, 0}) = 
   \delta_i(\varphi_i(v_{i, r_{i,1}, \ldots, r_{i,j-1}, 0, \ldots, 0}), w_j^{s_j}).
  \]
  Note that $\delta_i(\varphi_i(v_{i, r_{i,1}, \ldots, r_{i,j}, 0, \ldots, 0}))
  = \delta_i(q_i, w_1^{s_1} w_2^{s_2} \cdots w_{j}^{s_j})$.
  Note that, by the pigeonhole principle, as the $\mathcal A_i$ have at most $n$
  states, we can find these numbers.
  Then, with \autoref{clm:aut_step_solv_equation},
  the numbers $s_1, \ldots, s_m$,
  solve the associated equations
  for each DFA~$\mathcal A_i$, which follows inductively by the choice of the $r_{i,j}$'s. \emph{[End, Proof of the Claim.]}    
 \end{quote}   
 
 After this number-theoretic interludium, we continue our description of the desired \textsc{Multicolored Clique} instance.
 
 For $i \in \{1,2,\ldots,k\}$
 and $j \in \{1,2,\ldots,m\}$, 
 let \[ \overline V_{i,j} = \{ v_{i, r_1, \ldots, r_m} \in \overline V \mid r_j \ne 0 , r_{j+1} = \ldots = r_m = 0 \}\,.\]
 Hence, using our previous notations, we find that, for $v_{i, r_1, \ldots, r_m}\in \overline V$, $v_{i, r_1, \ldots, r_m}\in \overline V_{i,j}$ if and only if $j=j(i,r_1,\dots,r_m)$.
 
 Actually, $\overline V$
 was just a preliminary set to make the definitions easier, the set
 of actual vertices of the graph we are
 going to construct
 is 
 \begin{equation*}
   V = \{ v_{i, r_1,\ldots, r_m} 
   \in \overline V_{i,j} \mid \exists r_j', \ldots, r_m' \in \{0,1,\ldots,n-1\} : \varphi(v_{i,r_1, \ldots, r_{j-1}, r_j',\ldots, r_m'})\in F_i \} 
   \cup \{ f \}
 \end{equation*}
 The vertices $f$ are necessary to ensure we only select vertices that correspond to final states.
 Then, set $V_{i,j} = \overline V_{i,j} \cap V$, partitioned into the cycle vertices $V_{i,j}^C = \overline V_{i,j}^C \cap V$ and the tail vertices $V_{i,j}^T = \overline V_{i,j}^T \cap V$, 
 and, further, let
  \begin{align*}
     E_1 & = \{ \{ u, v \} \mid u \in  V_{i,j}, v \in V_{i', j'} \mbox{ with } i \ne i' \mbox{ and } j \ne j' \}, \\
     E_2^i & = \{ \{ u, v \} \mid u \in V_{i, j}, v \in V_{i, j'}\mbox{ and } |j - j'| \ge 2 \}, \\ 
     E_2 & = \bigcup_{i=1}^kE_2^i,\\
     E_3^i & = \{ \{ v_{i, r_1, \ldots, r_m}, f \} \mid r_m = 0 \} \cup \{ \{ v_{i, r_1, \ldots, r_m}, f \} \mid r_m > 0 \land \varphi(v_{i,r_1,\ldots, r_m}) \in F_i \}, \\ 
     E_3 & = \bigcup_{i=1}^kE_3^i
 \end{align*}
 and
 \begin{multline*}
     E_4^i = \{ \{v_{i,r_1, \ldots, r_m}, v_{i,r_1', \ldots, r_m'}\} \mid \delta_i(\varphi(v_{i,r_1, \ldots, r_m}), w_{j}^{r_j}) = \varphi(v_{i,r_1', \ldots, r_m'}) \text{ where}\\
     v_{i,r_1, \ldots, r_m} \in V_{i,j-1},
     v_{i,r_1', \ldots, r_m'} \in V_{i,j}, \text{ for some } 2 \le j \le m
     \},
 \end{multline*} 
 so that finally 
 \begin{align*} E_4 & = \bigcup_{i=1}^kE_4^i
 \end{align*}
 
 The edge sets $E_1$
 and $E_2$ are auxiliary connections, introduced so that the vertices picked in the $k'$-partite graph $G=(V,E)$ can actually form a clique.
 The edge sets $E_3$ and $E_4$, respectively,
 are used for 
 the transitions when we switch
 from reading powers of  $w_{j-1}$
 to reading powers of $w_j$, or terminate the reading process if $j=m+1$ (corresponding to $E_3$).
 
 Let $V^i$ collect those vertices from $V$ whose first index equals~$i$, adding the vertex $f$ on top, and let $E^i= E_2^i\cup  E_3^i\cup  E_4^i$.
 We consider the graph $G^i=(V^i,E^i)$ next. Obviously, $G^i$ is $(m+1)$-partite.
 \begin{claim}
 \label{clm:Ai_and_Gi}
 $L(\mathcal{A}_i)\neq\emptyset$ if and only if  $G^i$ contains a multicolored clique (of size $m+1$).
 \end{claim}
 \begin{quote}
  \emph{Proof of the Claim.} First, let $w_1^{s_1} w_2^{s_2} \cdots w_m^{s_m} \in L(\mathcal A_i)$
  with $s_1, \ldots, s_m > 0$.
  We can assume $s_1, \ldots, s_m \le n-1$ (otherwise, a portion of the word induces
  a cycle and could be cut out).
  Select $v_{i, s_1, 0, \ldots, 0}$.
  As $\varphi(v_{i,s_1, \ldots, s_m}) \in F_i$,
  we have $v_{i, s_1, s_2, 0, \ldots, 0} \in V_{i,2}$,
  and, indeed, $v_{i, s_1, \ldots, s_j, 0, \ldots, 0} \in V_{i,j}$
  for every $j \in \{1,2,\ldots,m\}$.
  Further, as $s_2 > 0$, by definition of~$E_4^i$,
  there exists an edge $\{ v_{i, s_1, 0, \ldots, 0}, v_{i, s_1, s_2, 0, 0} \} \in E_4^i$. 
  Similarly, there exists an edge between $v_{i, s_1, s_2, 0, \ldots, 0}$
  and $v_{i, s_1, s_2, s_3, 0, \ldots, 0}$.
  Continuing with this reasoning, there exists an edge
  between $v_{i, s_1, s_2, \ldots, s_{j-1}, 0, \ldots, 0}$
  and $v_{i, s_1, s_2, \ldots, s_{j-1}, s_j, 0, \ldots, 0}$
  for every $j \in \{1,2, \ldots, m\}$.
  By the definition of $E_2^i$, the vertices
  $v_{i, s_1, \ldots, s_j, 0, \ldots, 0}$
  and $v_{i, s_1, \ldots, s_{j'}, 0, \ldots, 0}$
  with $|j - j'| \ge 2$
  are connected by an edge. Finally, by the definition of $E_3^i$, the edge $\{ v_{i, s_1, s_2, s_3, \ldots, s_m},f \}$
  is present. 
  Putting all this together, the set 
  \[
   \{ v_{i, s_1, 0, 0, 0, \ldots, 0},
      v_{i, s_1, s_2, 0, 0, \ldots, 0}, 
      v_{i, s_1, s_2, s_3, 0, \ldots, 0}, 
      \ldots,
      v_{i, s_1, s_2, s_3, \ldots, s_m},
      f \}
  \]
  is a multicolored clique (of size $m + 1$) for $G^i$.

  Conversely, assume we have a multicolored clique of size $m + 1$.
  As we have to select precisely one vertex from each $V_{i,j}$
  and $f$, it has the form
  \[ 
  \{
   v_{i, r_1^{(1)}, 0, \ldots, 0},
   v_{i, r_1^{(2)}, r_2^{(2)}, 0, \ldots, 0},
   \ldots 
   v_{i, r_1^{(m)}, r_2^{(m)}, \ldots, r_m^{(m)}},
   f
  \}.
  \]
  For $V_{i,j}$ and $V_{i,j'}$
  with $|j - j'| \ge 2$, the vertices in $V_{i,j}$
  and $V_{i,j'}$ are always connected by an edge.
  Now, consider
  \[
    v_{i, r_1^{(1)}, 0, \ldots, 0} \text{ and }
    v_{i, r_1^{(2)}, r_2^{(2)}, 0, \ldots, 0}.
  \]
  Both are connected by an edge.
  However, by construction of $G^i$ 
  it must be an edge from $E_4^i$.
  But this implies
  that the word $w_2^{r_2^{(2)}}$
  transfers the state $\varphi(v_{i, r_1^{(1)},0,\ldots,0})$
  into the state $\varphi(v_{i, r_1^{(2)}, r_2^{(2)}, 0,\ldots,0})$
  and that $r_1^{(1)} = r_1^{(2)}$.
  Continuing this reasoning, inductively,
  we find $r_j^{(s)} = r_j^{(m)}$ for $s, j \in \{1,\ldots,m\}$.
  Then, set
  \[
   w = w_1^{r_1^{(m)}} w_2^{r_2^{(m)}} \cdots w_m^{r_m^{(m)}}.
  \]
  With the same reasoning as above (recall $\varphi(v_{i, 0, \ldots, 0}) = q_i$, the start
  state of $\mathcal A_i$),
  \[
   \delta(\varphi(v_{i, 0, \ldots, 0}), w_1^{r_1^{(m)}} \cdots w_j^{r_j^{(m)}}) 
    = \varphi(v_{i, r_1^{(m)}, \ldots, r_j^{(m)}, 0, \ldots, 0}),
  \]
  Now, note that $\varphi(v_{i, r_1^{(m)}, \ldots, r_m^{(m)}}) \in F_i$,
  as we must have an edge $\{ v_{i, r_1^{(m)}, \ldots, r_m^{(m)}}, f \}$,
  which cold only be drawn from $E_3^i$.
  Hence 
  \[ 
  \delta(\varphi(v_{i, 0, \ldots, 0}), w_1^{r_1^{(m)}} \cdots w_m^{r_m^{(m)}}) 
    = \varphi(v_{i, r_1^{(m)}, \ldots, r_m^{(m)}}) \in F_i
  \]
  and so $w_1^{r_1^{(m)}} \cdots w_m^{r_m^{(m)}} \in L(\mathcal A_i)$. \emph{[End, Proof of the Claim.]}
 \end{quote}

 Next, we introduce an edge
 set corresponding to the equations associated with the vertices.
 Recall that we have associated
 with the vertices equations
 over the variables $x_1, x_2,\ldots, x_m$
 that will correspond to the repetition of the words $w_1, w_2,\ldots, w_m$ in a common word.
 Note that $v \in V_{i,j}$ for some $j$
 if and only if $v$
 is associated with an equation
 over the variable $x_j$, as defined in \autoref{eq:tail} and \autoref{eq:cycle}.
 Set
\begin{align*}
 E_5^{C,C} & = \{\, \{ u, v \} \mid  u \in V_{i,j}^C, v \in V_{i',j}^C, i\neq i', r_{i,j} \equiv r_{i',j} \pmod{\gcd(N(u), N(v))}\, \},\\
 E_5^{C,T} & = \{\, \{ u, v \} \mid  u \in V_{i,j}^C, v \in V_{i',j}^T, i\neq i', r_{i,j} \equiv r_{i',j} \pmod{N(u)}\, \},\\  
 E_5^{T,T} & = \{\, \{ u, v \} \mid  u \in V_{i,j}^T, v \in V_{i',j}^T, i\neq i', r_{i,j}= r_{i',j}\, \},\\
 E_5 &= E_5^{C,C}\cup E_5^{C,T}\cup E_5^{T,T}\,.
\end{align*}
This distinguishes all cases of combinations of cycle vertices and tail vertices and their corresponding conditions.
 Then,
 set $G = (V, E)$
 with $E = E_1 \cup E_2 \cup E_3 \cup E_4 \cup E_5$.
 This allows us to formulate a last claim, which follows by combining the arguments of the previous claims.
 
 \begin{claim}The $km + 1$-partite graph 
 $G$
 has a $km + 1$ multicolored
 clique
 (for the independent 
 sets $V_{i,j}$ and $\{ f \}$)
 if and only
 if the automata $\{\mathcal{A}_i\}_{i=1}^k$ accept
 a common word.
 \end{claim}
 \begin{quote}
     \emph{Proof of the Claim.} 
     First, assume the automata accept a common word $w_1^{s_1} w_2^{s_2} \cdots w_m^{s_m}$ (we can
     assume $s_j \in \{0,\ldots, n^k\}$).
     By Claim~\ref{clm:Ai_and_Gi}, every (sub-)graph $G^i$ for $i \in \{1,\ldots,k\}$
     has a multicolored clique of size $m + 1$.
     This has the form (see the proof of Claim~\ref{clm:Ai_and_Gi} for details)
     \[
      \{ v_{i, s_{i,1}, 0, \ldots, 0}, v_{i, s_{i,1}, s_{i,2}, 0, \ldots, 0}, 
        \ldots,
         v_{i, s_{i,1}, \ldots, s_{i,m}}, f\}
     \]
     for numbers $s_{i,j} \in \{0,\ldots,n\}$. We now have to argue
     that the associated equations with both of them are solvable.
     With the same arguments as in Claim~\ref{clm:Ai_and_Gi}, where we can shorten
     portions of the blocks of $w_j$'s when looking at an individual $\mathcal A_i$ by removing cycles,
     we can deduce that
     \[
     \delta(\varphi(v_{i,s_{i,1}, \ldots, s_{i,j-1}, 0, \ldots, 0}), w_j^{s_j}) = v_{i, s_{i,1}, \ldots, s_{i,j}, 0, \ldots, 0}.
     \] 
     This implies that $s_j$ solves the equations associated with 
     $v_{i, s_{i,1}, \ldots, s_{i,j}, 0, \ldots, 0}$, i.e., $s_j = s_{i,j}$
     in case it is a tail vertex and $s_j \equiv s_{i,j} \pmod{N(i, s_{i,1}, \ldots, s_{i,j}, 0, \ldots, 0)}$
     in case it is a cycle vertex.
     As this reasoning was independent of $i$ (and works for every $j$),
     if we have two distinct $i, i' \in \{1,\ldots, k\}$, we can deduce
     that $s_j$
     solves the equations associated with the two vertices
     $
      v_{i, s_{i,1}, \ldots, s_{i,j}, 0, \ldots, 0 } \text{ and }
      v_{i', s_{i',1}, \ldots, s_{i',j}, 0, \ldots, 0}.
     $ 
     By the Generalized Chinese Remainder Theorem, there exists an edge between them.
     Altogether, the set
     \[
      \{f\} \cup \bigcup_{i=1}^k \{ v_{i, s_{i,1}, 0, \ldots, 0}, v_{i, s_{i,1}, s_{i,2}, 0, \ldots, 0}, 
        \ldots,
         v_{i, s_{i,1}, \ldots, s_{i,m}}\}
     \]
     forms a multicolored clique of size $km + 1$ in $G$.

     Now, conversely assume we have a multicolored clique of size $km + 1$.
     Let $i \in \{1,\ldots, k\}$. By only considering 
     the (sub-)graph $G^i$, which then has a multicolored clique
     of size $m + 1$, using Claim~\ref{clm:Ai_and_Gi}
     we can deduce that $L(\mathcal A_i)$
     accepts a word. Inspecting the proof of this
     claim, this word
     has the form
     \[
      w_1^{s_{i,1}} w_2^{s_{i,2}} \cdots w_m^{s_{i,m}}
     \]
     where the vertices of the clique for $G^i$
     are
     \[
      \{ v_{i, s_{i,1}, 0, \ldots, 0}, v_{i, s_{i,1}, s_{i,2}, 0, \ldots, 0}, 
        \ldots,
         v_{i, s_{i,1}, \ldots, s_{i,m}}, f \}
     \]
     Now, for distinct $i, i' \in \{1,\ldots,k\}$
     and $j \in \{1,\ldots,m\}$, the states
     \[
     v_{i, s_{i,1}, \ldots, s_{i,j}, 0, \ldots, 0 } \text{ and }
      v_{i', s_{i',1}, \ldots, s_{i',j}, 0, \ldots, 0}
     \]
     are joined by an edge, which must be from $E_5$, as
     these are the only edges between the independent
     sets $V_{i,j}$ and $V_{i',j}$.
     First, suppose all the vertices for the given
     $j$ and each $i \in \{1,\ldots,k\}$
     are cycle vertices.
     Then, by the Generalized Chinese Remainder Theorem, all these equations
     have a common solution $s_j$.
     If at least one of these vertices
     is a tail vertex, say for the index $i$,
     the equation is $x_j = s_{i,j}$,
     and this must be a solution of all the other equations (congruence, or not) as well.
     In this case, set $s_j$ equal to $s_{i,j}$.
     Then, the word
     \[
      w_1^{s_1} w_2^{s_2} \cdots w_m^{s_m}
     \]
     is accepted by all the automata $\mathcal A_i$.
     This follows, as for $j \in \{1,\ldots,m\}$
     we have, by the way the associated
     equations correspond to the structure
     of the automata, that
     \[
      \delta_i(\varphi(v_{i,s_{i,1}, \ldots, s_{i,j-1}, 0, \ldots, 0}), w_j^{s_j}) 
       = \varphi(v_{i, s_{i,1}, \ldots, s_{i,j}, 0, \ldots, 0})
     \]
     (when $j = 1$, this 
     has to be read as
     $\delta_i(\varphi(v_{i, 0, \ldots, 0}), w_j^{s_j}) 
       = \varphi(v_{i, s_{i,1}, \ldots, s_{i,j}, 0, \ldots, 0})$). 
     So, 
     \[
      \delta_i(q_i, w_1^{s_1} \cdots w_{j}^{s_j})
       = \varphi(v_{i, s_1, \ldots, s_j, 0, \ldots, 0}).
     \]
     Hence, for $j = m$
     and as we have an edge 
     from $v_{i, s_1, \ldots, s_m}$
     to $f$,
     we find
     \[
     \delta_i(q_i, w_1^{s_1} \cdots w_{m}^{s_m}) \in F_i. \text{ \emph{[End, Proof of the Claim.]} }
     \]
     
 \end{quote}
 

 Finally, we have assumed that every word $w_j$ has to appear at least once. Now, we describe
 how to handle the general case.
 For this, we define languages that result if we leave out certain words $w_j$.
 Before giving the formal definition, we give a few examples. If $m = 3$,
 we will define $L^{(1,1,1)} = w_1^*w_1 w_2^* w_2 w_3^* w_3$,
 $L^{(1,0,1)} = w_1^* w_1 w_3^* w_3$, $L^{(1,0,0)} = w_1^* w_1$.
 More formally, let $(i_1, \ldots, i_m) \in \{0,1\}^m \setminus \{(0,\ldots,0)\}$
 and let $L^{(i_1, \ldots, i_m)}$ be the languages such that 
 for every $i_j = 1$, the word $w_j$ has to appear at least once, and if $i_j = 0$
 we skip this block of $w_j$'s.
 We have excluded $L^{(0,\ldots,0)} = \{\varepsilon\}$, as this case could be easily
 checked by the reduction machine itself, by testing for all input automata 
 if the start state is also a final state.
 Then, for each $(i_1, \ldots, i_m)$, we do the above construction
 for the automata $\mathcal A_i'$ with
 \[
  L(\mathcal A_i') = L(\mathcal A_i) \cap L^{(i_1, \ldots, i_m)}.
 \]
 By the product automaton construction, this enlarges the automata $\mathcal A_i'$
 only by a constant factor.
 
 Hence, doing so, we get $2^m - 1$ instances of {\sc Multicolored Clique}.
 However, we can combine them into a single instance of {\sc Multicolored Clique}
 rather easily. For example, if we have two instances $H = (U, F)$
 and $H' = (U', F')$ with independent and disjoint sets $U_1 \cup U_2 \cup \ldots \cup U_r = U$
 and $U_1' \cup U_2' \cup \ldots \cup U_{r'}' = U'$,
 asking if each admits a multicolored clique of size $r$ and $r'$, respectively,
 is the same as asking if in the graph with vertex set $U \cup U'$
 and edge sets
 \[
  E \cup E' \cup \{ \{ u,v \} \mid u \in U, v \in U' \}\,,
 \]
 we have a multicolored clique of size $r + r'$. Repeating this construction
 in our case, we get a single instance $G''`=(V'',E'')$ of {\sc Multicolored Clique}
 with parameter $k''\in\Oh(2^m k (n+1)^m)$.
\end{proof}
For sparse languages, another natural parameter appears, namely the number~$m$ of words that appear in $w_1^*w_2^* \cdots w_m^*$, describing the permitted superset of the language accepted by the considered DFA. This would leads us to parameterized problems like $(\sparse\DFAINEshort,\kappa_{\mathbb{A},m})$.
However, we do not know if $(\sparse\DFAINEshort,\kappa_{\mathbb{A},m})\in\W[1]$. This is an \textbf{open question}.

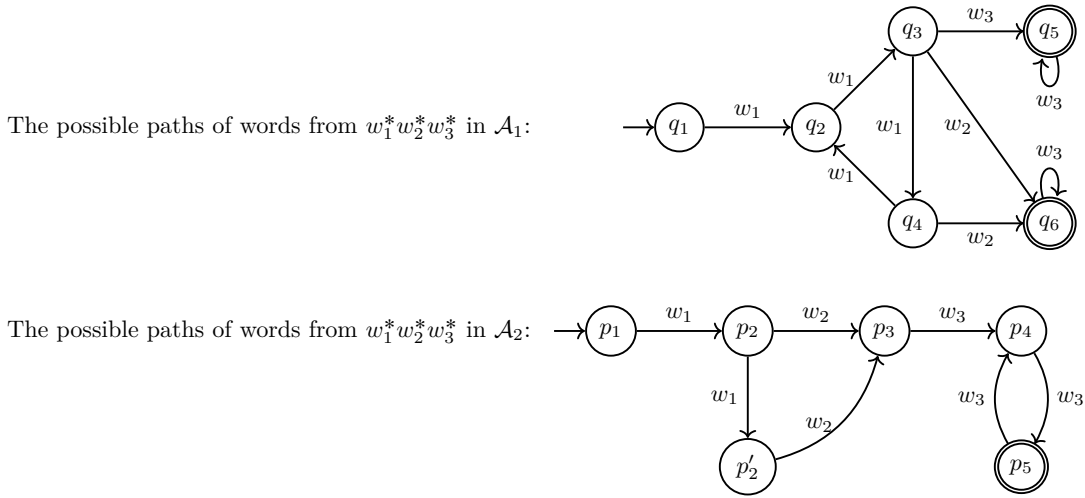
\begin{figure}[htb]
     \centering
    \scalebox{.9}{    
 \begin{tikzpicture}[node distance={20mm}, thick, main/.style = {draw, circle}, initial text={}]  
 \node[main,initial] (1) {$q_1$};
 \node[main] (2) [right of=1] {$q_2$};
 \node[main] (3) [above right of=2] {$q_3$};
 \node[main] (4) [below right of=2] {$q_4$};
 \node[main,accepting] (5) [right of=3] {$q_5$};
 \node[main,accepting] (6) [right of=4] {$q_6$};
 \path[->] (1) edge [above] node {$w_1$} (2);
 \path[->] (2) edge node [left]{$w_1$} (3);
 \path[->] (3) edge node [left]{$w_1$} (4);
 \path[->] (4) edge node [left]{$w_1$} (2);
 \path[->] (3) edge node [above]{$w_3$} (5);
 \path[->] (5) edge [loop below
 ] node {$w_3$} (5);
 \path[->] (3) edge node [left]{$w_2$} (6)
           (4) edge node [below]{$w_2$} (6);
 \path[->] (6) edge [loop above] node {$w_3$} (6);
 
 \node (text) at (-6,0) {The possible paths of words from $w_1^* w_2^* w_3^*$ in $\mathcal A_1$:};
 
 \node[main,initial] at (-1,-3) (p1) {$p_1$};
 \node[main] (p2) [right of=p1] {$p_2$};
 \node[main] (p3) [right of=p2] {$p_3$};
 \node[main] (p4) [right of=p3] {$p_4$};
 \node[main,accepting] (p5) [below of=p4] {$p_5$};
 \node[main] (p2bar) [below of=p2] {$p_2'$};
 
 \node (text) at (-6,-3) {The possible paths of words from $w_1^* w_2^* w_3^*$ in $\mathcal A_2$:};
 
 \path[->] (p1) edge node [above]{$w_1$} (p2)
           (p2) edge node [above]{$w_2$} (p3)
           (p3) edge node [above]{$w_3$} (p4)
           (p4) edge [bend left] node [right]{$w_3$} (p5)
           (p5) edge [bend left] node [left]{$w_3$} (p4);
           
 \path[->] (p2) edge node [left] {$w_1$} (p2bar)
           (p2bar) edge [bend right] node [above, pos=.3] {$w_2$} (p3);
 \end{tikzpicture}}
\caption[Sparse Containment Input Automata]{Example input automata for the reduction described in Theorem~\ref{thm:sparse_fixed_words_in_W1}
 for two input automata $\mathcal A_1$, $\mathcal A_2$, three words $w_1, w_2, w_3$
 and the case $w_1^* w_1 w_2^* w_2 w_3^* w_3$.
 See Figure~\ref{fig:sparse_reduction_full}
 and~\ref{fig:sparse_reduction_simplified}
 for the construction in the proof done for these two input automata.}
 \label{fig:sparse_reduction_input_automata}
\end{figure} 

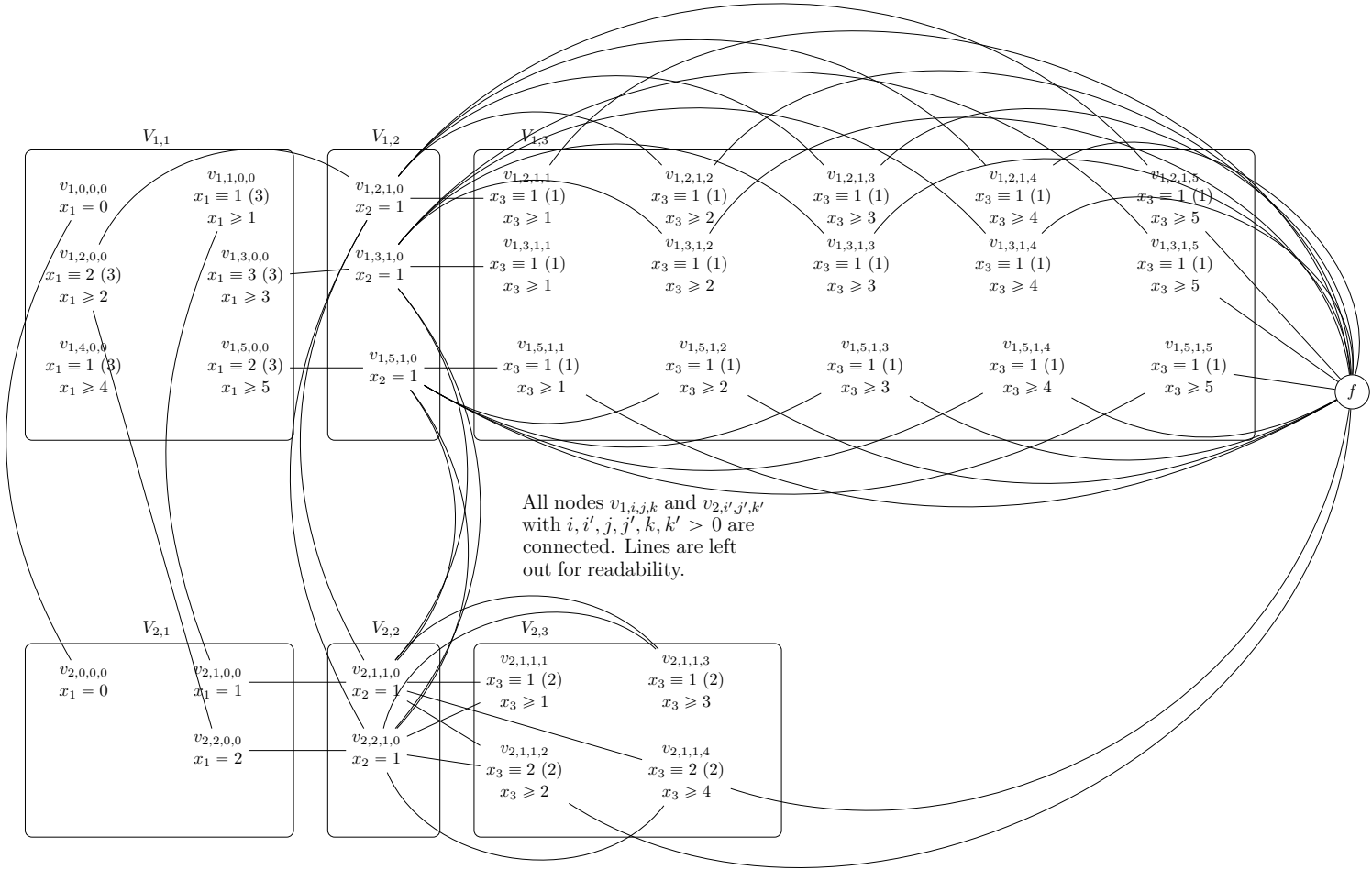
\begin{figure}[htb]
     \centering
  \hspace*{-3cm}
  \scalebox{.7}{\begin{tikzpicture}[node distance=5mm and 15mm, initial text={}] 
    \tikzset{every state/.style={minimum size=1pt}}
    \node [align=center] (v1000) {$v_{1,0,0,0}$ \\ $x_1 = 0$};
    \node [align=center] (v1100) [right = of v1000] {$v_{1,1,0,0}$\\$x_1 \equiv 1 ~ (3)$\\$x_1 \ge 1$};
    \node [align=center] (v1200) [below = of v1000] {$v_{1,2,0,0}$\\$x_1 \equiv 2 ~ (3)$\\$x_1 \ge 2$};
    \node [align=center] (v1300) [right = of v1200] {$v_{1,3,0,0}$\\$x_1 \equiv 3 ~ (3)$\\$x_1 \ge 3$};
    \node [align=center] (v1400) [below = of v1200] {$v_{1,4,0,0}$\\$x_1 \equiv 1 ~ (3)$\\$x_1 \ge 4$};
    \node [align=center] (v1500) [right = of v1400] {$v_{1,5,0,0}$\\$x_1 \equiv 2 ~ (3)$\\$x_1 \ge 5$};

    \node [align=center] (v1210) [right = of v1100] {$v_{1,2,1,0}$\\$x_2 = 1$};
    \node [align=center] (v1310) [below = of v1210] {$v_{1,3,1,0}$\\$x_2 = 1$};
    \node [align=center] (v1510) [right = of v1500] {$v_{1,5,1,0}$\\$x_2 = 1$};

    \node [align=center] (v1211) [right = of v1210] {$v_{1,2,1,1}$\\$x_3 \equiv 1 ~ (1)$\\$x_3 \ge 1$};
    \node [align=center] (v1311) [right = of v1310] {$v_{1,3,1,1}$\\$x_3 \equiv 1 ~ (1)$\\$x_3 \ge 1$};
    \node [align=center] (v1511) [right = of v1510] {$v_{1,5,1,1}$\\$x_3 \equiv 1 ~ (1)$\\$x_3 \ge 1$};

    \node [align=center] (v1212) [right = of v1211] {$v_{1,2,1,2}$\\$x_3 \equiv 1 ~ (1)$\\$x_3 \ge 2$};
    \node [align=center] (v1312) [right = of v1311] {$v_{1,3,1,2}$\\$x_3 \equiv 1 ~ (1)$\\$x_3 \ge 2$};
    \node [align=center] (v1512) [right = of v1511] {$v_{1,5,1,2}$\\$x_3 \equiv 1 ~ (1)$\\$x_3 \ge 2$};
    \node [align=center] (v1213) [right = of v1212] {$v_{1,2,1,3}$\\$x_3 \equiv 1 ~ (1)$\\$x_3 \ge 3$};
    \node [align=center] (v1313) [right = of v1312] {$v_{1,3,1,3}$\\$x_3 \equiv 1 ~ (1)$\\$x_3 \ge 3$};
    \node [align=center] (v1513) [right = of v1512] {$v_{1,5,1,3}$\\$x_3 \equiv 1 ~ (1)$\\$x_3 \ge 3$};
    \node [align=center] (v1214) [right = of v1213] {$v_{1,2,1,4}$\\$x_3 \equiv 1 ~ (1)$\\$x_3 \ge 4$};
    \node [align=center] (v1314) [right = of v1313] {$v_{1,3,1,4}$\\$x_3 \equiv 1 ~ (1)$\\$x_3 \ge 4$};
    \node [align=center] (v1514) [right = of v1513] {$v_{1,5,1,4}$\\$x_3 \equiv 1 ~ (1)$\\$x_3 \ge 4$};
    \node [align=center] (v1215) [right = of v1214] {$v_{1,2,1,5}$\\$x_3 \equiv 1 ~ (1)$\\$x_3 \ge 5$};
    \node [align=center] (v1315) [right = of v1314] {$v_{1,3,1,5}$\\$x_3 \equiv 1 ~ (1)$\\$x_3 \ge 5$};
    \node [align=center] (v1515) [right = of v1514] {$v_{1,5,1,5}$\\$x_3 \equiv 1 ~ (1)$\\$x_3 \ge 5$};

    \node [align=center] (v2000) at (0,-10) {$v_{2,0,0,0}$\\$x_1 = 0$};
    \node [align=center] (v2100) [right = of v2000] {$v_{2,1,0,0}$\\$x_1 = 1$};
    \node [align=center] (v2200) [below = of v2100] {$v_{2,2,0,0}$\\$x_1 = 2$};
    \node [align=center] (v2110) at (6,-10) {$v_{2,1,1,0}$\\$x_2 = 1$};
    \node [align=center] (v2210) [below = of v2110] {$v_{2,2,1,0}$\\$x_2 = 1$};
    \node [align=center] (v2111) [right = of v2110] {$v_{2,1,1,1}$\\$x_3 \equiv 1 ~ (2)$\\$x_3 \ge 1$};
    \node [align=center] (v2112) [below = of v2111] {$v_{2,1,1,2}$\\$x_3 \equiv 2 ~ (2)$\\$x_3 \ge 2$};
    \node [align=center] (v2113) [right = of v2111] {$v_{2,1,1,3}$\\$x_3 \equiv 1 ~ (2)$\\$x_3 \ge 3$};
    \node [align=center] (v2114) [right = of v2112] {$v_{2,1,1,4}$\\$x_3 \equiv 2 ~ (2)$\\$x_3 \ge 4$};

    \node[state] (ensure_only_final_states_selected) at (26,-4) {$f$}; 
    \path (v1211) edge [bend left=70] (ensure_only_final_states_selected);
    \path (v1212) edge [bend left=70] (ensure_only_final_states_selected);
    \path (v1312) edge [bend left=70] (ensure_only_final_states_selected);
    \path (v1213) edge [bend left=70] (ensure_only_final_states_selected);
    \path (v1313) edge [bend left=70] (ensure_only_final_states_selected);
    \path (v1214) edge [bend left=70] (ensure_only_final_states_selected);
    \path (v1314) edge [bend left=70] (ensure_only_final_states_selected);
    \path (v1215) edge (ensure_only_final_states_selected);
    \path (v1315) edge (ensure_only_final_states_selected);
    
    \path (v1511) edge [bend right=30] (ensure_only_final_states_selected);
    \path (v1512) edge [bend right=30] (ensure_only_final_states_selected);
    \path (v1513) edge [bend right=30] (ensure_only_final_states_selected);
    \path (v1514) edge [bend right=30] (ensure_only_final_states_selected);
    \path (v1515) edge (ensure_only_final_states_selected);

    \path (v2112) edge [bend right=60] (ensure_only_final_states_selected);
    \path (v2114) edge [bend right=50] (ensure_only_final_states_selected);
    
    \node at (1.5,1.25) {$V_{1,1}$};
    \draw[rounded corners] (-1.2,1) rectangle (4.3,-5);
     
    \node at (6.2,1.25) {$V_{1,2}$}; 
    \draw[rounded corners] (5,1) rectangle (7.3,-5);
    
    \node at (9.25,1.25) {$V_{1,3}$}; 
    \draw[rounded corners] (8,1) rectangle (24,-5);
    
    \node at (1.5,-8.9) {$V_{2,1}$};
    \draw[rounded corners] (-1.2,-9.2) rectangle (4.3,-13.2);
     
    \node at (6.2,-8.9) {$V_{2,2}$}; 
    \draw[rounded corners] (5,-9.2) rectangle (7.3,-13.2);
    
    \node at (9.25,-8.9) {$V_{2,3}$}; 
    \draw[rounded corners] (8,-9.2) rectangle (14.3,-13.2);
    
    \node[text width=5cm] at (11.5,-7) {\Large All nodes $v_{1,i,j,k}$ and $v_{2,i', j', k'}$ with $i,i',j,j',k,k' > 0$
    are connected. Lines are left out for readability.};

    \path (v1000) edge [bend right=30] (v2000);
    \path (v1100) edge [bend right=22] (v2100);
    \path (v1210) edge [bend right] (v2110);
    \path (v1210) edge [bend right] (v2210);
    
    \path (v1310) edge [bend left=40] (v2110);
    \path (v1310) edge [bend left=40] (v2210);
    
    \path (v1510) edge [bend left=40] (v2110)
          (v1510) edge [bend left=40] (v2210);
    
    \path (v1200) edge [bend left=50] (v1210);
    \path (v1300) edge (v1310);
    \path (v1210) edge (v1211);
    \path (v1310) edge (v1311);
    \path (v2100) edge (v2110);
    \path (v2200) edge (v2210);

    \path (v2110) edge (v2111)
          (v2110) edge (v2112)
          (v2110) edge [bend left=50] (v2113)
          (v2110) edge (v2114);
          
    \path (v2210) edge (v2111)
          (v2210) edge (v2112)
          (v2210) edge [bend left=60] (v2113)
          (v2210) edge [bend right=60]  (v2114);
          
    \path (v1200) edge (v2200);

   \path (v1500) edge (v1510);          
          
    \path (v1210) edge [bend left=50] (v1212);
    \path (v1210) edge [bend left=50] (v1213);
    \path (v1210) edge [bend left=50] (v1214);
    \path (v1210) edge [bend left=50] (v1215);
    
    \path (v1310) edge [bend left=50] (v1312);
    \path (v1310) edge [bend left=50] (v1313);
    \path (v1310) edge [bend left=50] (v1314);
    \path (v1310) edge [bend left=50] (v1315);
    
    \path (v1510) edge (v1511);
    \path (v1510) edge [bend right=30] (v1512);
    \path (v1510) edge [bend right=30] (v1513);
    \path (v1510) edge [bend right=30] (v1514);
    \path (v1510) edge [bend right=30] (v1515);
    
  \end{tikzpicture}}

 \caption[Sparse Containment]{Example of the reduction described in Theorem~\ref{thm:sparse_fixed_words_in_W1}
 for two input automata $\mathcal A_1$, $\mathcal A_2$, three words $w_1, w_2, w_3$
 and the case $w_1^* w_1 w_2^* w_2 w_3^* w_3$. Note that the path that does not use $w_2$
 in $\mathcal A_1$ is not relevant in this case.
 The edges of the sets $E_1$ (edges crossing between $V_{i,j}$'s in different rows
 and columns), $E_2$ (edges between $V_{i,j}$ in the same row, but strictly more than two steps apart, here these are $V_{i,1}$ and $V_{i,3}$) and most of the edges from $E_3$ (edges for the final state selection vertex~$f$) are not drawn for better readability. For the reader wondering about the equations with modulus one: surely they could
 be left out, but as the construction introduces them (see also the remarks in the main text of the proof),
 we also write them here. A common word is $w_1w_1w_2w_3w_3$.
 In Figure~\ref{fig:sparse_reduction_simplified} a simplified drawing is given, as
 here we have drawn in a ```stupid manner'' all vertices from the constructed graph.
 Also note that vertices corresponding to 
 states that are not coaccessible (that is, no final state is reachable)
 by words in $w_1^+ w_2^+ w_3^+$ are not used in the reduction (this is the difference
 between $\overline V$ and $V$ in the proof).}
 \label{fig:sparse_reduction_full}
\end{figure}

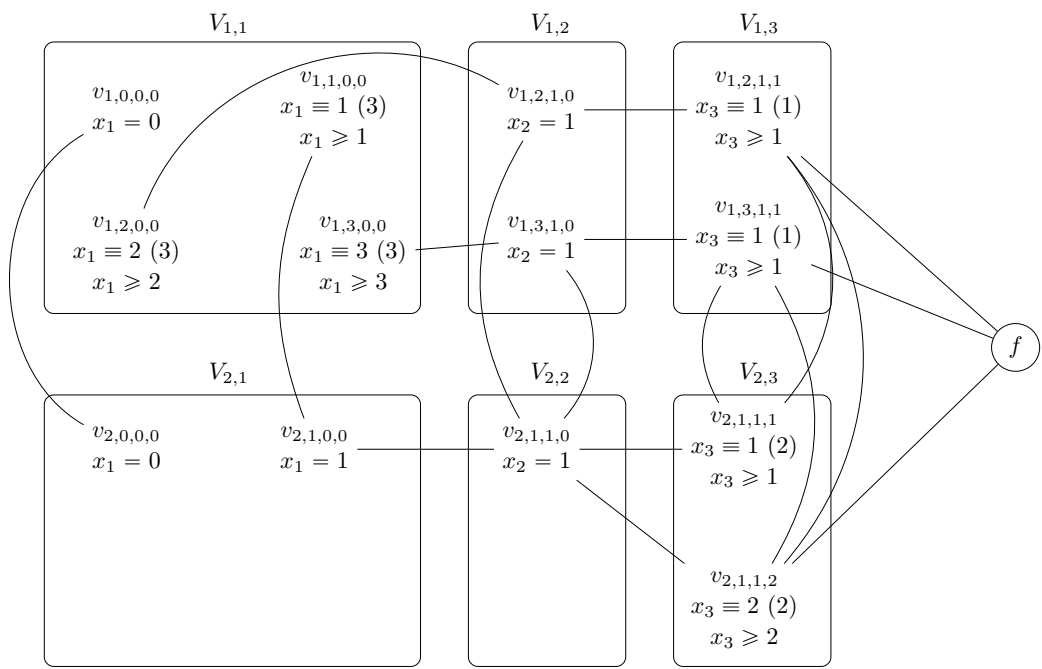
\begin{figure}[htb]
     \centering
   \scalebox{.9}{\begin{tikzpicture}[node distance=10mm and 15mm, initial text={}] 
    \tikzset{every state/.style={minimum size=1pt}}
    \node [align=center] (v1000) {$v_{1,0,0,0}$ \\ $x_1 = 0$};
    \node [align=center] (v1100) [right = of v1000] {$v_{1,1,0,0}$\\$x_1 \equiv 1 ~ (3)$\\$x_1 \ge 1$};
    \node [align=center] (v1200) [below = of v1000] {$v_{1,2,0,0}$\\$x_1 \equiv 2 ~ (3)$\\$x_1 \ge 2$};
    \node [align=center] (v1300) [right = of v1200] {$v_{1,3,0,0}$\\$x_1 \equiv 3 ~ (3)$\\$x_1 \ge 3$};
    
    \node [align=center] (v1210) [right = of v1100] {$v_{1,2,1,0}$\\$x_2 = 1$};
    \node [align=center] (v1310) [below = of v1210] {$v_{1,3,1,0}$\\$x_2 = 1$};
    
    \node [align=center] (v1211) [right = of v1210] {$v_{1,2,1,1}$\\$x_3 \equiv 1 ~ (1)$\\$x_3 \ge 1$};
    \node [align=center] (v1311) [right = of v1310] {$v_{1,3,1,1}$\\$x_3 \equiv 1 ~ (1)$\\$x_3 \ge 1$};
    
    \node [align=center] (v2000) at (0,-5) {$v_{2,0,0,0}$\\$x_1 = 0$};
    \node [align=center] (v2100) [right = of v2000] {$v_{2,1,0,0}$\\$x_1 = 1$};
    \node [align=center] (v2110) at (6,-5) {$v_{2,1,1,0}$\\$x_2 = 1$};
    \node [align=center] (v2111) [right = of v2110] {$v_{2,1,1,1}$\\$x_3 \equiv 1 ~ (2)$\\$x_3 \ge 1$};
    \node [align=center] (v2112) [below = of v2111] {$v_{2,1,1,2}$\\$x_3 \equiv 2 ~ (2)$\\$x_3 \ge 2$};
    
    \node[state] (ensure_only_final_states_selected) at (13,-3.5) {$f$}; 
    \path (v1211) edge (ensure_only_final_states_selected);
    \path (v1311) edge (ensure_only_final_states_selected);
    \path (v2112) edge (ensure_only_final_states_selected);
    
    \node at (1.5,1.25) {$V_{1,1}$};
    \draw[rounded corners] (-1.2,1) rectangle (4.3,-3);
     
    \node at (6.2,1.25) {$V_{1,2}$}; 
    \draw[rounded corners] (5,1) rectangle (7.3,-3);
    
    \node at (9.25,1.25) {$V_{1,3}$}; 
    \draw[rounded corners] (8,1) rectangle (10.3,-3);
    
    \node at (1.5,-3.9) {$V_{2,1}$};
    \draw[rounded corners] (-1.2,-4.2) rectangle (4.3,-8.2);
     
    \node at (6.2,-3.9) {$V_{2,2}$}; 
    \draw[rounded corners] (5,-4.2) rectangle (7.3,-8.2);
    
    \node at (9.25,-3.9) {$V_{2,3}$}; 
    \draw[rounded corners] (8,-4.2) rectangle (10.3,-8.2);
    
    \path (v1000) edge [bend right=60] (v2000);
    \path (v1100) edge [bend right=22] (v2100);
    \path (v1210) edge [bend right] (v2110);
    \path (v1310) edge [bend left=40] (v2110);
    \path (v1311) edge [bend right] (v2111);
    \path (v1200) edge [bend left=50] (v1210);
    \path (v1300) edge (v1310);
    \path (v1210) edge (v1211);
    \path (v1310) edge (v1311);
    \path (v2100) edge (v2110);
    \path (v2110) edge (v2111);
    \path (v2110) edge (v2112);
    \path (v1311) edge [bend left] (v2112);
    
    \path (v1211) edge [bend left=40] (v2111)
          (v1211) edge [bend left=40] (v2112);
  \end{tikzpicture}}
  
 \caption[Sparse Containment Reduced]{Simplified drawing
 of the (full) reduction from Figure~\ref{fig:sparse_reduction_full}. 
 Superfluous vertices are left out.
 The reader might notice that the the left-out vertices in some sense
 copy the structure of the remaining ones and hence
 we have a multicolored clique if and only if we have a multicolored clique in the simplified
 instance. In terms of vertices, the left out vertices correspond
 to states that are already represented, and also their interconnection structure, by the ones we have retained
 in the simplified drawing (note that we cannot simply remove states that are represented multiple times in a \emph{single} automaton, we must consider the maximal number of states in \emph{all} automata). 
 Observe that the reduction in the proof
 does not perform this simplification, and it is only supplied here
 to convey the basic idea better.}
 \label{fig:sparse_reduction_simplified}
\end{figure}

\end{document}